\newcommand{\nop}[1]{}
\DeclareMathOperator{\ActiveOp}{active}
\DeclareMathOperator{\CandOp}{cand}
\DeclareMathOperator{\CanonicalOp}{canonical}
\DeclareMathOperator{\CellsOp}{cells}
\DeclareMathOperator{\IndexOp}{index}
\DeclareMathOperator{\IntervalOp}{interval}
\newcommand{\Prof}{\pi}
\newcommand{\FacB}[2]{f_{#2}(#1)}
\newcommand{\Served}[2]{S_{#2}(#1)}
\newcommand{\Frac}[3]{w_{#2}(#1,#3)}
\newcommand{\Weight}[2]{w_{#2}(#1)}
\newcommand{\Gap}[1]{\Delta(#1)}
\newtheorem{fact}{Fact}
\newcommand{\punt}[1]{}
\newcommand{\Indicator}[1]{\mathbbm{1}_{#1}}
\newcommand{\Reals}{\mathbb{R}}
\newcommand{\Point}{p}
\newcommand{\SolnOne}{\sigma}
\newcommand{\Soln}{\tau}
\newcommand{\Mech}{M}
\newcommand{\MechOpt}{\Mech_1}
\newcommand{\MechLip}{\Mech_2}
\newcommand{\MechOptTwo}{\Mech_3}
\newcommand{\MechLipTwo}{\Mech_4}
\newcommand{\Dist}[2]{d(#1,#2)} 
\newcommand{\CostSoln}[2]{C(#1,#2)} 
\newcommand{\ServedSoln}[3]{S_{#2}(#1,#3)} 
\newcommand{\Canonical}[1]{\CanonicalOp(#1)}
\newcommand{\Index}[1]{\IndexOp(#1)}
\newcommand{\AgentSet}{[n]}
\newcommand{\Cost}[1]{C(#1)}
\newcommand{\CostPrime}[1]{C'(#1)}
\newcommand{\Cand}[2]{\CandOp(#1,#2)}
\newcommand{\Fac}[2]{f_{#2}(#1)}
\newcommand{\FacPrime}[2]{f'_{#2}(#1)}
\newcommand{\FracPrime}[3]{w'_{#2,#3}(#1)}
\newcommand{\WeightPrime}[2]{w'_{#2}(#1)}
\newcommand{\GapPrime}[1]{\Delta'(#1)}
\newcommand{\CostDensity}[2]{\psi_{#2}(#1)}
\newcommand{\CostDensityPrime}[2]{\psi'_{#2}(#1)}
\newcommand{\BackoffHelper}[2]{\xi_{#2}(#1)}
\newcommand{\Backoff}[2]{\varphi_{#2}(#1)}
\newcommand{\Active}[1]{\ActiveOp(#1)}
\newcommand{\Interval}[2]{\IntervalOp(#1,#2)}
\newcommand{\IntervalA}{I}
\newcommand{\Hood}[1]{\Gamma(#1)}
\newcommand{\HelperFn}[1]{h_{#1}}
\newcommand{\Helper}[2]{\HelperFn{#2}(#1)}
\newcommand{\HelperPrime}[2]{\HelperFn{#2}'(#1)}
\newcommand{\Interior}[1]{\mathcal{I}(#1)}
\newcommand{\Knots}{K}
\newcommand{\Cells}[1]{\CellsOp(#1)}
\newcommand{\Cell}{X}
\newcommand{\Mean}[1]{\mu(#1)}
\newcommand{\MeanPrime}[1]{\mu'(#1)}
\newcommand{\eps}{\varepsilon}
\DeclareMathOperator*{\SubstOp}{subst}
\newcommand{\Subst}[3]{\SubstOp(#1,#2,#3)}
\newcommand{\ProfLie}{\Prof^*}
\newcommand{\SolnLie}{\Soln^*}
\newcommand{\SolnOneLie}{\SolnOne^*}
\newcommand{\Set}[1]{\left\lbrace #1\right\rbrace}
\newcommand{\FactorLetter}{\alpha}
\newcommand{\Factor}[1]{\FactorLetter(#1)}
\newcommand{\LipConstant}{\kappa}
\spnewtheorem{mechanism}{Mechanism}{\bfseries}{\itshape}
\begin{document}
\title{Constant-Approximate and Constant-Strategyproof Two-Facility Location}
%
%
%
\author{Elijah Journey Fullerton\inst{1} \and
Zeyuan Hu\inst{2} 
\and 
C. Gregory Plaxton \inst{2}}
\authorrunning{Fullerton, Hu, and Plaxton}
%
\institute{Princeton University
\email{(ef0952@princeton.edu)}\\ \and
University of Texas at Austin\\
\email{(zeyuan.zack.hu@gmail.com, plaxton@cs.utexas.edu)}}
\maketitle              
\begin{abstract}
We study deterministic mechanisms for the two-facility location
problem.  Given the reported locations of $n$ agents on the real
line, such a mechanism specifies where to build the two
facilities.  The single-facility variant of this problem admits a
simple strategyproof mechanism that minimizes social cost. For two
facilities, however, it is known that any strategyproof mechanism is
$\Omega(n)$-approximate. We seek to circumvent this strong lower bound
by relaxing the problem requirements.  Following other work in the
facility location literature, we consider a relaxed form of
strategyproofness in which no agent can lie and improve their outcome
by more than a constant factor.  Because the aforementioned
$\Omega(n)$ lower bound generalizes easily to constant-strategyproof
mechanisms, we introduce a second relaxation: Allowing the facilities
(but not the agents) to be located in the plane.  Our first main
result is a natural mechanism for this relaxation that is constant-approximate and constant-strategyproof. A characteristic of this mechanism is that a
small change in the input profile can produce a large change in the
solution. Motivated by this observation, and also by results in the
facility reallocation literature, our second main result is a
constant-approximate, constant-strategyproof, and Lipschitz continuous
mechanism.

\keywords{Facility Location Problem \and Mechanism Design}
\end{abstract}
\setcounter{footnote}{0}
\section{Introduction}
\label{sec:introduction}

Facility location is a canonical problem in algorithmic game theory
and mechanism design. The typical problem formulation is that given a
set of agent locations, a facility location mechanism determines a set
of locations for the facilities. Each agent then has a service cost
given by the distance between their location and their nearest
facility. The social cost is defined as the sum of the agent service
costs. It is desirable for a mechanism to be efficient, that is, to
minimize the social cost (under the assumption of truthful reporting).
To encourage truthful reporting, it is
desirable for a mechanism to be strategyproof, which means that an
agent cannot decrease their service cost by misreporting.

In this
paper, we focus on deterministic mechanisms for the well-studied
setting in which the agents are located on the real line.  While our
focus is on the design of mechanisms for locating two facilities, it
is useful to briefly comment on the case of a single facility.  As
discussed by Procaccia and Tennenholtz~\cite{Procaccia2013}, it is
easy to design a single-facility location mechanism that is efficient
and strategyproof; the idea is to place the facility at the median
agent location when the number of agents is odd and at either the left
or right middle agent location when the number of agents is
even.\footnote{More general results of Moulin
\cite{Moulin1980} establish that for agent locations on the real
number line and single-peaked preferences, these are the only
mechanisms for single-facility location that are efficient and
strategyproof.}

When the number of facilities is expanded to two, strategyproofness
starts to restrict the efficiency that a mechanism can provide. Lu et
al.~\cite{Lu2010} show that any strategyproof mechanism has
$\Omega(n)$ approximation ratio of the social cost.\footnote{The simple mechanism
that locates the facilities at the leftmost and the rightmost agents achieves $O(n)$ approximation of the social cost,
which indicates that the $\Omega(n)$ bound is asymptotically tight \cite{Procaccia2013}. In fact, Fotakis and
Tzamos~\cite{Fotakis2014} show that the foregoing mechanism achieves
the best possible approximation ratio for the problem and is the only
deterministic anonymous strategyproof mechanism with approximation
ratio bounded by a function of $n$.}
In this paper, we seek to circumvent the $\Omega(n)$ lower bound by
relaxing the problem requirements. The first relaxation is to target an approximate form of strategyproofness. In particular, we seek a mechanism that is  constant-strategyproof~\cite{Oomine2017} instead of
strategyproof. A constant-strategyproof mechanism ensures that no agent
can reduce their service distance by more than a multiplicative constant
factor by misreporting. It is straightforward to generalize the $\Omega(n)$ result of Lu et al.~\cite{Lu2010} to show that the same $\Omega(n)$ lower bound holds for constant-strategyproof mechanisms. Accordingly, we introduce a second
relaxation: We allow the facilities to be located in the plane. This is a departure from the prior work, where the agents and the facilities lie in the same metric space~\cite{Barbera1993,Escoffier2011,Goel2023,Kim1984,Peters1992,Sui2015,Sui2013}. Intuitively, this relaxation allows us to more easily attain constant-strategyproofness at the expense of increasing our approximation ratio of optimal social cost.

\subsection{Our Contributions}

These relaxations enable our two main results. First, we present mechanism $\Mech_3$, a natural mechanism that is constant-approximate (i.e., attains a constant approximation ratio of optimal social cost) and constant-strategyproof. Informally, the mechanism
starts with an optimal solution on the line and then moves each facility away from the line by a distance inversely proportional to the number of agents it serves. Intuitively, lifting facilities off the real line weakens the ability of an agent to significantly reduce their own service cost through misreporting, while still keeping the social cost within a constant factor of optimal. This technique is reminiscent of ‘money burning’ ideas that limit manipulation at the cost of some efficiency to preserve truthful behavior. A characteristic of mechanism $\Mech_3$ is that small changes in agent locations can cause unbounded changes in the locations of facilities. To address this characteristic, we present mechanism $\Mech_4$, which is constant-approximate, constant-strategyproof, and constant-Lipschitz (i.e., Lipschitz continuous with a constant Lipschitz factor).

\subsection{Related Work}

Facility location is a central problem in mechanism design without money. There has been extensive research
that studies facility location under various settings such as fully
general metric spaces~\cite{Fernandes2013,Mahdian2006}, strictly
convex spaces \cite{PingzhongTang2020}, graph
metrics~\cite{Alon2010,Cheng2013,Church1978,Filimonov2021,Schumner2002},
and probabilistic metrics \cite{Auricchio2024,Klootwijk2019}. We
highlight results that are directly relevant to ours and defer the
rest to a survey of the field~\cite{Chan2021}.

\subsubsection{Facility Location in Higher Dimensions}

One relevant line of investigation is the study of facility location in higher dimensions~\cite{Barbera1993,Escoffier2011,Goel2023,Kim1984,Peters1992,Sui2015,Sui2013}.
For example, Barber\`a et al.~\cite{Barbera1993} generalize Moulin's single-facility result to higher dimension by selecting the facility's coordinate within each dimension independently in a ``median-like"
fashion. Sui et al.~\cite{Sui2013} generalize median mechanisms from
\cite{Barbera1993,Black1948,Moulin1980} into percentile mechanisms by determining facility locations using percentiles of ordered agent
location projections in each dimension. While these mechanisms are strategyproof, they show that for any dimension greater than one, no percentile mechanism attains an approximation ratio with respect to social
cost bounded by a function of $n$. 

\subsubsection{Strategyproofness Relaxations}
In recent years, there has been growing interest in exploring approximate notions of strategyproofness in order to enable trade-offs with other desired properties~\cite{Birrell2011,Dale2022,Ding2021,Lee2015,Lubin2012,Oomine2017,Schneider2017,Schvartzman2020,Sui2015}. One relaxation is to allow an agent to obtain a
multiplicative constant factor gain through
misreporting~\cite{Fukui2019,Istrate2022,Oomine2017}. This relaxation
closely parallels the incentive ratio in market
design~\cite{Chen2011}. Oomine et al.~\cite{Oomine2017} studied the
obnoxious facility game~\cite{Cheng2011} under this relaxation and
exhibited a trade-off between strategyproofness and the
approximation ratio. Specifically, with the multiplicative
constant factor of $\lambda$, they obtained the approximation
ratio of $1+\frac{2}{\lambda}$. The
multiplicative factor can be viewed as providing a bound on the
maximum possible gain an agent can obtain by misreporting, which is broadly aligned with the goal of designing mechanisms with limited manipulability
\cite{Chen2016,Ding2021,Hyafil2006,Schneider2017,Schvartzman2020}.

\subsubsection{Facility Reallocation}

The setting of facility reallocation motivates the design of Lipschitz continuous mechanism~\cite{Fotakis2021,Keijzer2022}. The facility reallocation
problem considers a sequence of agent preference profiles
over time with the optimization goal being to minimize both the social
cost at each time step and the facility reallocation
distances across time steps. Lipschitz continuous mechanisms allow us to bound the facility reallocation distances as a function of the preference profile
distances without any information about prior or future
time steps.

\section{Preliminaries}

For any nonnegative integer $k$, we write $[k]$ as a shorthand for
$\{1,\ldots,k\}$. 

In this work, any instance of the facility location problem involves a
set of two or more agents indexed starting at $1$. A profile $\Prof$ is a
vector of real values specifying the preferences of the agents on the
real line in nondecreasing order.  The nondecreasing requirement
reflects the fact that all of the mechanisms we consider are
anonymous. For any positive integer $n$, we define an $n$-profile as a
profile for an $n$-agent instance.  For any $n$-profile $\Prof$, we
define $\Mean{\Prof}$ as $\frac{1}{n}\sum_{i \in [n]}\Prof_i$.  A
profile $\Prof$ is \emph{trivial} if all of its components are equal;
otherwise, it is \emph{nontrivial}.

For any $n$-profile $\Prof$ and any $i$ in $[n]$, we define
$\Interval{\Prof}{i}$ as $(-\infty,\Prof_2]$ if $i=1$, as
$[\Prof_{n-1},\infty)$ if $i=n$, and as $[\Prof_{i-1},\Prof_{i+1}]$
otherwise. Observe that a real number $x$ belongs to $\Interval{\Prof}{i}$ if and only if $(\Prof_1,\ldots,\Prof_{i-1},x,\Prof_{i+1},\ldots,\Prof_n)$
is an $n$-profile. For any $n$-profile $\Prof$, we define $\Hood{\Prof}$ as the set of
all $n$-profiles $\Prof^*$ such that $|\{i\in[n]\mid\Prof_i\neq\Prof^*_i\}|\leq 1$. 
For any $n$-profile $\Prof$, any
$i$ in $[n]$, and any real number $x$, we define $\Subst{\Prof}{i}{x}$ as
the $n$-profile obtained from $\Prof$ by replacing $\Prof_i$
with $x$ and rearranging the resulting components in nondecreasing
order.

In this paper, we use the term ``point'' to refer to an element of
$\Reals^2$, and we write $\Point_1$ (resp., $\Point_2$) to denote the
$x$-coordinate (resp., $y$-coordinate) of a given point $\Point$. We
say that a point $\Point$ is \emph{1-D} if $\Point_2=0$.  For any real
number $x$ and any point $\Point$, we define $\Dist{x}{\Point}$ as
$\|(x,0)-p\|_1=|x-p_1|+|p_2|$.\footnote{Even though we focus on the Manhattan distance in this paper, all of our results also hold for Euclidean distance up to constant factors.}

A solution $\Soln$ for a given profile $\Prof$ is an ordered pair
$(\Soln_1,\Soln_2)$ of points specifying the locations of the two
facilities. We require that $\Soln_1\leq\Soln_2$, where the comparison
is performed lexicographically.  We sometimes refer to $\Soln_1$
(resp., $\Soln_2$) as the left (resp., right) facility.  We write
$\Soln_{i,j}$, where $i$ and $j$ belong to $\{1,2\}$, to refer to
component $j$ of $\Soln_i$.  We say that a solution $\Soln$ is
\emph{1-D} if $\Soln_1$ and $\Soln_2$ are each 1-D.\footnote{We adopt
the convention that the symbol $\SolnOne$ (rather than $\Soln$) is
used to refer to a solution that is guaranteed to be 1-D.}  For any
real number $x$ and any solution $\Soln$, we define $\Dist{x}{\Soln}$
as $\min(\Dist{x}{\Soln_1},\Dist{x}{\Soln_2})$.  In other words,
$\Dist{x}{\Soln}$ is equal to the distance between $x$ and its nearest
facility, which corresponds to the \emph{individual cost function} of
an agent with preference $x$.  For any two solutions $\Soln$ and
$\Soln^*$, we define $\|\Soln-\Soln^*\|_1$ as $\sum_{\ell \in
  \Set{1,2}}\|\Soln_\ell-\Soln^*_\ell\|_1$.

For any profile $\Prof$ and any solution $\Soln$, we define
$\CostSoln{\Prof}{\Soln}$ as $\sum_{i \in [n]}\Dist{\Prof_i}{\Soln}$.
In other words, we define the \emph{social cost} of a profile $\Prof$
and solution $\Soln$ as the sum of individual agent costs
$\Dist{\Prof_i}{\Soln}$.

\begin{lemma}
\label{lem:lipCostSoln}
Let $\Prof$ and $\Prof^*$ be $n$-profiles and let $\Soln$ be a
solution. Then $|\CostSoln{\Prof^*}{\Soln}-\CostSoln{\Prof}{\Soln}|$
is at most $\|\Prof^*-\Prof\|_1$.
\end{lemma}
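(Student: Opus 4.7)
The plan is to reduce the global bound to a per-agent bound via the triangle inequality. Writing out the definitions,
\[
\CostSoln{\Prof^*}{\Soln}-\CostSoln{\Prof}{\Soln}
\;=\;\sum_{i\in[n]}\bigl(\Dist{\Prof^*_i}{\Soln}-\Dist{\Prof_i}{\Soln}\bigr),
\]
so by the triangle inequality it suffices to show that for every $i\in[n]$,
\[
\bigl|\Dist{\Prof^*_i}{\Soln}-\Dist{\Prof_i}{\Soln}\bigr|\;\le\;|\Prof^*_i-\Prof_i|.
\]
Summing this over $i$ then yields $\|\Prof^*-\Prof\|_1$ on the right-hand side, which is exactly the claim. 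Note that here I am using the (implicit) fact that both $\Prof$ and $\Prof^*$ are listed in nondecreasing order, so the index-by-index pairing used to define $\|\Prof^*-\Prof\|_1$ matches the index-by-index pairing used in the cost sum.

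The per-agent bound is really a statement that $\Dist{\cdot}{\Soln}$ is $1$-Lipschitz as a function on $\Reals$. I would establish this in two steps. First, for any single point $\Point\in\Reals^2$ we have $\Dist{x}{\Point}=|x-\Point_1|+|\Point_2|$, so the reverse triangle inequality gives
\[
\bigl|\Dist{x}{\Point}-\Dist{y}{\Point}\bigr|
\;=\;\bigl||x-\Point_1|-|y-\Point_1|\bigr|
\;\le\;|x-y|.
\]
Second, since $\Dist{x}{\Soln}=\min(\Dist{x}{\Soln_1},\Dist{x}{\Soln_2})$ is the pointwise minimum of two $1$-Lipschitz functions, it is itself $1$-Lipschitz; this is a standard fact, but if needed it follows by a short case split on which of the two facilities attains the minimum at $x$ versus at $y$, in each case dominating the difference by the $1$-Lipschitz bound for a single facility.

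I do not expect any real obstacle: the argument is essentially the statement that a sum of $1$-Lipschitz functions (one per agent) is $\|\cdot\|_1$-Lipschitz on the profile. The only mildly subtle point is the index alignment mentioned above, which is legitimate because both profiles are sorted by convention; no rearrangement inequality is needed.
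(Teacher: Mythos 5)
Your proof is correct and takes essentially the same approach as the paper: bound the per-agent difference $|\Dist{\Prof^*_i}{\Soln}-\Dist{\Prof_i}{\Soln}|$ by $|\Prof^*_i-\Prof_i|$ and sum over $i$. You simply spell out the $1$-Lipschitz argument in more detail than the paper, which states the per-agent bound in a single line.
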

\begin{proof}
For any $i$ in $[n]$,
$|\Dist{\Prof^*_i}{\Soln}-\Dist{\Prof_i}{\Soln}|$ is at most
$|\Prof^*_i-\Prof_i|$.
\end{proof}

In our analysis, we sometimes want to identify the set of agents
served by each facility under a given solution.  For any profile
$\Prof$ and any solution $\Soln$, we define
$\ServedSoln{\Prof}{1}{\Soln}$ as the set of all indices $i$ in
$\AgentSet$ such that
$\Dist{\Prof_i}{\Soln_1}\leq\Dist{\Prof_i}{\Soln_2}$, and we define
$\ServedSoln{\Prof}{2}{\Soln}$ as
$\AgentSet\setminus\ServedSoln{\Prof}{1}{\Soln}$. Note that if an
agent is equidistant from the two facilities, we consider it to be
served by the left facility.

We define the left-median of $s\geq 1$ real numbers $x_1\leq\cdots\leq
x_s$ as $x_{\lceil s/2\rceil}$. That is, if $s$ is odd, then the
left-median is the median, and if $s$ is even, then the left-median is
the lower-indexed of the two middle values. The following fact is attributable to Procaccia and Tennenholtz~\cite{Procaccia2013}.

\begin{fact}
\label{fact:single}
For any profile $\Prof$, the lexicographically first minimum-cost
single-facility solution locates the facility at the left-median of
$\Prof$.
\end{fact}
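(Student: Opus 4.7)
The plan is to reduce to the one-dimensional problem and then invoke the classical fact that the median minimizes the sum of absolute deviations. For any point $p$, the single-facility cost decomposes as
\[
\sum_{i \in [n]} \Dist{\Prof_i}{p} = \sum_{i \in [n]} (|\Prof_i - p_1| + |p_2|) = n|p_2| + \sum_{i \in [n]} |\Prof_i - p_1|.
\]
Because the first summand is nonnegative and vanishes only at $p_2 = 0$, every minimum-cost facility location is 1-D. It therefore suffices to minimize $g(y) := \sum_{i \in [n]} |\Prof_i - y|$ over $y \in \Reals$ and argue that the lex-first minimizer corresponds to the left-median of $\Prof$.

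The function $g$ is convex and piecewise linear: on any open interval containing no $\Prof_i$, $g$ has slope $|\{i : \Prof_i < y\}| - |\{i : \Prof_i > y\}|$. Hence $g$ is nondecreasing to the right of $y$ iff at least half the agents lie at or to the left of $y$, and nonincreasing to the left of $y$ iff at least half the agents lie at or to the right. A routine case analysis on the parity of $n$ then shows that the set of minimizers of $g$ is exactly the closed interval $[\Prof_{\lceil n/2 \rceil}, \Prof_{\lfloor n/2 \rfloor + 1}]$, which collapses to $\{\Prof_{(n+1)/2}\}$ when $n$ is odd.

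Since every minimum-cost solution has $p_2 = 0$, the lexicographic comparison among such solutions reduces to a comparison of their $x$-coordinates. The smallest $x$-coordinate among the minimizers is the left endpoint $\Prof_{\lceil n/2 \rceil}$, which is exactly the left-median of $\Prof$. The only obstacle is verifying the parity cases and confirming that in both of them the left endpoint of the optimal $p_1$-interval is indeed $\Prof_{\lceil n/2 \rceil}$; this is entirely routine.
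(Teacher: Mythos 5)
The paper does not actually prove \cref{fact:single}; it simply attributes the result to Procaccia and Tennenholtz and uses it as a cited fact. So there is no in-paper proof to compare your argument against.

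That said, your proof is correct and is the standard one. The decomposition of the cost into $n|p_2| + \sum_i |\Prof_i - p_1|$ correctly exploits the paper's definition $\Dist{x}{\Point}=|x-\Point_1|+|\Point_2|$ to show every minimizer must be 1-D, and the subsequent reduction to minimizing the sum of absolute deviations on the line via the piecewise-linear slope count is the classical median argument. Your identification of the minimizer set as $[\Prof_{\lceil n/2\rceil}, \Prof_{\lfloor n/2\rfloor + 1}]$ is right in both parities (degenerating to a singleton when $n$ is odd), and since all minimizers share $p_2=0$, lexicographic minimality over solutions indeed reduces to picking the left endpoint in the first coordinate, which is $\Prof_{\lceil n/2\rceil}$, the paper's left-median. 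You leave the ``routine parity cases'' as a pointer rather than spelling them out; given that the paper itself treats this as a citable black box, that is a reasonable level of detail, though in a self-contained writeup one would want a sentence verifying that the left endpoint of the optimal interval is $\Prof_{\lceil n/2\rceil}$ in the even case (checking that the slope of $g$ changes sign exactly across the interval $[\Prof_m,\Prof_{m+1}]$ for $n=2m$).
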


For any $n$-profile $\Prof$ and any $i$ in $[n-1]$, we define
$\Cand{\Prof}{i}$ as the 1-D solution that locates the left facility
at the left-median of the leftmost $i$ agents (i.e., at $\Prof_{\lceil
  i/2\rceil}$), and the right facility at the left-median of the
rightmost $n-i$ agents (i.e., at $\Prof_{\lceil(n+i)/2\rceil}$).  We
also define $\Cand{\Prof}{n}$ as the 1-D solution $((\Prof_{\lceil
  n/2\rceil},0),(\Prof_{\lceil n/2\rceil},0))$.  Solution
$\Cand{\Prof}{n}$ locates both facilities at the left-median of the
entire set of $n$ agents; note that the left facility is viewed as
serving the entire set of $n$ agents under this solution.

For any $n$-profile $\Prof$, we define $\Active{\Prof}$ as the set of
all $i$ in $[n]$ such that under solution $\Cand{\Prof}{i}$, the left
facility serves exactly $i$ agents.  For any $n$-profile $\Prof$, we
define $\Canonical{\Prof}$ as the lexicographically first minimum-cost
solution for $\Prof$ if $\Prof$ is nontrivial, and as
$\Cand{\Prof}{n}$ otherwise.

\begin{lemma}
\label{lem:active}
Let $\Prof$ be an $n$-profile and let $\SolnOne$ denote
$\Canonical{\Prof}$.  Then there is an index $i$ in $\Active{\Prof}$
such that $\SolnOne=\Cand{\Prof}{i}$.
\end{lemma}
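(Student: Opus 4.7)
We split on whether $\Prof$ is trivial. If $\Prof$ is trivial, then $\SolnOne = \Canonical{\Prof} = \Cand{\Prof}{n}$ by definition, and by construction of $\Cand{\Prof}{n}$ the left facility is declared to serve all $n$ agents; hence $n \in \Active{\Prof}$ and taking $i = n$ completes this case.

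Suppose $\Prof$ is nontrivial, so $\SolnOne$ is the lexicographically first minimum-cost solution. My first step is to argue that $\SolnOne$ is 1-D: using $\Dist{x}{p} = |x - p_1| + |p_2|$, replacing any facility with $p_2 \neq 0$ by $(p_1, 0)$ weakly decreases the cost and strictly reduces the lexicographic order of the solution, so the lex-first minimum-cost solution must be 1-D. Letting $i := |\ServedSoln{\Prof}{1}{\SolnOne}|$, since $\SolnOne_{1,1} \leq \SolnOne_{2,1}$ and ties are broken to the left, $\ServedSoln{\Prof}{1}{\SolnOne}$ is the prefix $L := \{1, \ldots, i\}$. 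Moreover, $i \in [n-1]$: if one facility served no agent, then $\SolnOne$'s cost would equal a single-facility cost, which for nontrivial $\Prof$ strictly exceeds the cost of, e.g., the candidate solution $\Cand{\Prof}{i'}$ with $i' = \lceil n/2 \rceil$ (contradicting minimum-cost).

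Given the partition $L$ versus $R := \{i+1, \ldots, n\}$ induced by $\SolnOne$, I would argue by exchange that $\SolnOne_{1,1} = \Prof_{\lceil i/2 \rceil}$. Let $\SolnOne'$ be obtained from $\SolnOne$ by replacing $\SolnOne_1$ with $(\Prof_{\lceil i/2 \rceil}, 0)$. The cost of $\SolnOne'$ is at most $\sum_{j \in L} \Dist{\Prof_j}{(\Prof_{\lceil i/2 \rceil}, 0)} + \sum_{j \in R} \Dist{\Prof_j}{\SolnOne_2}$, which by Fact~\ref{fact:single} is at most $\sum_{j \in L} \Dist{\Prof_j}{\SolnOne_1} + \sum_{j \in R} \Dist{\Prof_j}{\SolnOne_2} = \CostSoln{\Prof}{\SolnOne}$; hence $\SolnOne'$ is also minimum-cost. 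Since this sandwich forces $\SolnOne_{1,1}$ to itself minimize the single-facility cost for $L$, and $\Prof_{\lceil i/2 \rceil}$ is by Fact~\ref{fact:single} the lex-smallest such minimizer, we have $\SolnOne_{1,1} \geq \Prof_{\lceil i/2 \rceil}$, which in particular guarantees $\Prof_{\lceil i/2 \rceil} \leq \SolnOne_{2,1}$ and hence that $\SolnOne'$ is a valid solution in lex order. If the inequality $\SolnOne_{1,1} \geq \Prof_{\lceil i/2 \rceil}$ were strict, then $\SolnOne'$ would be lex-smaller than $\SolnOne$, contradicting the lex-first property. An analogous exchange applied to $\SolnOne_2$ yields $\SolnOne_{2,1} = \Prof_{\lceil (n+i)/2 \rceil}$. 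Combining, $\SolnOne = \Cand{\Prof}{i}$, and since $i = |\ServedSoln{\Prof}{1}{\Cand{\Prof}{i}}|$ by construction, $i \in \Active{\Prof}$.

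The main obstacle will be the bookkeeping inside the exchange argument: ensuring $\SolnOne'$ remains lex-ordered, and carefully handling the tie-breaking when $|L|$ is even so that Fact~\ref{fact:single} is applied to the correct notion of ``lex-smallest minimizer''. Both points are routine but require care because the modified solution $\SolnOne'$ may induce a different nearest-facility partition than $\SolnOne$, even though the cost bound used in the comparison does not rely on that partition.
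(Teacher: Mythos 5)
Your proposal follows essentially the same route as the paper: split on trivial versus nontrivial, set $i = |\ServedSoln{\Prof}{1}{\SolnOne}|$, and use Fact~\ref{fact:single} together with lex-minimality to pin each facility's $x$-coordinate to the left-median of the agents it serves. The exchange argument you spell out (showing $\SolnOne_{1,1}\geq\Prof_{\lceil i/2\rceil}$ via the cost comparison and then ruling out strict inequality via lex-minimality) is sound, including your observation that the cost bound on $\SolnOne'$ does not need to use the partition $\SolnOne'$ itself induces.

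Two of the supporting details are off, though. First, your argument that $\SolnOne$ is 1-D claims that replacing a facility $p$ with $p_2\neq 0$ by $(p_1,0)$ \emph{strictly reduces} lexicographic order; that fails when $p_2<0$, since $(p_1,0)$ is lex-larger than $(p_1,p_2)$. The clean argument reverses your order: first establish that, for nontrivial $\Prof$, every facility in a minimum-cost solution must serve at least one agent; then note that a facility serving an agent and having $p_2\neq 0$ could be moved to $(p_1,0)$ for a \emph{strict} cost reduction, forcing the solution to be 1-D (this also ensures the lex-first minimizer exists at all). Second, your witness that the two-facility optimum strictly beats the best single-facility cost, namely $\Cand{\Prof}{\lceil n/2\rceil}$, does not always work: for $\Prof=(0,0,0,1)$ we get $\Cand{\Prof}{2}=((0,0),(0,0))$ with cost $1$, equal to the optimal single-facility cost. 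A witness that always works is to keep one facility at the single-facility optimum $m$ and put the other at any $\Prof_j\neq m$, which cuts the cost by at least $|\Prof_j-m|>0$. With these two repairs, your proof and the paper's coincide in substance.
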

\begin{proof}
If $\Prof$ is trivial, it is clear that $n$ belongs to
$\Active{\Prof}$ and $\SolnOne=\Cand{\Prof}{n}$.  For the remainder of
the proof, assume that $\Prof$ is nontrivial.  Thus $\SolnOne$ is the
lexicographically first minimum-cost solution for $\Prof$. Let $i$
denote $|\ServedSoln{\Prof}{1}{\SolnOne}|$.  Since $\Prof$ is
nontrivial and $\SolnOne$ is minimum-cost, we find that
$\SolnOne_{1,1}<\SolnOne_{2,1}$ and $i$ belongs to $[n-1]$. Since
$\SolnOne$ is the lexicographically first minimum-cost solution for
$\Prof$, we deduce from Fact~\ref{fact:single} that $\SolnOne_{1,1}$
is the left-median of $\Prof_1,\ldots,\Prof_i$ and $\SolnOne_{2,1}$ is
the left-median of $\Prof_{i+1},\ldots,\Prof_n$. Thus $i$ belongs to
$\Active{\Prof}$ and $\SolnOne=\Cand{\Prof}{i}$.
\end{proof}

Using Lemma~\ref{lem:active}, we deduce that for any $n$-profile
$\Prof$, there is exactly one $i$ in $[n]$ such that
$\Canonical{\Prof}=\Cand{\Prof}{i}$; we define this $i$ as
$\Index{\Prof}$.

For any profile $\Prof$, we define the \emph{optimal social cost}
$\Cost{\Prof}$ as $\CostSoln{\Prof}{\SolnOne}$ and
$\Served{\Prof}{\ell}$ as $\ServedSoln{\Prof}{\ell}{\SolnOne}$ for all
$\ell$ in $\{1,2\}$, where $\SolnOne$ denotes $\Canonical{\Prof}$.

\begin{lemma}
\label{lem:lipCost}
Let $\Prof$ and $\Prof^*$ be $n$-profiles. Then
$|\Cost{\Prof^*}-\Cost{\Prof}|\leq \|\Prof^*-\Prof\|_1$.
\end{lemma}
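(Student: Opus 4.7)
The plan is to reduce this to Lemma~\ref{lem:lipCostSoln} by a standard optimality-swap argument. Let $\SolnOne$ denote $\Canonical{\Prof}$ and $\SolnOne^*$ denote $\Canonical{\Prof^*}$. The first observation I need is that $\SolnOne$ is a minimum-cost solution for $\Prof$ (and similarly for $\SolnOne^*$ and $\Prof^*$). For nontrivial $\Prof$ this is immediate from the definition of $\Canonical{\Prof}$. For trivial $\Prof$, the definition gives $\SolnOne = \Cand{\Prof}{n}$, which places both facilities at the common agent location, yielding social cost $0$; this is clearly minimum-cost. So in either case $\Cost{\Prof} = \CostSoln{\Prof}{\SolnOne} \leq \CostSoln{\Prof}{\SolnOne^*}$, and analogously $\Cost{\Prof^*} \leq \CostSoln{\Prof^*}{\SolnOne}$.

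Next I would combine these optimality inequalities with Lemma~\ref{lem:lipCostSoln}. Applying that lemma to the solution $\SolnOne$ gives
\[
\Cost{\Prof^*} \leq \CostSoln{\Prof^*}{\SolnOne} \leq \CostSoln{\Prof}{\SolnOne} + \|\Prof^*-\Prof\|_1 = \Cost{\Prof} + \|\Prof^*-\Prof\|_1,
\]
so $\Cost{\Prof^*} - \Cost{\Prof} \leq \|\Prof^*-\Prof\|_1$. Applying the lemma again to the solution $\SolnOne^*$ (and swapping the roles of $\Prof$ and $\Prof^*$) gives the reverse bound $\Cost{\Prof} - \Cost{\Prof^*} \leq \|\Prof^*-\Prof\|_1$. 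Together these yield $|\Cost{\Prof^*}-\Cost{\Prof}| \leq \|\Prof^*-\Prof\|_1$.

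There is essentially no obstacle here; the only subtle point is remembering to handle the trivial case, where $\Canonical{\Prof}$ is defined as $\Cand{\Prof}{n}$ rather than via the lex-first minimum-cost convention. Once one notes that this default is still cost-minimizing (because any solution has nonnegative social cost and this one achieves $0$), the rest of the argument is a textbook two-sided optimality swap against Lemma~\ref{lem:lipCostSoln}.
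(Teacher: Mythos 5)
Your proof is correct and follows essentially the same route as the paper's: both compare $\Cost{\Prof^*}$ to the cost of $\Prof^*$ against the canonical solution for $\Prof$, then invoke Lemma~\ref{lem:lipCostSoln} and symmetry. Your extra remark that $\Canonical{\Prof}$ is still minimum-cost in the trivial case (because it achieves cost $0$) is a reasonable bit of diligence that the paper leaves implicit.
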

\begin{proof}
By symmetry, it is sufficient to establish that
$\Cost{\Prof^*}-\Cost{\Prof}\leq \|\Prof^*-\Prof\|_1$.  Let $\SolnOne$
(resp., $\SolnOne^*$) denote the canonical solution for profile
$\Prof$ (resp., $\Prof^*$).  We have
$\Cost{\Prof^*}\leq\CostSoln{\Prof^*}{\SolnOne}
\leq\Cost{\Prof}+\|\Prof^*-\Prof\|_1$, where the second inequality
follows from Lemma~\ref{lem:lipCostSoln}.
\end{proof}

A mechanism $\Mech$ maps any given profile to a solution.  We say that
a mechanism is 1-D if it only produces 1-D solutions.

For any mechanism $\Mech$ and any function $\FactorLetter: \mathbb{N}
\rightarrow \Reals^+$, we define the following terms:
\begin{enumerate}
\item $\Mech$ is \emph{$\Factor{n}$-approximate} if
  $\CostSoln{\Prof}{\Mech(\Prof)} \le \Factor{n} \Cost{\Prof}$ for all
  $n$-profiles $\Prof$;
\item $\Mech$ is \emph{$\Factor{n}$-strategyproof} if
  $\Dist{\Prof_i}{\Mech(\Prof)} \le \Factor{n}
  \Dist{\Prof_i}{\Mech(\Subst{\Prof}{i}{x})}$ for all $n$-profiles
  $\Prof$, all $i$ in $[n]$, and all real numbers $x$;
\item $\Mech$ is \emph{$\Factor{n}$-Lipschitz} if
  $\|\Mech(\Prof^*)-\Mech(\Prof)\|_1 \le \Factor{n}
  \|\Prof^*-\Prof\|_1$ for all $n$-profiles $\Prof$ and $\Prof^*$.
\end{enumerate}

As discussed in \cref{sec:introduction}, our interest is in achieving
constant-factor bounds (i.e., independent of $n$). We
say a mechanism is constant-approximate (resp. constant-strategyproof,
constant-Lipschitz) if the factor $\Factor{n}$ is $O(1)$.  Note that
when $\Factor{n} = 1$, $\Factor{n}$-strategyproof is strategyproof and
when $\Factor{n}$ is a constant, $\Factor{n}$-strategyproof corresponds to
constant-strategyproof defined in \cite{Oomine2017}.

\section{Facility Location in $\mathbb{R}$}

\subsection{Mechanism $\Mech_1$}

We define our first mechanism $\MechOpt$ as follows: For
any profile $\Prof$, $\MechOpt(\Prof)$ is equal to $\Canonical{\Prof}$.

By construction, it is clear to see that this mechanism attains the optimal social cost and therefore is constant-approximate with an approximation ratio of $1$. However, the following counter-example shows that this mechanism is neither constant-Lipschitz nor constant-strategyproof.

\begin{lemma}
\label{negexample}
    Mechanism $\MechOpt$ is not constant-strategyproof and is not constant-Lipschitz.
\end{lemma}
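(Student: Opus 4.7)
The plan is to exhibit a single profile $\Prof$ together with a one-coordinate perturbation that simultaneously witnesses both violations. Take $\Prof = (0, L, L, L, 2L)$ for a large parameter $L$, and let $\Prof^* = \Subst{\Prof}{5}{2L+1} = (0, L, L, L, 2L+1)$.

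The first step is to compute $\MechOpt(\Prof) = \Canonical{\Prof}$. Applying \Cref{lem:active} and enumerating $\Cand{\Prof}{i}$ for $i \in [4]$, I would verify that only $i = 1$ and $i = 4$ are active, yielding $\Cand{\Prof}{1} = ((0, 0), (L, 0))$ and $\Cand{\Prof}{4} = ((L, 0), (2L, 0))$, each of social cost $L$ (the indices $i=2$ and $i=3$ yield either the same facility pair as $i=1$ or two coincident facilities at $L$, and in each case the left facility fails to serve exactly $i$ agents). Since the two active minimizers tie at cost $L$, the lex-first tiebreak selects $\Canonical{\Prof} = \Cand{\Prof}{1} = ((0, 0), (L, 0))$. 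The analogous enumeration for $\Prof^*$ shows that $\Cand{\Prof^*}{1}$ now has cost $L+1$ (agent $5$'s distance to the right facility at $L$ has increased by one), while $\Cand{\Prof^*}{4} = ((L, 0), (2L+1, 0))$ still has cost $L$. The minimum is therefore unique and $\Canonical{\Prof^*} = ((L, 0), (2L+1, 0))$.

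Once these two canonical solutions are pinned down, both violations follow immediately. For the Lipschitz failure, $\|\Prof^* - \Prof\|_1 = 1$ but $\|\MechOpt(\Prof^*) - \MechOpt(\Prof)\|_1 = L + (L+1) = 2L + 1$, so the Lipschitz ratio grows without bound as $L \to \infty$. For the strategyproofness failure, agent $5$ has true preference $2L$, so their truthful service cost is $\Dist{2L}{\MechOpt(\Prof)} = \min(2L, L) = L$, while their service cost when misreporting $x = 2L+1$ is $\Dist{2L}{\MechOpt(\Prof^*)} = \min(L, 1) = 1$, and the ratio $L/1$ again grows without bound.

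The only real obstacle is the tiebreak calculation in the first step: the two active candidates tie at cost $L$, and it is precisely the lex-first rule that forces the ``bad'' facility placement $((0, 0), (L, 0))$ over the equally good alternative that would leave agent $5$ at distance zero. The tiny perturbation at the rightmost coordinate breaks this tie in favor of the alternative, producing the large simultaneous jump in both the facility locations and agent $5$'s individual service cost.
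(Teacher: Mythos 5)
Your proof is correct and follows essentially the same approach as the paper's: construct a profile whose optimal two-group split is on a knife's edge, then perturb one coordinate to flip which split is selected, forcing a large jump in the facility locations that simultaneously witnesses the Lipschitz and strategyproofness failures. The paper uses the $4$-agent profile $(-z,0,0,z+\eps)$ with a near-tie (costs $z$ and $z+\eps$) flipped by moving the leftmost agent, whereas you use a $5$-agent profile with an exact tie broken by the lex-first rule, flipped by moving the rightmost agent; both are valid instantiations of the same idea. One small imprecision: you state that only $i=1$ and $i=4$ are active, but by the paper's definition $i=n=5$ also belongs to $\Active{\Prof}$ (under $\Cand{\Prof}{5}$ the coincident left facility serves all $5$ agents); this does not affect your argument, since $\Cand{\Prof}{5}$ has cost $2L$ and is thus not the canonical solution, but the enumeration should run over all of $[n]$ rather than $[n-1]$ when invoking \cref{lem:active}.
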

\begin{proof}
Let $\lambda$ be an arbitrarily large positive real number, let $z$ and $\eps$ be positive real numbers such that $z > 2\eps\lambda$, let $\Prof$ denote the profile $(-z,0,0,z+\eps)$, let $\SolnOne$ denote the solution $\MechOpt(\Prof)=((0,0),(z+\eps,0))$, let $\Prof^*$ denote the profile $(-z-2\eps,0,0,z+\eps)$, and let $\SolnOne^*$ denote the solution $\MechOpt(\Prof^*) = ((-z-2\eps,0),(0,0))$.
Observe that only agent $1$ changed their report (from $\Prof$ to $\Prof^*$) and only changed it by $2\eps$. 
We have $$d(\Prof_1,\SolnOne)=z \geq 2\eps\lambda=\lambda d(\Prof_1^*,\SolnOne^*).$$ Hence $\MechOpt$ is not constant-strategyproof. Furthermore,
$$\frac{\|\SolnOne-\SolnOne^*\|_1}{\|\Prof-\Prof^*\|_1}=\frac{2z+3\eps}{2\eps},$$
which goes to infinity as $z\to\infty$. Hence $\MechOpt$ is not constant-Lipschitz.
\end{proof}

The linear bound on the approximation ratio of optimal social cost for
deterministic, strategyproof, mechanisms provided by \cite{Lu2010} can
trivially be extended to constant-strategyproof mechanisms to show
that, in fact, no 1-D, constant-strategyproof, mechanism can attain a
constant approximation ratio of optimal social cost. However, such a
mechanism can be constant-Lipschitz as we show with the mechanism
$\MechLip$ (\cref{sec:mechanism2}). We use the following lemma in
our analysis of $\MechLip$.

\begin{lemma}
	\label{lem:median}
	Let $\Prof$ be a profile and let $\SolnOne$ denote
	$\MechOpt(\Prof)$. Then $|\{i\mid\Prof_i\leq\SolnOne_{1,1}\}|$ is at
	least $\lceil|\Served{\Prof}{1}|/2\rceil$ and
	$|\{i\mid\Prof_i\geq\SolnOne_{2,1}\}|$ is at least
	$\lceil|\Served{\Prof}{2}|/2\rceil$.
\end{lemma}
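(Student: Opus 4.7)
The plan is to reduce everything to the structure of a $\Cand{\Prof}{i}$ solution by invoking Lemma~\ref{lem:active}, and then verify both inequalities with a direct counting argument. Let $\SolnOne$ denote $\Canonical{\Prof}$ and, via Lemma~\ref{lem:active}, fix an index $i$ in $\Active{\Prof}$ such that $\SolnOne=\Cand{\Prof}{i}$. Because $i\in\Active{\Prof}$, the left facility of $\Cand{\Prof}{i}$ serves exactly $i$ agents, so $|\Served{\Prof}{1}|=i$ and $|\Served{\Prof}{2}|=n-i$ (in the trivial case, $i=n$ and $|\Served{\Prof}{2}|=0$, so the second inequality holds vacuously and the first is immediate since every $\Prof_j$ equals $\SolnOne_{1,1}$).

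For the nontrivial case we have $i\in[n-1]$, and by definition of $\Cand{\Prof}{i}$ the first coordinates are $\SolnOne_{1,1}=\Prof_{\lceil i/2\rceil}$ and $\SolnOne_{2,1}=\Prof_{\lceil(n+i)/2\rceil}$. Since $\Prof$ is in nondecreasing order, the indices $1,\ldots,\lceil i/2\rceil$ all satisfy $\Prof_j\le\SolnOne_{1,1}$, which gives $|\{j\mid\Prof_j\le\SolnOne_{1,1}\}|\ge\lceil i/2\rceil=\lceil|\Served{\Prof}{1}|/2\rceil$, as required. Similarly, the indices $\lceil(n+i)/2\rceil,\ldots,n$ all satisfy $\Prof_j\ge\SolnOne_{2,1}$, so $|\{j\mid\Prof_j\ge\SolnOne_{2,1}\}|\ge n-\lceil(n+i)/2\rceil+1$.

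The last step is to verify the algebraic inequality $n-\lceil(n+i)/2\rceil+1\ge\lceil(n-i)/2\rceil$, which I would dispatch by splitting on the parity of $n+i$: when $n+i$ is even both sides simplify so that the left side exceeds the right by $1$, and when $n+i$ is odd the two sides coincide at $(n-i+1)/2$. Either way the inequality holds, completing the proof. The only mild obstacle is the parity bookkeeping together with the separate handling of the trivial profile, both of which are routine given Lemma~\ref{lem:active} and the explicit form of $\Cand{\Prof}{i}$.
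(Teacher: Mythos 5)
Your proof is correct and takes essentially the same approach as the paper: the paper's entire proof is the one-liner ``Follows from Lemma~\ref{lem:active},'' and what you have done is spell out exactly what that reduction entails—fixing $i\in\Active{\Prof}$ with $\SolnOne=\Cand{\Prof}{i}$, reading off $\SolnOne_{1,1}=\Prof_{\lceil i/2\rceil}$ and $\SolnOne_{2,1}=\Prof_{\lceil(n+i)/2\rceil}$, and doing the parity check $n-\lceil(n+i)/2\rceil+1\geq\lceil(n-i)/2\rceil$. Your handling of the trivial case ($i=n$, $|\Served{\Prof}{2}|=0$) and the ceiling arithmetic are both correct.
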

\begin{proof}
	Follows from Lemma~\ref{lem:active}.
\end{proof}

\subsection{Mechanism $\Mech_2$}
\label{sec:mechanism2}

Below we define our second mechanism $\MechLip$. Before doing so, we provide some useful definitions. 

For any $n$-profile $\Prof$, let $\Helper{\Prof}{1}$ denote the unique
real number $x$ greater than or equal to $\Prof_1$ such that 
$\sum_{i \in [n]}\max(0,x-\Prof_i)=\Cost{\Prof}$, let $\Fac{\Prof}{1}$ denote
$\min(\Helper{\Prof}{1},\Mean{\Prof})$. Similarly, let $\Helper{\Prof}{2}$ denote
the unique real number $x$ less than or equal to $\Prof_n$ such that
$\sum_{i \in [n]}\allowbreak\max(0,\Prof_i-x)=\Cost{\Prof}$, and let
$\Fac{\Prof}{2}$ denote $\max(\Helper{\Prof}{2},\Mean{\Prof})$.
It is easy to argue that $\Fac{\Prof}{1}=\Mean{\Prof}$ if and
only if $\Fac{\Prof}{2}=\Mean{\Prof}$.

The reason that we introduce $\Fac{\Prof}{1}$ and $\Fac{\Prof}{2}$ is because it is possible that $\Helper{\Prof}{1} \ge \Helper{\Prof}{2}$. For example, let $\Prof$ denote the profile $(-2, -1, 0, 0, 1, 1)$. It is straightforward to check that $\Helper{\Prof}{1} = 0$ and $\Helper{\Prof}{2} = -1/4$. However, for all $\ell$ belonging to $\Set{1,2}$, $\FacB{\Prof}{\ell} = -1/6$.

We define mechanism~$\MechLip$ as follows:
For any profile $\Prof$, $\MechLip(\Prof)$ is the 1-D solution
$((\Fac{\Prof}{1},0),(\Fac{\Prof}{2},0))$.

We now show that mechanism $\MechLip$ is constant-approximate and establish properties to be used in \cref{sec:mechanism4} to prove that mechanism $\MechLipTwo$ is constant-Lipschitz. Lemma~\ref{lem:lipMain}, which proves that mechanism $\MechLipTwo$ is constant-Lipschitz, also establishes that mechanism~$\MechLip$ is constant-Lipschitz.

We begin with the following lemma towards the goal of showing that mechanism $\MechLip$ is constant-approximate.

\begin{lemma}
\label{lem:nest}
Let $\Prof$ be a profile and let $\SolnOne$ denote $\MechOpt(\Prof)$.
Then
\[
\SolnOne_{1,1}\leq\Fac{\Prof}{1}\leq\Fac{\Prof}{2}\leq\SolnOne_{2,1}.
\]
\end{lemma}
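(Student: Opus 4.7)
The plan is to prove the three inequalities in turn. The middle inequality $\Fac{\Prof}{1} \le \Fac{\Prof}{2}$ is immediate from the definitions: $\Fac{\Prof}{1} \le \Mean{\Prof} \le \Fac{\Prof}{2}$ by construction. The left-right symmetry of the problem (reflecting the profile around the origin swaps the two facility positions as well as the roles of $\Fac{\Prof}{1}$ and $\Fac{\Prof}{2}$) reduces the two outer inequalities to proving $\SolnOne_{1,1} \le \Fac{\Prof}{1}$. For trivial profiles this is immediate, since $\SolnOne_{1,1} = \Fac{\Prof}{1} = \Prof_1$; for the remainder I would assume $\Prof$ is nontrivial.

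Since $\Fac{\Prof}{1} = \min(\Helper{\Prof}{1}, \Mean{\Prof})$, I need to establish $\SolnOne_{1,1} \le \Helper{\Prof}{1}$ and $\SolnOne_{1,1} \le \Mean{\Prof}$ separately. For the first, define $f(x) = \sum_{i \in [n]} \max(0, x - \Prof_i)$ and use that $\Canonical{\Prof}$ is 1-D: every agent $i$ with $\Prof_i \le \SolnOne_{1,1}$ is (weakly) closer to the left facility and so contributes exactly $\SolnOne_{1,1} - \Prof_i$ to $\CostSoln{\Prof}{\SolnOne}$. Summing over such agents yields $f(\SolnOne_{1,1}) \le \Cost{\Prof} = f(\Helper{\Prof}{1})$, and strict monotonicity of $f$ on $[\Prof_1, \infty)$ (which holds since $\Prof$ is nontrivial) then gives $\SolnOne_{1,1} \le \Helper{\Prof}{1}$.

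For $\SolnOne_{1,1} \le \Mean{\Prof}$, I would invoke Lemma~\ref{lem:active} to write $\SolnOne = \Cand{\Prof}{k}$ with $k = \Index{\Prof} \in [n-1]$, so $\SolnOne_{1,1} = \Prof_m$ where $m = \lceil k/2 \rceil \le \lfloor n/2 \rfloor$. The claim is equivalent to $\sum_{i \in [n]}(\Prof_i - \Prof_m) \ge 0$. The approach is to combine three ingredients: (i) the left-median property of $\Prof_m$ within the left cluster $\{\Prof_1, \ldots, \Prof_k\}$, (ii) the partition midpoint condition $\Prof_{k+1} \ge (\Prof_m + \SolnOne_{2,1})/2$ arising from the optimality of the split, which forces each of the $n - k$ right-cluster agents to lie at least $(\SolnOne_{2,1} - \Prof_m)/2$ above $\Prof_m$, and (iii) a global cost comparison between $\Cand{\Prof}{k}$ and alternative candidates such as $\Cand{\Prof}{1}$, which bounds how heavy the left tail $\Prof_m - \Prof_1$ can be relative to the right-cluster spread. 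The main obstacle is this last sub-inequality: the median of the left cluster can in isolation exceed the overall mean of $\Prof$, so the argument must genuinely leverage the global optimality of the partition $(k, n-k)$ rather than merely the local median property of $\Prof_m$ within its own cluster, ruling out the ``heavy left tail, light right cluster'' configurations that would otherwise be consistent with the local median condition. The symmetric inequality $\Fac{\Prof}{2} \le \SolnOne_{2,1}$ then follows by applying the same argument to the reflected profile.
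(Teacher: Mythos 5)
The instinct to split $\SolnOne_{1,1}\leq\Fac{\Prof}{1}=\min(\Helper{\Prof}{1},\Mean{\Prof})$ into the two sub-inequalities $\SolnOne_{1,1}\leq\Helper{\Prof}{1}$ and $\SolnOne_{1,1}\leq\Mean{\Prof}$ is exactly right, and your argument for the first one (every agent $i$ with $\Prof_i\leq\SolnOne_{1,1}$ contributes exactly $\SolnOne_{1,1}-\Prof_i$ to $\Cost{\Prof}$, so $\sum_i\max(0,\SolnOne_{1,1}-\Prof_i)\leq\Cost{\Prof}$, then strict monotonicity gives the bound) is correct and complete. The middle inequality $\Fac{\Prof}{1}\leq\Mean{\Prof}\leq\Fac{\Prof}{2}$ is indeed definitional, as you say.

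However, you leave the second sub-inequality $\SolnOne_{1,1}\leq\Mean{\Prof}$ as a sketch with an unresolved ``main obstacle,'' and that gap is genuine: the inequality is in fact \emph{false} in general, so no combination of your three ingredients can close it. Consider the $8$-profile
\[
\Prof=(-9,\,-2,\,0,\,0,\,0,\,0,\,0,\,10).
\]
Checking all candidates, $\Cand{\Prof}{7}=((0,0),(10,0))$ has cost $9+2=11$, while $\Cand{\Prof}{1}=\Cand{\Prof}{2}=((-9,0),(0,0))$ has cost $2+10=12$, and all other candidates cost more; so $\SolnOne=\Canonical{\Prof}=((0,0),(10,0))$ and $\SolnOne_{1,1}=0$. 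But $\Mean{\Prof}=(-9-2+10)/8=-1/8<0$, so $\Fac{\Prof}{1}\leq\Mean{\Prof}<\SolnOne_{1,1}$. The ``heavy left tail in the left cluster, light right cluster'' configuration you were worried about is precisely the shape of this counterexample. The paper's stated proof, ``Immediate from the definitions,'' is itself incorrect: it is immediate that $\SolnOne_{1,1}\leq\Helper{\Prof}{1}$ and $\Helper{\Prof}{2}\leq\SolnOne_{2,1}$, so the lemma does hold in the regime $\Helper{\Prof}{1}\leq\Mean{\Prof}\leq\Helper{\Prof}{2}$ (equivalently $\Gap{\Prof}>0$), but the clipping by $\Mean{\Prof}$ when $\Helper{\Prof}{1}>\Helper{\Prof}{2}$ can push $\Fac{\Prof}{1}$ strictly to the left of $\SolnOne_{1,1}$, as above. (In this $\Gap{\Prof}=0$ regime the lemma's downstream uses may still be salvageable, but the lemma as stated is wrong.) So: your proof is incomplete because the claim you are trying to establish does not hold, and the right conclusion to draw from the obstacle you identified is a counterexample rather than a more clever comparison of candidates.
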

\begin{proof}
Immediate from the definitions.
\end{proof}

The following lemma establishes that mechanism $\MechLip$ is constant-approximate.

\begin{lemma}
\label{lem:costLip}
Let $\Prof$ be an $n$-profile, let $\SolnOne$ denote $\MechLip(\Prof)$. Then $\CostSoln{\Prof}{\SolnOne}\leq 3\Cost{\Prof}$.
\end{lemma}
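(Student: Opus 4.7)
The plan is to split the argument into two cases according to the ``iff'' statement made in the paper: either both $\Fac{\Prof}{1}$ and $\Fac{\Prof}{2}$ coincide with $\Mean{\Prof}$, or neither does. In the first case, the definitions force $\Helper{\Prof}{1}\geq\Mean{\Prof}\geq\Helper{\Prof}{2}$, while in the second case, we must have $\Helper{\Prof}{1}\leq\Mean{\Prof}\leq\Helper{\Prof}{2}$ and $\SolnOne=((\Helper{\Prof}{1},0),(\Helper{\Prof}{2},0))$.

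For the ``nontrivial'' case where $\SolnOne_{1,1}=\Helper{\Prof}{1}$ and $\SolnOne_{2,1}=\Helper{\Prof}{2}$, I would partition the agents into three groups according to whether $\Prof_i\leq\Helper{\Prof}{1}$, $\Helper{\Prof}{1}<\Prof_i<\Helper{\Prof}{2}$, or $\Prof_i\geq\Helper{\Prof}{2}$, and bound the contribution of each group to $\CostSoln{\Prof}{\SolnOne}$ by $\Cost{\Prof}$. The leftmost group contributes exactly $\sum_i\max(0,\Helper{\Prof}{1}-\Prof_i)=\Cost{\Prof}$ by the definition of $\Helper{\Prof}{1}$, and symmetrically for the rightmost group. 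For the middle group, I would invoke Lemma~\ref{lem:nest} to obtain $\MechOpt(\Prof)_{1,1}\leq\Helper{\Prof}{1}$ and $\MechOpt(\Prof)_{2,1}\geq\Helper{\Prof}{2}$, which yields $\min(\Prof_i-\Helper{\Prof}{1},\Helper{\Prof}{2}-\Prof_i)\leq\Dist{\Prof_i}{\MechOpt(\Prof)}$ agent-by-agent, so the middle sum is at most $\Cost{\Prof}$ as well.

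For the ``coincident'' case where $\SolnOne_{1,1}=\SolnOne_{2,1}=\Mean{\Prof}$, the cost is just $\sum_i|\Prof_i-\Mean{\Prof}|$. Since $\sum_i(\Prof_i-\Mean{\Prof})=0$, the positive and negative parts must be equal, so $\sum_i|\Prof_i-\Mean{\Prof}|=2\sum_i\max(0,\Prof_i-\Mean{\Prof})$. Because this case forces $\Helper{\Prof}{2}\leq\Mean{\Prof}$, the latter sum is at most $\sum_i\max(0,\Prof_i-\Helper{\Prof}{2})=\Cost{\Prof}$ by the definition of $\Helper{\Prof}{2}$. This yields a bound of $2\Cost{\Prof}$, which is tighter than needed.

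The main obstacle is simply verifying the dichotomy between the two cases carefully, since the paper uses the ``iff'' statement without proof; once that is in hand, both cases reduce to straightforward accounting. No fiddly optimization or induction is needed, just the defining equations of $\Helper{\Prof}{1}$ and $\Helper{\Prof}{2}$ together with Lemma~\ref{lem:nest}.
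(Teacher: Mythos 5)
Your proof is correct, but it takes a more roundabout route than the paper's. The paper makes no case distinction at all: it partitions the agents into $X=\{i\in[n]\mid\Prof_i\leq\Fac{\Prof}{1}\}$, $Y=\{i\in[n]\mid\Prof_i\geq\Fac{\Prof}{2}\}$, and $Z=[n]\setminus(X\cup Y)$ directly in terms of the actual facility positions, then uses the unconditional inequalities $\Fac{\Prof}{1}\leq\Helper{\Prof}{1}$ and $\Fac{\Prof}{2}\geq\Helper{\Prof}{2}$ (immediate from the $\min$ and $\max$ in the definitions of $\Fac{\Prof}{1}$ and $\Fac{\Prof}{2}$) to bound the contributions of $X$ and $Y$ by $\Cost{\Prof}$ each, and \cref{lem:nest} to bound the contribution of $Z$ by $\Cost{\Prof}$. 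Your version introduces a dichotomy on whether $\Fac{\Prof}{1}=\Mean{\Prof}$, which obliges you to first verify the paper's unproved ``iff'' observation; in your nontrivial case the argument then collapses to the same three-group accounting, and in your coincident case you use the balance of deviations around the mean to extract the slightly tighter bound $2\Cost{\Prof}$. Both arguments are sound. The paper's phrasing is cleaner in that the one-sided inequalities $\Fac{\Prof}{1}\leq\Helper{\Prof}{1}$ and $\Fac{\Prof}{2}\geq\Helper{\Prof}{2}$ are all that is ever needed, regardless of which branch of the $\min$/$\max$ is active, so the case split and the ``iff'' claim can be avoided entirely; your version buys a sharper constant in the degenerate case at the price of one additional claim to check.
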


\begin{proof}
Let $X$ denote $\{i\in[n]\mid\Prof_i\leq\Fac{\Prof}{1}\}$, $Y$
denote $\{i\in[n]\mid\Prof_i\geq\Fac{\Prof}{2}\}$, and $Z$ denote
$[n]\setminus(X\cup Y)$.  
The definition of mechanism~$\MechLip$
implies $\sum_{i\in X}\Dist{\Prof_i}{\SolnOne}\leq\Cost{\Prof}$ and
$\sum_{i\in  Y}\Dist{\Prof_i}{\SolnOne}\leq\Cost{\Prof}$. Lemma~\ref{lem:nest}
implies $\sum_{i\in Z}\Dist{\Prof_i}{\SolnOne}\leq\Cost{\Prof}$. The claim of the lemma follows.
\end{proof}

Again, given a trivial extension to the lower bound provided by \cite{Lu2010}, we know that mechanism $\MechLip$ cannot be constant-strategy proof because it is a 1-D mechanism that attains a constant approximation ratio of optimal social cost. 

We now move on to establish  properties of
mechanism~$\MechLip$ that are used for analyzing its 2-D
generalization, mechanism $\MechLipTwo$, in Section \ref{sec:mechanism4}.

We begin with some useful definitions.  For any profile $\Prof$, let
$\Gap{\Prof}$ denote $\Fac{\Prof}{2}-\Fac{\Prof}{1}$, the distance between the two facilities on the $x$-axis.  For any profile
$\Prof$ such that $\Gap{\Prof}>0$ and any $i$ in $[n]$, we define $\Frac{\Prof}{2}{i}$ as

$$\frac{\max(0,\min(\Fac{\Prof}{2},\Prof_i)-\Fac{\Prof}{1})}{\Gap{\Prof}}$$
and $\Frac{\Prof}{1}{i}$ as $1-\Frac{\Prof}{2}{i}$.  
For any
profile $\Prof$ such that $\Gap{\Prof}>0$ and any $\ell$ in $\{1,2\}$,
we define $\Weight{\Prof}{\ell}$ as $\sum_{1\leq i\leq
  n}\Frac{\Prof}{\ell}{i}$ 
  and $\CostDensity{\Prof}{\ell}$
as $\Cost{\Prof}/\Weight{\Prof}{\ell}$. 

\begin{lemma}
\label{lem:weightLowerBound}
For any $\ell$ in $\{1,2\}$ and any profile $\Prof$ such that
$\Gap{\Prof}>0$, we have
\[
\Weight{\Prof}{\ell}\geq|\Served{\Prof}{\ell}|/2.
\]
\end{lemma}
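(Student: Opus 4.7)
The plan is to prove the $\ell=1$ case directly; the $\ell=2$ case will then follow by a symmetric argument. Let $\SolnOne$ denote $\MechOpt(\Prof)$, so that $\Served{\Prof}{1}$ and $\Served{\Prof}{2}$ are the sets of agents served by the left and right facilities of $\SolnOne$, respectively.

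The key observation is that $\Frac{\Prof}{1}{i}$ equals $1$ for every index $i$ satisfying $\Prof_i \leq \Fac{\Prof}{1}$: indeed, for such $i$, $\min(\Fac{\Prof}{2},\Prof_i) - \Fac{\Prof}{1} \leq 0$, so $\Frac{\Prof}{2}{i} = 0$ and hence $\Frac{\Prof}{1}{i} = 1$. By Lemma~\ref{lem:nest}, we have $\SolnOne_{1,1} \leq \Fac{\Prof}{1}$, so every index $i$ with $\Prof_i \leq \SolnOne_{1,1}$ also satisfies $\Prof_i \leq \Fac{\Prof}{1}$ and thus contributes $1$ to the sum $\Weight{\Prof}{1} = \sum_i \Frac{\Prof}{1}{i}$. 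Since all other terms in the sum are nonnegative, we conclude
\[
\Weight{\Prof}{1} \;\geq\; |\{i \in [n] \mid \Prof_i \leq \SolnOne_{1,1}\}|.
\]

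To finish, I would apply Lemma~\ref{lem:median}, which gives $|\{i \mid \Prof_i \leq \SolnOne_{1,1}\}| \geq \lceil |\Served{\Prof}{1}|/2 \rceil \geq |\Served{\Prof}{1}|/2$. Chaining these inequalities yields the desired bound. For $\ell = 2$, the same argument applies with the roles of the two facilities swapped: indices $i$ with $\Prof_i \geq \SolnOne_{2,1} \geq \Fac{\Prof}{2}$ contribute $1$ each to $\Weight{\Prof}{2}$, and the lower bound from Lemma~\ref{lem:median} on $|\{i \mid \Prof_i \geq \SolnOne_{2,1}\}|$ completes the proof.

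There is no real obstacle here: the only subtlety is recognizing that Lemma~\ref{lem:nest} is precisely what lets us replace the threshold $\Fac{\Prof}{1}$ (about which we have no direct cardinality information) with $\SolnOne_{1,1}$ (about which Lemma~\ref{lem:median} gives us exactly what we need).
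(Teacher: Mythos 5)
Your proof is correct and follows essentially the same route as the paper, whose entire proof reads ``Immediate from Lemmas~\ref{lem:median} and~\ref{lem:nest}''; you have simply unfolded the intended chain $\Weight{\Prof}{1}\geq|\{i\mid\Prof_i\leq\Fac{\Prof}{1}\}|\geq|\{i\mid\Prof_i\leq\SolnOne_{1,1}\}|\geq\lceil|\Served{\Prof}{1}|/2\rceil$ (and its mirror image for $\ell=2$) explicitly, including the observation that each $\Frac{\Prof}{\ell}{i}$ lies in $[0,1]$ and equals $1$ on the indicated index set.
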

\begin{proof}
Immediate from Lemmas ~\ref{lem:median} and ~\ref{lem:nest}.
\end{proof}

\begin{lemma}
\label{lem:weightLowerBoundLip}
Let $\Prof$ be a profile such that $\Gap{\Prof}>0$, let $\ell$ belong
to $\{1,2\}$, and let $\SolnOne$ denote $\MechLip(\Prof)$. Then
\[
\Weight{\Prof}{\ell}\geq|\ServedSoln{\Prof}{\ell}{\SolnOne}|/2.
\]
\end{lemma}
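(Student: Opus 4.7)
The plan is to show that every agent in $\ServedSoln{\Prof}{\ell}{\SolnOne}$ contributes at least $1/2$ to $\Weight{\Prof}{\ell}$; summing over the served set and discarding the other (nonnegative) terms then yields the bound directly. This closely mirrors Lemma~\ref{lem:weightLowerBound} but with the $\SolnOne$-served sets in place of the canonical-solution-served sets.

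First I would pin down the assignment rule of $\SolnOne$. Since $\Gap{\Prof}>0$, the facilities of $\SolnOne$ are 1-D and sit at distinct $x$-coordinates $\Fac{\Prof}{1}<\Fac{\Prof}{2}$, so for each $i$ the condition $i\in\ServedSoln{\Prof}{\ell}{\SolnOne}$ reduces to the position of $\Prof_i$ relative to the midpoint $m=(\Fac{\Prof}{1}+\Fac{\Prof}{2})/2$, with ties (i.e., $\Prof_i=m$) assigned to facility $1$ by the convention stated after the definition of $\ServedSoln{\Prof}{1}{\Soln}$.

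Next I would unpack $\Frac{\Prof}{2}{i}$ as a function of $\Prof_i$: it is nondecreasing and piecewise linear, equal to $0$ for $\Prof_i\leq\Fac{\Prof}{1}$, rising linearly with slope $1/\Gap{\Prof}$ between $\Fac{\Prof}{1}$ and $\Fac{\Prof}{2}$, and equal to $1$ for $\Prof_i\geq\Fac{\Prof}{2}$. It therefore takes the value $1/2$ exactly at $\Prof_i=m$, so $\Frac{\Prof}{2}{i}\geq 1/2$ whenever $\Prof_i\geq m$; using $\Frac{\Prof}{1}{i}=1-\Frac{\Prof}{2}{i}$ gives the symmetric statement $\Frac{\Prof}{1}{i}\geq 1/2$ whenever $\Prof_i\leq m$. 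The key coincidence driving the proof is that the side of $m$ on which $\Frac{\Prof}{\ell}{i}\geq 1/2$ holds is precisely the side on which agents are assigned to facility $\ell$ under $\SolnOne$.

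Finally I would conclude with the one-line calculation
\[
\Weight{\Prof}{\ell} \;=\; \sum_{i\in[n]}\Frac{\Prof}{\ell}{i} \;\geq\; \sum_{i\in\ServedSoln{\Prof}{\ell}{\SolnOne}}\Frac{\Prof}{\ell}{i} \;\geq\; \frac{|\ServedSoln{\Prof}{\ell}{\SolnOne}|}{2},
\]
where the first inequality uses nonnegativity of the terms and the second uses the pointwise bound from the preceding step. I do not foresee any substantive obstacle; the only thing to double-check is that the tie-breaking at $\Prof_i=m$ is consistent, and it is, since both $\Frac{\Prof}{1}{i}\geq 1/2$ and $i\in\ServedSoln{\Prof}{1}{\SolnOne}$ hold (with equality in the former) at that point.
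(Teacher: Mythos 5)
Your proof is correct and is exactly the argument the paper's one-line proof gestures at: each agent served by facility $\ell$ under $\SolnOne$ has $\Frac{\Prof}{\ell}{i}\geq 1/2$ because the midpoint $m$ is precisely the threshold where $\Frac{\Prof}{2}{i}=1/2$, and the $\SolnOne$-assignment is determined by which side of $m$ the agent lies on. You have simply spelled out the details that the paper leaves implicit.
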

\begin{proof}
Observe that each agent in $\ServedSoln{\Prof}{\ell}{\SolnOne}$
contributes at least $1/2$ to $\Weight{\Prof}{\ell}$.
\end{proof}

\begin{lemma}
\label{lem:weightUpperBoundHelper}
Let $\Prof$ be an $n$-profile such that $\Gap{\Prof}>0$ and let $\ell$ belong
to $\{1,2\}$.  Then
\[
|\Served{\Prof}{\ell}|
+\Cost{\Prof}/\Gap{\Prof}\geq\Weight{\Prof}{\ell}.
\]
\end{lemma}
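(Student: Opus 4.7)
The plan is to handle the case $\ell = 2$ explicitly and then invoke symmetry for $\ell = 1$. Set $\SolnOne = \Canonical{\Prof} = \MechOpt(\Prof)$, so that $\Cost{\Prof} = \CostSoln{\Prof}{\SolnOne} = \sum_i \Dist{\Prof_i}{\SolnOne}$ and the agent sets $\Served{\Prof}{1}, \Served{\Prof}{2}$ partition $[n]$ according to which of the two 1-D facilities of $\SolnOne$ serves them. The idea is to split the sum $\Weight{\Prof}{2} = \sum_{i \in [n]} \Frac{\Prof}{2}{i}$ along this same partition and bound the two contributions by $|\Served{\Prof}{2}|$ and $\Cost{\Prof}/\Gap{\Prof}$ respectively.

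For the contribution from $i \in \Served{\Prof}{2}$, I would simply use the trivial bound $\Frac{\Prof}{2}{i} \leq 1$ that follows immediately from the definition, yielding $\sum_{i \in \Served{\Prof}{2}} \Frac{\Prof}{2}{i} \leq |\Served{\Prof}{2}|$. For the contribution from $i \in \Served{\Prof}{1}$, first observe that $\Frac{\Prof}{2}{i} = 0$ whenever $\Prof_i \leq \Fac{\Prof}{1}$, so only agents with $\Prof_i > \Fac{\Prof}{1}$ matter. For such an agent, dropping the inner $\min$ upper-bounds the numerator and gives $\Frac{\Prof}{2}{i} \leq (\Prof_i - \Fac{\Prof}{1})/\Gap{\Prof}$. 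Now Lemma~\ref{lem:nest} supplies $\Fac{\Prof}{1} \geq \SolnOne_{1,1}$, which upgrades the bound to $\Frac{\Prof}{2}{i} \leq (\Prof_i - \SolnOne_{1,1})/\Gap{\Prof}$. Because $i \in \Served{\Prof}{1}$ and $\SolnOne$ is 1-D, we have $\Prof_i - \SolnOne_{1,1} \leq \Dist{\Prof_i}{\SolnOne_1} = \Dist{\Prof_i}{\SolnOne}$, so summing over $\Served{\Prof}{1}$ produces at most $\Cost{\Prof}/\Gap{\Prof}$. Adding the two contributions yields the claim for $\ell = 2$; the $\ell = 1$ case is by a mirrored argument using $\Fac{\Prof}{2} \leq \SolnOne_{2,1}$ from the same lemma.

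I do not expect a significant obstacle: the only subtlety is remembering that $\Frac{\Prof}{2}{i}$ is defined with a $\min(\Fac{\Prof}{2}, \Prof_i)$ in the numerator, which must be upper-bounded by $\Prof_i$ before invoking Lemma~\ref{lem:nest}. This single algebraic move is what lets us translate the geometric quantity $\Prof_i - \Fac{\Prof}{1}$ into the agent's individual cost under $\SolnOne$, and it is the only place where the optimality of $\SolnOne$ (via Lemma~\ref{lem:nest}) is used. Everything else is bookkeeping over the partition $\{\Served{\Prof}{1}, \Served{\Prof}{2}\}$.
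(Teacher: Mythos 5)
Your proof is correct and is essentially the same as the paper's: the paper partitions $[n]$ using the midpoint $x = (\SolnOne_{1,1}+\SolnOne_{2,1})/2$ of the two facilities, which is precisely your partition into $\Served{\Prof}{1}$ and $\Served{\Prof}{2}$, and then uses the same two bounds — the trivial $\Frac{\Prof}{2}{i}\leq 1$ on the right side and the drop-the-$\min$, then Lemma~\ref{lem:nest} chain on the left side. The only difference is cosmetic phrasing.
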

\begin{proof}
Below we address the case $\ell=2$. A similar argument holds for
the case $\ell=1$.

Let $\SolnOne$ denote $\MechOpt(\Prof)$ and let 
$x$ denote $(\SolnOne_{1,1}+\SolnOne_{2,1})/2$. We have
\begin{eqnarray*}
\Cost{\Prof}
 & \geq &
\sum_{i\in [n]:\Prof_i\leq x}\max(0,\Prof_i-\SolnOne_{1,1}) \\
 & \geq & 
\sum_{i\in [n]:\Prof_i\leq x}\max(0,\Prof_i-\Fac{\Prof}{1}) \\
 & \geq & 
\sum_{i\in [n]:\Prof_i\leq x}\max(0,\min(\Fac{\Prof}{2},\Prof_i)-\Fac{\Prof}{1}) \\
 & = &
\Gap{\Prof}\sum_{i\in [n]:\Prof_i\leq x}\Frac{\Prof}{2}{i}.
\end{eqnarray*}
Moreover,
\[
|\Served{\Prof}{2}|=|\{i\in[n]\mid\Prof_i>x\}|
\geq\sum_{i\in[n]:\Prof_i>x}\Frac{\Prof}{2}{i}.
\]
Since
$\Weight{\Prof}{2}=\sum_{1\leq i\leq n}\Frac{\Prof}{2}{i}$, we
conclude that the claim of the lemma holds for $\ell=2$.
\end{proof}

\begin{lemma}
\label{lem:weightUpperBound}
Let $\Prof$ be a profile such that $\Gap{\Prof}>0$, let $\ell$ belong
to $\{1,2\}$, and assume that $\Gap{\Prof}\geq
2\CostDensity{\Prof}{\ell}$. Then
\[
\Weight{\Prof}{\ell}\leq 2|\Served{\Prof}{\ell}|.
\]
\end{lemma}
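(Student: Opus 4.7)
The plan is to combine Lemma~\ref{lem:weightUpperBoundHelper} with the hypothesis $\Gap{\Prof}\geq 2\CostDensity{\Prof}{\ell}$, using the definition $\CostDensity{\Prof}{\ell}=\Cost{\Prof}/\Weight{\Prof}{\ell}$. The proof should be a short algebraic manipulation — no new geometric work is needed, since all the heavy lifting (bounding $\Weight{\Prof}{\ell}$ by $|\Served{\Prof}{\ell}|+\Cost{\Prof}/\Gap{\Prof}$) was done in the helper lemma.

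First, I would rewrite the hypothesis. Since $\CostDensity{\Prof}{\ell}=\Cost{\Prof}/\Weight{\Prof}{\ell}$, the assumption $\Gap{\Prof}\geq 2\CostDensity{\Prof}{\ell}$ is equivalent to
\[
\frac{\Cost{\Prof}}{\Gap{\Prof}}\leq\frac{\Weight{\Prof}{\ell}}{2}.
\]
Next, I would invoke Lemma~\ref{lem:weightUpperBoundHelper}, which gives
\[
\Weight{\Prof}{\ell}\leq|\Served{\Prof}{\ell}|+\frac{\Cost{\Prof}}{\Gap{\Prof}}.
\]
Substituting the preceding bound on $\Cost{\Prof}/\Gap{\Prof}$ yields
\[
\Weight{\Prof}{\ell}\leq|\Served{\Prof}{\ell}|+\frac{\Weight{\Prof}{\ell}}{2},
\]
and rearranging gives $\Weight{\Prof}{\ell}\leq 2|\Served{\Prof}{\ell}|$, as desired.

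There is essentially no obstacle here — the lemma is a direct two-line consequence of the previous helper lemma once the density hypothesis is rewritten as an inequality on $\Cost{\Prof}/\Gap{\Prof}$. The only thing one has to be mildly careful about is that $\Weight{\Prof}{\ell}>0$, which is needed for $\CostDensity{\Prof}{\ell}$ to be well-defined; this follows from $\Gap{\Prof}>0$ together with Lemma~\ref{lem:weightLowerBound}, since $\Served{\Prof}{\ell}$ is nonempty whenever the optimal solution uses two distinct facility locations.
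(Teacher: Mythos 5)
Your proof is correct and follows exactly the same route as the paper: rewrite the density hypothesis as $\Cost{\Prof}/\Gap{\Prof}\leq\Weight{\Prof}{\ell}/2$ and plug into Lemma~\ref{lem:weightUpperBoundHelper}. The only difference is that you spell out the algebra and the well-definedness of $\CostDensity{\Prof}{\ell}$, which the paper leaves implicit.
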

\begin{proof}
The inequality $\Gap{\Prof}\geq 2\CostDensity{\Prof}{\ell}$ implies
$\Cost{\Prof}/\Gap{\Prof}$ is at most
$\Weight{\Prof}{\ell}/2$. Hence the claim follows from
Lemma~\ref{lem:weightUpperBoundHelper}.
\end{proof}

\section{Facility Location in $\mathbb{R}^2$}
\label{sec:twoDim}

We now move on to defining 2-D mechanisms, both of which are
constant-strategyproof and constant-approximate, one of which is also
constant-Lipschitz.

\subsection{Mechanism $\Mech_3$}
\label{sec:mechanism3}

Consider the following 2-D generalization of mechanism~$\MechOpt$,
which we refer to as mechanism~$\MechOptTwo$.  Given a profile
$\Prof$, and letting $\SolnOne$ denote $\MechOpt(\Prof)$, we define
$$\MechOptTwo(\Prof)=((\SolnOne_{1,1},\Cost{\Prof}/|\Served{\Prof}{1}|),(\SolnOne_{2,1},\Cost{\Prof}/|\Served{\Prof}{2}|)).$$ In other words, $\MechOptTwo$ is the generalization of $\MechOpt$ in which each facility is vertically backed off of the $x$-axis by a distance equal to the optimal social cost divided by the number of agents served by that facility in $\MechOpt$.\footnote{If $\Prof$ is trivial, then we set $\Soln_{2,2}=0$.}

The following lemma, which establishes that $\MechOptTwo$ is constant-approximate, is straightforward to prove.

\begin{lemma}
\label{lem:costOptTwo}
For any profile $\Prof$, we have $\CostSoln{\Prof}{\MechOptTwo(\Prof)}\leq
3\Cost{\Prof}$.
\end{lemma}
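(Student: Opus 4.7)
The plan is to bound the social cost of $\MechOptTwo(\Prof)$ by assigning each agent to the same facility that serves them under the canonical solution $\SolnOne = \MechOpt(\Prof)$, rather than to the true nearest facility in $\MechOptTwo(\Prof)$. Since the individual cost $\Dist{\Prof_i}{\MechOptTwo(\Prof)}$ takes a minimum over facilities, this assignment-based accounting only overestimates, so it is a safe upper bound.

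First I would handle the trivial case, where $\Cost{\Prof}=0$ and by the footnote's convention both facilities lie on the $x$-axis, so $\CostSoln{\Prof}{\MechOptTwo(\Prof)}=0=3\Cost{\Prof}$. For the rest of the argument, assume $\Prof$ is nontrivial, so that Lemma~\ref{lem:active} yields $\Index{\Prof}\in[n-1]$ and both $|\Served{\Prof}{1}|$ and $|\Served{\Prof}{2}|$ are positive, which ensures the backoff quantities $\Cost{\Prof}/|\Served{\Prof}{\ell}|$ are well defined.

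The key computation is then that, because $\Dist{x}{p}=|x-p_1|+|p_2|$, for each $\ell\in\{1,2\}$ and each $i\in\Served{\Prof}{\ell}$ we have
\[
\Dist{\Prof_i}{\MechOptTwo(\Prof)_\ell}
 = |\Prof_i-\SolnOne_{\ell,1}|+\frac{\Cost{\Prof}}{|\Served{\Prof}{\ell}|}.
\]
Summing over $i\in\Served{\Prof}{\ell}$, the backoff terms contribute exactly $\Cost{\Prof}$ per facility. Summing across both facilities, the horizontal contributions add up to $\sum_{\ell}\sum_{i\in\Served{\Prof}{\ell}}|\Prof_i-\SolnOne_{\ell,1}|=\CostSoln{\Prof}{\SolnOne}=\Cost{\Prof}$, while the vertical contributions add up to $2\Cost{\Prof}$. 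The inequality
\[
\CostSoln{\Prof}{\MechOptTwo(\Prof)}
\leq\sum_{\ell\in\{1,2\}}\sum_{i\in\Served{\Prof}{\ell}}\Dist{\Prof_i}{\MechOptTwo(\Prof)_\ell}
= \Cost{\Prof}+2\Cost{\Prof} = 3\Cost{\Prof}
\]
then gives the claim. There is no real obstacle here; the main thing to watch is merely the bookkeeping that the total of the per-facility backoff terms telescopes to exactly $2\Cost{\Prof}$ (one $\Cost{\Prof}$ per facility), and that in the trivial case division by zero is avoided by convention.
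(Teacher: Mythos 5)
Your proof is correct, and it is the straightforward calculation the paper has in mind (the paper explicitly labels this lemma as ``straightforward to prove'' and omits the proof). Charging each agent to the facility that serves it under $\MechOpt(\Prof)$ gives a valid upper bound, the horizontal terms sum to $\Cost{\Prof}$ by definition, each facility's vertical offset was designed precisely so that its total charge over its served set telescopes to $\Cost{\Prof}$, and the trivial case is handled correctly by the footnote's convention.
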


We now present a sequence of lemmas to be used to establish that $\MechOptTwo$ is constant-strategyproof.

In \cref{lemma:fnhc-oc}, \cref{lem:hard-case}, and \cref{theorem:c1-c2-lambda}, let $\eps$ denote $(2-\sqrt{3})/3$.
For this choice of $\eps$, one may verify that
$(\frac{1}{3}-\eps)(1 - 3\eps) = 2\eps$.

\begin{lemma}\label{lemma:fnhc-oc}
	Let $\Prof$ be an $n$-profile, let $i$ belong to $[n]$, let $x$ be a
	real number, let $\ProfLie$ denote $\Subst{\Prof}{i}{x}$, 
	let $\Soln$ denote $\MechOptTwo(\Prof)$,
	let $\SolnLie$ denote $\MechOptTwo(\ProfLie)$, let $\SolnOne$ denote
	$((\Soln_{1,1},0), (\Soln_{2,1},0))$, let $\SolnOneLie$ denote
	$((\SolnLie_{1,1},0), (\SolnLie_{2,1},0))$, and assume that
	$\Dist{\Prof_i}{\SolnOneLie} \le \eps \Dist{\Prof_i}{\Soln}$. Then
	$\Cost{\ProfLie} \ge (\frac{1}{3} - \eps)
	\CostSoln{\Prof}{\Soln}$.
\end{lemma}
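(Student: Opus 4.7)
The plan is to lower bound $\Cost{\ProfLie}$ by evaluating $\CostSoln{\ProfLie}{\SolnOneLie}$ and then translate this back into a bound in terms of $\CostSoln{\Prof}{\Soln}$ via the hypothesis and Lemma~\ref{lem:costOptTwo}. The key structural observation is that $\SolnOneLie$ is by construction the $x$-axis projection of $\MechOptTwo(\ProfLie)$, which is just $\MechOpt(\ProfLie) = \Canonical{\ProfLie}$ lifted off the axis; hence $\SolnOneLie$ coincides with $\MechOpt(\ProfLie)$ as a 1-D solution, and so $\Cost{\ProfLie} = \CostSoln{\ProfLie}{\SolnOneLie}$.

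The first step would be to expand $\CostSoln{\ProfLie}{\SolnOneLie}$ using the fact that social cost is a permutation-invariant sum over agents. Since $\ProfLie$ differs from $\Prof$ only in that $\Prof_i$ has been replaced by $x$ (and resorted, which does not affect cost), the identity $\CostSoln{\ProfLie}{\SolnOneLie} = \CostSoln{\Prof}{\SolnOneLie} + \Dist{x}{\SolnOneLie} - \Dist{\Prof_i}{\SolnOneLie}$ follows immediately. Dropping the nonnegative $\Dist{x}{\SolnOneLie}$ term and applying the optimality bound $\CostSoln{\Prof}{\SolnOneLie} \ge \Cost{\Prof}$ would then yield $\Cost{\ProfLie} \ge \Cost{\Prof} - \Dist{\Prof_i}{\SolnOneLie}$.

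To close the argument I would invoke the hypothesis $\Dist{\Prof_i}{\SolnOneLie} \le \eps\,\Dist{\Prof_i}{\Soln}$, followed by the trivial domination $\Dist{\Prof_i}{\Soln} \le \CostSoln{\Prof}{\Soln}$ (the left-hand side is one nonnegative summand of the right). Combining with Lemma~\ref{lem:costOptTwo}, rearranged as $\Cost{\Prof} \ge \frac{1}{3}\,\CostSoln{\Prof}{\Soln}$, gives exactly $\Cost{\ProfLie} \ge (\frac{1}{3} - \eps)\,\CostSoln{\Prof}{\Soln}$. I do not anticipate any real obstacle here, as every step is either a definition unpacking, the single-agent substitution identity, a trivial term-vs-sum bound, or a direct appeal to Lemma~\ref{lem:costOptTwo}. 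The specific value $\eps = (2-\sqrt{3})/3$ and the identity $(\frac{1}{3} - \eps)(1 - 3\eps) = 2\eps$ play no role in this lemma---only $\eps < \frac{1}{3}$ is needed---so they are evidently tuned for combining this estimate with a complementary bound in the subsequent lemma.
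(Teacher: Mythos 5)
Your proposal is correct and follows essentially the same chain of inequalities as the paper's own proof: the substitution identity $\CostSoln{\ProfLie}{\SolnOneLie} = \CostSoln{\Prof}{\SolnOneLie} + \Dist{x}{\SolnOneLie} - \Dist{\Prof_i}{\SolnOneLie}$, dropping the nonnegative term, optimality of $\Canonical{\Prof}$, the hypothesis, the term-versus-sum bound, and Lemma~\ref{lem:costOptTwo}. You also correctly observe that the specific value of $\eps$ is irrelevant here beyond $\eps < 1/3$, and you make explicit the step $\Cost{\ProfLie} = \CostSoln{\ProfLie}{\SolnOneLie}$ that the paper leaves implicit.
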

\begin{proof}
	We have
	\begin{eqnarray*}
		\Cost{\ProfLie}
		& = &
		\CostSoln{\Prof}{\SolnOneLie} + \Dist{x}{\SolnOneLie}
		- \Dist{\Prof_i}{\SolnOneLie} \\
		& \ge &
		\Cost{\Prof} + 0 -\eps \Dist{\Prof_i}{\Soln} \\
		& \ge &
		\frac{\CostSoln{\Prof}{\Soln}}{3} - \eps \Dist{\Prof_i}{\Soln} \\
		& \ge &
		\left(\frac{1}{3} - \eps \right) \CostSoln{\Prof}{\Soln},
	\end{eqnarray*}
	where the second inequality follows from \cref{lem:costOptTwo}. 
\end{proof}

\begin{lemma}\label{lem:hard-case}
	Let $\Prof$ be an $n$-profile, let $i$ belong to $[n]$, let $x$ be a
	real number, let $\ProfLie$ denote $\Subst{\Prof}{i}{x}$, 
	let $\Soln$ denote $\MechOptTwo(\Prof)$,
	let $\SolnLie$ denote $\MechOptTwo(\ProfLie)$, let $\SolnOneLie$ denote
	$((\SolnLie_{1,1},0),(\SolnLie_{2,1},0))$, let $\ell$ belong to
	$\Set{1,2}$, and assume that $\Dist{\Prof_i}{\SolnOneLie_\ell} \le
	\eps \Dist{\Prof_i}{\Soln}$. Then $\SolnLie_{\ell,2} \ge \eps
	\Dist{\Prof_i}{\Soln}$.
\end{lemma}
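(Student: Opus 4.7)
The plan is to argue by contradiction: suppose $\SolnLie_{\ell,2} < \eps d$, where $d$ denotes $\Dist{\Prof_i}{\Soln}$. Writing $m$ for $|\Served{\ProfLie}{\ell}|$, the definition of $\MechOptTwo$ gives $\SolnLie_{\ell,2} = \Cost{\ProfLie}/m$, so the assumption is equivalent to $\Cost{\ProfLie} < m\eps d$. At the same time, since the hypothesis $\Dist{\Prof_i}{\SolnOneLie_\ell} \le \eps d$ trivially implies $\Dist{\Prof_i}{\SolnOneLie} \le \eps d$, the conclusion of \cref{lemma:fnhc-oc} applies and yields $\Cost{\ProfLie} \ge (\tfrac{1}{3}-\eps)\CostSoln{\Prof}{\Soln}$. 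The strategy is to combine these inequalities via a Markov-type lower bound on $\CostSoln{\Prof}{\Soln}$.

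To obtain that lower bound, first apply Markov's inequality inside $\Served{\ProfLie}{\ell}$: since the partial sum $\sum_{j \in \Served{\ProfLie}{\ell}} |\ProfLie_j - \SolnLie_{\ell,1}|$ is at most $\Cost{\ProfLie} < m\eps d$, fewer than $\tfrac{m\eps d}{2\eps d} = \tfrac{m}{2}$ of its $m$ summands are at least $2\eps d$. Letting $Y$ denote the set of $j \in \Served{\ProfLie}{\ell}$ with $|\ProfLie_j - \SolnLie_{\ell,1}| < 2\eps d$, this gives $|Y| > m/2$, and hence $|Y| \ge (m+1)/2$ by integrality. Next, I would charge each $j \in Y$ to a distinct original agent of $\Prof$ contributing at least $(1-3\eps)d$ to $\CostSoln{\Prof}{\Soln}$. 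If the agent in slot $j$ of $\ProfLie$ is some $j' \neq i$, then $\Prof_{j'} = \ProfLie_j$, and two triangle inequalities (combining $|\ProfLie_j - \SolnLie_{\ell,1}| < 2\eps d$ with $|\Prof_i - \SolnLie_{\ell,1}| \le \eps d$) yield $|\Prof_{j'} - \Prof_i| < 3\eps d$; a third triangle inequality (in the plane) then gives $\Dist{\Prof_{j'}}{\Soln} \ge d - |\Prof_{j'} - \Prof_i| > (1-3\eps)d$. If the agent in slot $j$ is agent $i$ itself, then its location in $\Prof$ is $\Prof_i$ and $\Dist{\Prof_i}{\Soln} = d \ge (1-3\eps)d$. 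Summing these contributions gives $\CostSoln{\Prof}{\Soln} \ge \tfrac{m+1}{2}(1-3\eps)d$.

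Substituting into the bound from \cref{lemma:fnhc-oc} and using the identity $(\tfrac{1}{3}-\eps)(1-3\eps) = 2\eps$ noted just before the lemma, we obtain $\Cost{\ProfLie} \ge (\tfrac{1}{3}-\eps)\cdot\tfrac{m+1}{2}(1-3\eps)d = (m+1)\eps d$, contradicting $\Cost{\ProfLie} < m\eps d$. The main fiddly point is the agent relabeling induced by $\Subst{\Prof}{i}{x}$; as long as each $j \in Y$ is charged to a distinct original agent as above (which is automatic once we separate the at-most-one slot occupied by agent $i$ from the others), the bookkeeping is routine.
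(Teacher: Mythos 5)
Your proof is correct and follows essentially the same route as the paper's: partition the served set into agents that land near $\Prof_i$ (your $Y$) and those that do not, use the triangle inequality plus Lemma~\ref{lemma:fnhc-oc} on the near set, use the per-agent cost lower bound on the far set, and close with the identity $(\tfrac{1}{3}-\eps)(1-3\eps)=2\eps$. The only difference is cosmetic: you phrase the argument as a contradiction and recover the near/far split via a Markov-style count, whereas the paper argues directly by bounding $\Cost{\ProfLie}$ from below by $2\eps\max(|A|,|B|)\Dist{\Prof_i}{\Soln}\ge\eps|J|\Dist{\Prof_i}{\Soln}$; these two framings encode the same estimates.
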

\begin{proof}
    Let $u$ be a bijection from $[n]$ to $[n]$ such that
	$\ProfLie_{u(j)}=\Prof_j$ for all $j$ in $[n]\setminus\{i\}$ and
	$\ProfLie_{u(i)}=x$, let $J$ denote $\{j\in[n]\mid
	u(j)\in\ServedSoln{\ProfLie}{\ell}{\SolnOneLie}\}$, let $A$ denote
	$\{j\in J\mid\Dist{\ProfLie_{u(j)}}{\SolnOneLie}\le
	2\eps\Dist{\Prof_i}{\Soln}\}$, and let $B$ denote $J\setminus A$.
	
    For any $j$ in $A$, we claim that $\Dist{\Prof_j}{\Soln}\ge
	(1-3\eps)\Dist{\Prof_i}{\Soln}$.  If $i=j$, the claim holds trivially.
	Suppose $i\neq j$.  Since
	$|\SolnOneLie_\ell-\Prof_i|=\Dist{\Prof_i}{\SolnOneLie_\ell}
	\le\eps\Dist{\Prof_i}{\Soln}$ and
	$|\SolnOneLie_\ell-\ProfLie_{u(j)}|=|\SolnOneLie_\ell-\Prof_j|\le
	2\eps \Dist{\Prof_i}{\Soln}$, the triangle inequality implies
	$|\Prof_i-\Prof_j|\le 3\eps\Dist{\Prof_i}{\Soln}$.  Since
	$\Dist{\Prof_j}{\Soln}$ is at least
	$\Dist{\Prof_i}{\Soln}-|\Prof_i-\Prof_j|$, the claim follows.
	
	The preceding claim implies $\CostSoln{\Prof}{\Soln} \ge \sum_{j \in
		A} \Dist{\Prof_j}{\Soln} \ge (1-
	3\eps)|A|\cdot\Dist{\Prof_i}{\Soln}$. 
    Since $\Dist{\Prof_i}{\SolnOneLie} \le \Dist{\Prof_i}{\SolnOneLie_\ell} \le  \eps \Dist{\Prof_i}{\Soln}$,
    Lemma~\ref{lemma:fnhc-oc}
	implies 
    \[
    \Cost{\ProfLie} \ge \left(\frac{1}{3}-\eps\right)
	(1- 3\eps)|A|\cdot\Dist{\Prof_i}{\Soln}.
    \]
    The definition of $B$
	implies
    $\Dist{\ProfLie_{u(j)}}{\SolnOneLie} \ge 2\eps
	\Dist{\Prof_i}{\Soln}$ for all $j$ in $B$ and hence 
	$\Cost{\ProfLie} \ge \sum_{j \in B}
	\Dist{\ProfLie_{u(j)}}{\SolnOneLie} \geq 2\eps |B| \cdot
	\Dist{\Prof_i}{\Soln}$. Thus $$\Cost{\ProfLie} \ge
	\max\left[\left(\frac{1}{3}-\eps\right) (1- 3\eps)|A|, 2\eps
	|B|\right]\allowbreak\Dist{\Prof_i}{\Soln}.$$  
    
    Recall that $(\frac{1}{3}-\eps)(1 -
	3\eps) = 2\eps$.  Thus
	\[
	\Cost{\ProfLie} \ge
	2\eps\max(|A|,|B|)\Dist{\Prof_i}{\Soln} \ge
	\eps|J|\cdot\Dist{\Prof_i}{\Soln}
	\]
	and hence $\SolnLie_{\ell,2} = \Cost{\ProfLie}/|J| \ge
	\eps\Dist{\Prof_i}{\Soln}$.
\end{proof}

\begin{lemma}\label{theorem:c1-c2-lambda}
Mechanism $\MechOptTwo$ is $\frac{1}{\eps}$-strategyproof.
\end{lemma}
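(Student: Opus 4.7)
The plan is to derive the strategyproofness bound by case analysis, using Lemma~\ref{lem:hard-case} as the main lever. To unpack what we must show: letting $\Soln=\MechOptTwo(\Prof)$ and $\SolnLie=\MechOptTwo(\ProfLie)$ with $\ProfLie=\Subst{\Prof}{i}{x}$, we must establish
\[
\eps\cdot\Dist{\Prof_i}{\Soln}\le\Dist{\Prof_i}{\SolnLie}.
\]
Since $\Dist{\Prof_i}{\SolnLie}=\min_{\ell\in\{1,2\}}\Dist{\Prof_i}{\SolnLie_\ell}$, I will fix $\ell\in\Set{1,2}$ achieving this minimum and prove the inequality with $\SolnLie_\ell$ in place of $\SolnLie$.

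The key structural observation is that the Manhattan distance decomposes as
\[
\Dist{\Prof_i}{\SolnLie_\ell}
=|\Prof_i-\SolnLie_{\ell,1}|+|\SolnLie_{\ell,2}|
=\Dist{\Prof_i}{\SolnOneLie_\ell}+|\SolnLie_{\ell,2}|,
\]
where $\SolnOneLie$ is the 1-D projection used in Lemma~\ref{lem:hard-case}. So $\Dist{\Prof_i}{\SolnLie_\ell}$ is at least each of its two nonnegative summands, which suggests a natural dichotomy.

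I then split into two cases. In the easy case, $\Dist{\Prof_i}{\SolnOneLie_\ell}>\eps\cdot\Dist{\Prof_i}{\Soln}$, and the inequality is immediate from $\Dist{\Prof_i}{\SolnLie_\ell}\ge\Dist{\Prof_i}{\SolnOneLie_\ell}$. In the complementary case, $\Dist{\Prof_i}{\SolnOneLie_\ell}\le\eps\cdot\Dist{\Prof_i}{\Soln}$, which is exactly the hypothesis of Lemma~\ref{lem:hard-case}; that lemma hands us the required backoff $\SolnLie_{\ell,2}\ge\eps\cdot\Dist{\Prof_i}{\Soln}$, and then
\[
\Dist{\Prof_i}{\SolnLie_\ell}\ge|\SolnLie_{\ell,2}|\ge\eps\cdot\Dist{\Prof_i}{\Soln},
\]
as desired.

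There is essentially no obstacle here, since Lemma~\ref{lem:hard-case} has already done the heavy lifting; the only care needed is to make sure the case hypothesis is stated in terms of the projected distance $\Dist{\Prof_i}{\SolnOneLie_\ell}$ (not $\Dist{\Prof_i}{\SolnLie_\ell}$), so that Lemma~\ref{lem:hard-case} applies verbatim, and to handle the degenerate situation $\Dist{\Prof_i}{\Soln}=0$ (in which case the claim is trivial). Combining the two cases yields $\eps\cdot\Dist{\Prof_i}{\Soln}\le\Dist{\Prof_i}{\SolnLie}$, which is equivalent to the stated $\tfrac{1}{\eps}$-strategyproofness bound.
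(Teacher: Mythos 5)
Your proposal is correct and matches the paper's argument: both select the facility $\ell$ nearest $\Prof_i$ under $\SolnLie$, apply \cref{lem:hard-case} (being careful to state the hypothesis in terms of $\Dist{\Prof_i}{\SolnOneLie_\ell}$) to obtain $\SolnLie_{\ell,2}\geq\eps\,\Dist{\Prof_i}{\Soln}$, and conclude via the Manhattan-distance decomposition $\Dist{\Prof_i}{\SolnLie_\ell}=\Dist{\Prof_i}{\SolnOneLie_\ell}+|\SolnLie_{\ell,2}|$. The paper phrases this as a proof by contradiction while you use a direct case split, but the two are logically interchangeable.
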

\begin{proof}
Let $\Prof$ be an $n$-profile, let $i$ belong to $[n]$, let $x$ be a
real number, let $\ProfLie$ denote $\Subst{\Prof}{i}{x}$, let $\Soln$
denote $\MechOptTwo(\Prof)$, let $\SolnLie$ denote $\MechOptTwo(\ProfLie)$, and
let $\SolnOneLie$ denote $((\SolnLie_{1,1},0),
(\SolnLie_{2,1},0))$. We need to prove that $\eps\Dist{\Prof_i}{\Soln} \le \Dist{\Prof_i}{\SolnLie}$. 
Assume for the sake of contradiction that
$\Dist{\Prof_i}{\SolnLie}<\eps\Dist{\Prof_i}{\Soln}$ and let $\ell$
be an element of $\Set{1,2}$ such that $i$ belongs to
$\ServedSoln{\Prof}{\ell}{\SolnLie}$. Then
\[
\Dist{\Prof_i}{\SolnOneLie_\ell} \leq
\Dist{\Prof_i}{\SolnLie_\ell} =
\Dist{\Prof_i}{\SolnLie} <
\eps\Dist{\Prof_i}{\Soln}
\]
and hence Lemma~\ref{lem:hard-case} implies
$\SolnLie_{\ell,2}\geq\eps\Dist{\Prof_i}{\Soln}$. It follows
that $\Dist{\Prof_i}{\SolnLie}\geq\eps\Dist{\Prof_i}{\Soln}$,
a contradiction.
\end{proof}

\begin{theorem}
    \label{M3_Thm}
    Mechanism $\MechOptTwo$ is constant-approximate and constant-strategyproof.
\end{theorem}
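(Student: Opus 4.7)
The plan is to derive this theorem as an immediate consequence of the two preceding lemmas, since mechanism $\MechOptTwo$ has already been analyzed from both an approximation and a strategyproofness perspective.

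First, for the approximation guarantee, I would simply cite Lemma~\ref{lem:costOptTwo}, which asserts that $\CostSoln{\Prof}{\MechOptTwo(\Prof)} \leq 3\Cost{\Prof}$ for every profile $\Prof$. This is precisely $\Factor{n}$-approximation with the absolute constant $\Factor{n} = 3$, so $\MechOptTwo$ is constant-approximate in the sense defined in the preliminaries.

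Second, for the strategyproofness guarantee, I would cite Lemma~\ref{theorem:c1-c2-lambda}, which asserts that $\MechOptTwo$ is $(1/\eps)$-strategyproof for $\eps = (2-\sqrt{3})/3$. Since $1/\eps = 3/(2-\sqrt{3})$ is a fixed positive constant independent of $n$, this meets the definition of constant-strategyproofness.

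There is no real obstacle at this stage: the substantive work has already been carried out in Lemma~\ref{lem:costOptTwo} (via the three-way decomposition of agents around the pair of facilities after vertical backoff) and in Lemma~\ref{theorem:c1-c2-lambda} (via the delicate case analysis of Lemma~\ref{lem:hard-case}, where $\eps$ is chosen precisely so that $(\tfrac{1}{3} - \eps)(1 - 3\eps) = 2\eps$). The theorem is simply the conjunction of these two results, so the proof amounts to a one-line invocation of each.
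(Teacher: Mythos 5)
Your proposal is correct and matches the paper's proof exactly: the theorem is obtained by citing Lemma~\ref{lem:costOptTwo} for constant-approximation and Lemma~\ref{theorem:c1-c2-lambda} for constant-strategyproofness. No further comment is needed.
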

\begin{proof}
    The theorem follows immediately from \cref{lem:costOptTwo} and \cref{theorem:c1-c2-lambda}.
\end{proof}

The proof of \cref{negexample} can be easily adapted to show that $\MechOptTwo$ is neither strategyproof nor constant-Lipschitz.

\subsection{Mechanism $\Mech_4$}
\label{sec:mechanism4}

Our final mechanism, $\MechLipTwo$, is a 2-D generalization of mechanism~$\MechLip$. In order to define $\MechLipTwo$, we first introduce the following definitions for any profile $\Prof$ and any $\ell$ in $\{1,2\}$:
$\BackoffHelper{\Prof}{\ell}$ denotes $0$ if $\Gap{\Prof}=0$ and
$\min(\Gap{\Prof},2\CostDensity{\Prof}{\ell})$ otherwise; $\Backoff{\Prof}{\ell}$ denotes
$\max(8\Cost{\Prof}/n,\BackoffHelper{\Prof}{\ell})$.
 
Given a profile $\Prof$, and
letting $\SolnOne$ denote $\MechLip(\Prof)$, we define
$\MechLipTwo(\Prof)$ as the solution $\Soln$ such that
$\Soln_{\ell,1}=\SolnOne_{\ell,1}$ and
$\Soln_{\ell,2}=\Backoff{\Prof}{\ell}$ for all $\ell$ in $\{1,2\}$.

\begin{lemma}
\label{lem:costLipTwo}
For any $n$-profile $\Prof$, we have $\CostSoln{\Prof}{\MechLipTwo(\Prof)}\leq
15\Cost{\Prof}$.
\end{lemma}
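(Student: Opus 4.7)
The plan is to upper-bound $\CostSoln{\Prof}{\MechLipTwo(\Prof)}$ by the cost of the underlying 1-D mechanism $\MechLip$ plus the extra cost each agent incurs from the vertical backoff, then apply Lemma~\ref{lem:costLip} together with a case analysis on which argument of the $\max$ is active in the definition of $\Backoff{\Prof}{\ell}$.

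Writing $\Soln$ for $\MechLipTwo(\Prof)$, $\SolnOne$ for $\MechLip(\Prof)$, and $T_\ell$ for $\ServedSoln{\Prof}{\ell}{\SolnOne}$, I would first observe that $\Soln_\ell$ agrees with $\SolnOne_\ell$ on the $x$-coordinate and sits at height $\Backoff{\Prof}{\ell}$, so any agent in $T_\ell$ routing to facility $\ell$ under $\Soln$ pays exactly $\Backoff{\Prof}{\ell}$ more than under $\SolnOne$. Summing over agents yields
$$\CostSoln{\Prof}{\Soln} \le \CostSoln{\Prof}{\SolnOne} + \sum_{\ell \in \Set{1,2}} |T_\ell|\,\Backoff{\Prof}{\ell},$$
and Lemma~\ref{lem:costLip} bounds the first summand by $3\Cost{\Prof}$. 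The task then reduces to showing that the backoff sum is at most $12\Cost{\Prof}$.

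For the backoff sum, I would unpack $\Backoff{\Prof}{\ell} = \max(8\Cost{\Prof}/n,\BackoffHelper{\Prof}{\ell})$ and case-split on which argument is active for each $\ell$. If the first argument dominates, then $|T_\ell|\Backoff{\Prof}{\ell} = 8|T_\ell|\Cost{\Prof}/n$, which is bounded by $8\Cost{\Prof}$ individually and sums to $8\Cost{\Prof}$ across both facilities since $|T_1|+|T_2|=n$. If the second dominates, then $\Gap{\Prof}>0$, so $\BackoffHelper{\Prof}{\ell} \le 2\CostDensity{\Prof}{\ell} = 2\Cost{\Prof}/\Weight{\Prof}{\ell}$, and Lemma~\ref{lem:weightLowerBoundLip} provides $|T_\ell|/\Weight{\Prof}{\ell} \le 2$, giving $|T_\ell|\Backoff{\Prof}{\ell} \le 4\Cost{\Prof}$ per facility. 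Combining the two facilities, the worst scenario is the mixed one, in which the backoff sum is at most $8\Cost{\Prof}+4\Cost{\Prof} = 12\Cost{\Prof}$; added to the $3\Cost{\Prof}$ term this gives the claimed $15\Cost{\Prof}$.

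The main obstacle I anticipate is keeping the constants tight. Using only the loose bound $\max(a,b)\le a+b$ on the backoff would yield $16\Cost{\Prof}$ instead of $12\Cost{\Prof}$, so the explicit case-split on which side of the $\max$ dominates is essential. A brief handling of edge cases is also needed: when $\Gap{\Prof}=0$ the quantity $\CostDensity{\Prof}{\ell}$ is undefined, but $\BackoffHelper{\Prof}{\ell}=0$ by definition, so only the first case applies; and when $\Cost{\Prof}=0$ the statement is immediate.
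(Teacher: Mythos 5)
Your proposal is correct and takes essentially the same route as the paper: the same decomposition $\CostSoln{\Prof}{\Soln}\le\CostSoln{\Prof}{\SolnOne}+\sum_\ell|T_\ell|\Backoff{\Prof}{\ell}$, the same use of Lemma~\ref{lem:costLip} for the $3\Cost{\Prof}$ term, and the same use of Lemma~\ref{lem:weightLowerBoundLip} to convert $2\CostDensity{\Prof}{\ell}$ into $4\Cost{\Prof}/|T_\ell|$. The paper keeps the $\max$ form $\max(8\Cost{\Prof}|T_\ell|/n,\,4\Cost{\Prof})$ per facility and leaves the final $12\Cost{\Prof}$ bound implicit, whereas you spell out the case split on which argument of the $\max$ is active and note that the mixed case is the binding one; that is the same argument made explicit, and you were right that the crude bound $\max(a,b)\le a+b$ would only give $16\Cost{\Prof}$. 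One small point of hygiene: the justification that each agent in $T_\ell$ ``pays exactly $\Backoff{\Prof}{\ell}$ more'' isn't literally what happens (the agent might switch facilities under $\Soln$), but since switching can only reduce its cost, the inequality you wrote is a valid upper bound, which is all that's needed.
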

\begin{proof}
Let $\SolnOne$ denote $\MechLip(\Prof)$. 
Observe that
\begin{equation}
\label{eqn:costLipTwo}
\CostSoln{\Prof}{\MechLipTwo(\Prof)}\leq\CostSoln{\Prof}{\SolnOne}
+\sum_{\ell\in\{1,2\}}\Backoff{\Prof}{\ell}|\ServedSoln{\Prof}{\ell}{\SolnOne}|.
\end{equation}
Since $\CostSoln{\Prof}{\SolnOne}\leq 3\Cost{\Prof}$ by
Lemma~\ref{lem:costLip}, it is sufficient to prove that the sum
in Equation~(\ref{eqn:costLipTwo}) is at most $12\Cost{\Prof}$.  If
$\Gap{\Prof}=0$, then $\Backoff{\Prof}{\ell}=8\Cost{\Prof}/n$ for all
$\ell$ in $\{1,2\}$ and hence this sum is equal to $8\Cost{\Prof}$.

For the remainder of the proof, assume that $\Gap{\Prof}>0$.  For any
$\ell$ in $\{1,2\}$, we have 
\begin{eqnarray*}
\Backoff{\Prof}{\ell}|\ServedSoln{\Prof}{\ell}{\SolnOne}|
& \leq &
\max(8\Cost{\Prof}/n,2\CostDensity{\Prof}{\ell})
|\ServedSoln{\Prof}{\ell}{\SolnOne}|\\
& \leq &
\max(8\Cost{\Prof}/n,4\Cost{\Prof}/|\ServedSoln{\Prof}{\ell}{\SolnOne}|)
|\ServedSoln{\Prof}{\ell}{\SolnOne}|\\
& = &
\max(8\Cost{\Prof}|\ServedSoln{\Prof}{\ell}{\SolnOne}|/n,4\Cost{\Prof}),
\end{eqnarray*}
where the first inequality follows from the definition of
$\Backoff{\Prof}{\ell}$ and the second inequality follows from
Lemma~\ref{lem:weightLowerBoundLip}.  Thus the sum in
Equation~(\ref{eqn:costLipTwo}) is at most $12\Cost{\Prof}$.
\end{proof}

We next define a core lemma which we will use to relate the constant-strategyproofness of mechanism $\MechOptTwo$ to that of mechanism $\MechLipTwo$.

\begin{lemma}\label{lem:oc-relate}
Let $\Mech$ and $\Mech^*$ be mechanisms, let $a_1$ and $a_2$ be
positive real numbers such that $a_1\Dist{x}{\Mech(\Prof)} \le
\Dist{x}{\Mech^*(\Prof)} \le a_2\Dist{x}{\Mech(\Prof)}$ for all
profiles $\Prof$ and all real numbers $x$, and assume that $M$ is
$\lambda$-strategyproof. Then $M^*$ is $\lambda a_2/a_1$-strategyproof.
\end{lemma}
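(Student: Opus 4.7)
The plan is to unwind the definition of $\lambda a_2/a_1$-strategyproofness for $\Mech^*$ and obtain the required inequality via a three-link chain that passes through $\Mech$, invoking the sandwich hypothesis once at each endpoint profile.

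Concretely, I would fix an $n$-profile $\Prof$, an index $i \in [n]$, and a real number $x$, and let $\ProfLie$ denote $\Subst{\Prof}{i}{x}$. The goal is to establish $\Dist{\Prof_i}{\Mech^*(\Prof)} \le (\lambda a_2 / a_1)\, \Dist{\Prof_i}{\Mech^*(\ProfLie)}$. I would proceed in three steps: first, apply the upper half of the sandwich hypothesis at profile $\Prof$ with query point $\Prof_i$ to obtain $\Dist{\Prof_i}{\Mech^*(\Prof)} \le a_2 \Dist{\Prof_i}{\Mech(\Prof)}$; second, apply the $\lambda$-strategyproofness of $\Mech$ at $\Prof$, $i$, $x$ to obtain $\Dist{\Prof_i}{\Mech(\Prof)} \le \lambda \Dist{\Prof_i}{\Mech(\ProfLie)}$; third, apply the lower half of the sandwich hypothesis at profile $\ProfLie$ with query point $\Prof_i$ to obtain $\Dist{\Prof_i}{\Mech(\ProfLie)} \le (1/a_1)\,\Dist{\Prof_i}{\Mech^*(\ProfLie)}$. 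Composing these three inequalities yields the desired bound with factor $\lambda a_2 / a_1$.

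There is no substantive obstacle here: the argument is just a chain of three inequalities, with no case analysis or quantitative slack to be tracked. The only point requiring care is orienting the two sandwich bounds correctly---the upper bound is invoked at $\Prof$ and the lower bound is invoked at $\ProfLie$, so that the two constants combine as $a_2 / a_1$ rather than cancelling or appearing inverted. It is also worth noting that the sandwich hypothesis is stated for arbitrary profiles and arbitrary real numbers, so using $\Prof_i$ as the query point against $\Mech(\ProfLie)$ is legitimate even though $\Prof_i$ need not equal any component of $\ProfLie$.
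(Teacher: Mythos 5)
Your proposal is correct and is essentially identical to the paper's proof: the same three-inequality chain, applying the upper sandwich bound at $\Prof$, the $\lambda$-strategyproofness of $\Mech$, and the lower sandwich bound at $\ProfLie$, all with query point $\Prof_i$.
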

\begin{proof}
Let $\Prof$ be an $n$-profile, let $i$ belong to $[n]$,
let $x$ be a real number, and let $\ProfLie$ denote
$\Subst{\Prof}{i}{x}$. We have 

\begin{eqnarray*}
\Dist{\Prof_i}{\Mech^*(\Prof)} 
&\le& a_2\cdot\Dist{\Prof_i}{\Mech(\Prof)} \\
&\le& \lambda a_2\cdot\Dist{\Prof_i}{\Mech(\ProfLie)} \\
&\le& (\lambda a_2/a_1)\Dist{\Prof_i}{\Mech^*(\ProfLie)}. 
\end{eqnarray*}
\end{proof}

We next show that mechanism $\MechLipTwo$ is constant-strategyproof. The following simple fact is used in the proof of
Lemma~\ref{thm:spLip} below.

\begin{fact}
\label{fact:spHelper}
Let $p$ and $q$ be two points such that $p_2$ and $q_2$ are positive
and let $z$ be a real number such that $|p_1-q_1|\leq z$. Then
\[
\min\left(1,\frac{p_2}{q_2+z}\right)
\leq
\frac{\Dist{x}{p}}{\Dist{x}{q}}
\leq
\max\left(1,\frac{p_2+z}{q_2}\right)
\]
for all real numbers $x$.
\end{fact}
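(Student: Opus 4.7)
The plan is to reduce the two-point inequality to a one-variable optimization by setting $a=|x-p_1|$ and $b=|x-q_1|$. Since $p_2$ and $q_2$ are positive, the definition $\Dist{x}{p}=|x-p_1|+|p_2|$ simplifies to $\Dist{x}{p}=a+p_2$ and $\Dist{x}{q}=b+q_2$, so the claim becomes
\[
\min\Bigl(1,\tfrac{p_2}{q_2+z}\Bigr)\leq\frac{a+p_2}{b+q_2}\leq\max\Bigl(1,\tfrac{p_2+z}{q_2}\Bigr).
\]
The triangle inequality gives $|a-b|\leq|p_1-q_1|\leq z$, so for each $b\geq 0$ we have $\max(0,b-z)\leq a\leq b+z$. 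The problem therefore reduces to maximizing and minimizing a simple rational expression in $a$ and $b$ over this region.

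For the upper bound, I substitute $a\leq b+z$ to obtain $\frac{a+p_2}{b+q_2}\leq\frac{b+z+p_2}{b+q_2}$. If $z+p_2\leq q_2$ the right-hand side is at most $1$ and we are done. Otherwise the numerator exceeds the denominator, so the right-hand side is a decreasing function of $b$ (its derivative in $b$ has sign $q_2-z-p_2<0$), and its supremum over $b\geq 0$ is attained at $b=0$, yielding $(p_2+z)/q_2$. In either case the ratio is at most $\max(1,(p_2+z)/q_2)$.

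For the lower bound, the extra subtlety is that $a\geq\max(0,b-z)$, so I split on whether $b\geq z$ or $b<z$. When $b\geq z$, I use $a\geq b-z$ to get $\frac{a+p_2}{b+q_2}\geq\frac{b-z+p_2}{b+q_2}$; if $p_2\geq q_2+z$ this already exceeds $1$, and otherwise the expression is increasing in $b$ with infimum $p_2/(q_2+z)$ attained at $b=z$. When $b<z$, I use only $a\geq 0$ to get $\frac{a+p_2}{b+q_2}\geq\frac{p_2}{b+q_2}>\frac{p_2}{q_2+z}$. Combining the two ranges of $b$ shows that the ratio is at least $\min(1,p_2/(q_2+z))$ unconditionally.

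The main obstacle is simply keeping the case analysis tidy: the upper and lower bounds each have a regime in which the bound is trivially $1$ and a regime in which one of the boundary values of $b$ is the extremum, and the $\max$/$\min$ on the right-hand side is exactly what stitches these cases together. Once $a$ and $b$ are introduced and the triangle inequality $|a-b|\leq z$ is invoked, the remaining algebra is a one-line monotonicity check in each case.
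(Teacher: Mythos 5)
Your proof is correct. The paper states this as a \emph{Fact} without supplying a proof, so there is no reference argument to compare against, but your reduction via $a=|x-p_1|$, $b=|x-q_1|$, the reverse triangle inequality $|a-b|\leq|p_1-q_1|\leq z$, and the monotonicity of $t\mapsto(t+c)/(t+d)$ on $[0,\infty)$ (increasing if $c<d$, decreasing if $c>d$) is a clean, complete derivation of both bounds.
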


\begin{lemma}
\label{thm:spLip}
Mechanism~$\MechLipTwo$ is constant-strategyproof.
\end{lemma}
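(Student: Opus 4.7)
The plan is to invoke Lemma~\ref{lem:oc-relate} with $M = \MechOptTwo$ (which is $(1/\eps)$-strategyproof by Lemma~\ref{theorem:c1-c2-lambda}) and $M^* = \MechLipTwo$. It therefore suffices to exhibit positive constants $a_1, a_2$ with
\[
a_1 \Dist{x}{\MechOptTwo(\Prof)} \leq \Dist{x}{\MechLipTwo(\Prof)} \leq a_2 \Dist{x}{\MechOptTwo(\Prof)}
\]
for every profile $\Prof$ and every real number $x$; the lemma will then yield that $\MechLipTwo$ is $a_2/(a_1\eps)$-strategyproof, which is $O(1)$. The trivial-profile case is immediate (both mechanisms collapse to the single point $(\Mean{\Prof},0)$), so I will focus on the case $\Cost{\Prof}>0$.

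Write $\SolnOne = \MechOpt(\Prof)$, $q_\ell = \MechOptTwo(\Prof)_\ell$, $p_\ell = \MechLipTwo(\Prof)_\ell$, and $z_\ell = |\Fac{\Prof}{\ell} - \SolnOne_{\ell,1}|$. The first step is a pair of per-facility estimates. Lemma~\ref{lem:nest} places $\Fac{\Prof}{\ell}$ inside $[\SolnOne_{1,1}, \SolnOne_{2,1}]$, so the shift from $\SolnOne_{\ell,1}$ to $\Fac{\Prof}{\ell}$ is inward; by Lemma~\ref{lem:median} at least $\lceil|\Served{\Prof}{\ell}|/2\rceil$ agents lie on the outer side of $\SolnOne_{\ell,1}$ and each of them picks up an additional $z_\ell$ in the 1-D cost of $\MechLip(\Prof)$. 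Since $\CostSoln{\Prof}{\MechLip(\Prof)}\le 3\Cost{\Prof}$ by Lemma~\ref{lem:costLip}, this yields $z_\ell \le 6\Cost{\Prof}/|\Served{\Prof}{\ell}| = 6 q_{\ell,2}$. For the vertical coordinate, the definition of $\Backoff{\Prof}{\ell}$ combined with Lemma~\ref{lem:weightLowerBound} gives $p_{\ell,2} \le 8\Cost{\Prof}/n + 2\CostDensity{\Prof}{\ell} \le 12 q_{\ell,2}$. Feeding both bounds into Fact~\ref{fact:spHelper} with $(p,q)=(p_\ell, q_\ell)$ yields $\Dist{x}{p_\ell} \leq 18 \Dist{x}{q_\ell}$ for each $\ell$, and taking the minimum over $\ell$ settles the upper-bound direction.

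The reverse direction is where I expect the main difficulty, because the per-facility ratio $q_{\ell,2}/p_{\ell,2}$ can blow up when $|\Served{\Prof}{\ell}| \ll n$. Fix $x$ and let $\ell^{**} = \argmin_\ell \Dist{x}{p_\ell}$. I will split on whether $p_{\ell^{**},2} \ge q_{\ell^{**},2}$. In the easy subcase, applying Fact~\ref{fact:spHelper} with $(p,q)=(q_{\ell^{**}}, p_{\ell^{**}})$ and $z_{\ell^{**}} \le 6 q_{\ell^{**},2} \le 6 p_{\ell^{**},2}$ gives $\Dist{x}{q_{\ell^{**}}} \le 7 \Dist{x}{p_{\ell^{**}}}$. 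In the hard subcase $p_{\ell^{**},2} < q_{\ell^{**},2}$, the inequality $p_{\ell^{**},2} \ge 8\Cost{\Prof}/n$ forces $|\Served{\Prof}{\ell^{**}}| < n/8$, hence $|\Served{\Prof}{3-\ell^{**}}| > 7n/8$ and $q_{3-\ell^{**},2} < 8\Cost{\Prof}/(7n) < p_{3-\ell^{**},2}/7$; the per-facility argument at $\ell=3-\ell^{**}$ delivers $\Dist{x}{q_{3-\ell^{**}}} \le \Dist{x}{p_{3-\ell^{**}}}$. To transfer this from $p_{3-\ell^{**}}$ back to $p_{\ell^{**}}$, I will use that Lemma~\ref{lem:weightUpperBound} would force $p_{\ell^{**},2} \ge q_{\ell^{**},2}$ if $\BackoffHelper{\Prof}{\ell^{**}}$ equaled $2\CostDensity{\Prof}{\ell^{**}}$, so in the present regime $\BackoffHelper{\Prof}{\ell^{**}} = \Gap{\Prof}$ and hence $\Gap{\Prof} \le p_{\ell^{**},2}$; combined with $p_{3-\ell^{**},2} \le 12 q_{3-\ell^{**},2} \le 2 p_{\ell^{**},2}$ this yields $\Dist{p_{\ell^{**}}}{p_{3-\ell^{**}}} \le 3 p_{\ell^{**},2} \le 3\Dist{x}{p_{\ell^{**}}}$, so the triangle inequality gives $\Dist{x}{p_{3-\ell^{**}}} \le 4 \Dist{x}{p_{\ell^{**}}}$ and hence $\Dist{x}{q_{3-\ell^{**}}} \le 4 \Dist{x}{p_{\ell^{**}}}$. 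In either subcase $\min_\ell \Dist{x}{q_\ell}$ is at most a constant multiple of $\min_\ell \Dist{x}{p_\ell}$, completing the reverse bound. The transfer step in the hard subcase is the one I expect to require the most care, since it is where all four ingredients -- the horizontal bound $z_\ell = O(q_{\ell,2})$, the vertical bound $p_{\ell,2} = O(q_{\ell,2})$, the gap bound $\Gap{\Prof} \le p_{\ell^{**},2}$, and the cross-facility comparison $p_{3-\ell^{**},2} = O(p_{\ell^{**},2})$ -- must be combined.
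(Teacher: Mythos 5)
Your proposal is correct and reaches the same conclusion via Lemma~\ref{lem:oc-relate}, but the case decomposition is genuinely different from the paper's. Both proofs rely on the same building blocks: Lemma~\ref{lem:nest} and Lemma~\ref{lem:median} to bound the horizontal displacement $|\Fac{\Prof}{\ell}-\SolnOne_{\ell,1}|$ by $O(\Cost{\Prof}/|\Served{\Prof}{\ell}|)$, Lemma~\ref{lem:weightLowerBound} to bound $\Backoff{\Prof}{\ell}$ from above, Lemma~\ref{lem:weightUpperBound} when $\Gap{\Prof}$ is large, and Fact~\ref{fact:spHelper} to convert these into per-facility distance ratios. The difference lies in how they handle the facility whose $\MechOptTwo$-offset $\Cost{\Prof}/|\Served{\Prof}{\ell}|$ may dwarf the corresponding $\MechLipTwo$-offset. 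The paper assumes w.l.o.g.\ $|\Served{\Prof}{1}|\ge n/2$ and then partitions into three cases according to where $\Gap{\Prof}$ sits relative to $8\Cost{\Prof}/n$ and $2\CostDensity{\Prof}{2}$; in the two cases where Fact~\ref{fact:spHelper} alone is insufficient for the right facility it proves $\Dist{x}{\Soln_1^*}\le 2\Dist{x}{\Soln_2^*}$ so that the left facility (where the per-facility bound is clean) dominates. You instead keep the two facilities symmetric, fix the index $\ell^{**}$ achieving $\Dist{x}{\MechLipTwo(\Prof)}$, and split on whether $\Soln^*_{\ell^{**},2}\ge\Soln_{\ell^{**},2}$; when it fails you deduce $|\Served{\Prof}{\ell^{**}}|<n/8$, apply Fact~\ref{fact:spHelper} at the opposite facility where things are favorable, and transfer back via the bound $\Gap{\Prof}\le\Soln^*_{\ell^{**},2}$ (derived by contradiction from Lemma~\ref{lem:weightUpperBound}) together with the $L_1$ triangle inequality. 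Your route avoids the w.l.o.g.\ reduction and makes more transparent why the ``hard'' regime ($|\Served{\Prof}{\ell}|\ll n$) is rescued by the sibling facility, at the cost of somewhat looser constants ($a_1=1/7$, $a_2=18$ versus the paper's effective $1/3$ and $12$, partly because you prove $|\Fac{\Prof}{\ell}-\SolnOne_{\ell,1}|\le 6\Cost{\Prof}/|\Served{\Prof}{\ell}|$ by charging against $3\Cost{\Prof}$ where the paper gets the factor $2$ directly from the definition of $\Helper{\Prof}{1}$). Two small cleanups you'd want: the base case should be stated as $\Cost{\Prof}=0$ rather than ``trivial $\Prof$'' (a nontrivial $\Prof$ can still have $\Cost{\Prof}=0$, in which case $\MechOptTwo(\Prof)=\MechLipTwo(\Prof)=\Canonical{\Prof}$ rather than a single point), and the degenerate case $\Gap{\Prof}=0$ with $\Cost{\Prof}>0$ needs a one-line remark since $\CostDensity{\Prof}{\ell}$ is undefined there; in that case $\Backoff{\Prof}{\ell}=8\Cost{\Prof}/n$ and the argument goes through unchanged.
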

\begin{proof}
Since mechanism~$\MechOptTwo$ is constant-strategyproof,
Lemma~\ref{lem:oc-relate} implies it is sufficient to exhibit positive
real numbers $a_1$ and $a_2$ such that, for all profiles $\Prof$ and all real numbers $x$,
$a_1\Dist{x}{\MechOptTwo(\Prof)} \le \Dist{x}{\MechLipTwo(\Prof)}
\le a_2\Dist{x}{\MechOptTwo(\Prof)}$.

Let $\Prof$ be an $n$-profile, let $x$ be a real number, let
$\SolnOne$ denote $\MechOpt(\Prof)$, let $\SolnOne^*$ denote
$\MechLip(\Prof)$, let $\Soln$ denote $\MechOptTwo(\Prof)$, let
$\Soln^*$ denote $\MechLipTwo(\Prof)$, and let $s_\ell$ denote
$|\Served{\Prof}{\ell}|$ for all $\ell$ in $\{1,2\}$.  Thus
$\Soln_{\ell,2}=\Cost{\Prof}/s_\ell$.

If $\Cost{\Prof}=0$, it is easy to see that $\Soln^*=\Soln$ and
hence $\Dist{x}{\Soln^*}=\Dist{x}{\Soln}$. For the remainder of
the proof, we assume that $\Cost{\Prof}>0$. It follows that
$\Soln_{\ell,2}$, $\Soln^*_{\ell,2}$, and $s_\ell$ are positive for
all $\ell$ in $\{1,2\}$. We assume without loss of generality that
$s_1\geq n/2$; the case $s_2\geq n/2$ is symmetric.

We claim that $\Soln^*_{1,2}=8\Cost{\Prof}/n$. If $\Gap{\Prof}\leq
8\Cost{\Prof}/n$, the claim is immediate from the definition of
mechanism~$\MechLipTwo$. Otherwise, $\Gap{\Prof}>0$ and since $s_1\geq
n/2$, \cref{lem:weightLowerBound} implies $\Weight{\Prof}{1}\geq n/4$,
which in turn implies $2\CostDensity{\Prof}{1}\leq
8\Cost{\Prof}/n$, so the claim is once again immediate from the
definition of mechanism~$\MechLipTwo$.

We claim that $|\SolnOne^*_{\ell,1}-\SolnOne_{\ell,1}|\leq
2\Cost{\Prof}/s_\ell$ for all $\ell$ in $\{1,2\}$. We prove this claim
for the case $\ell=1$; the case $\ell=2$ is symmetric.  By
\cref{lem:nest}, we need to prove that $\SolnOne^*_{1,1}\leq z$ where
$z$ denotes $\SolnOne_{1,1}+2\Cost{\Prof}/s_1$. By \cref{lem:median},
there are at least $s_1/2$ agents at or to the left of
$\SolnOne_{1,1}$. Thus $\sum_{i\in[n]:\Prof_i\leq z}z-\Prof_i$ is at
least $\Cost{\Prof}$. Hence the definition of mechanism $\MechLip$ implies
$\SolnOne^*_{1,1}\leq z$, as required.

Given that $n/2\leq s_1\leq n$ and using Fact~\ref{fact:spHelper} with $p_2$ equal to $8\Cost{\Prof}/n$,
$q_2$ equal to $\Cost{\Prof}/s_1$, and $z$ equal to
$2\Cost{\Prof}/s_1$, we have
\begin{eqnarray*}
\frac{\Dist{x}{\Soln_1^*}}{\Dist{x}{\Soln_1}}
& \leq & \max\left(1,\max_{n/2\leq s_1\leq n}
\frac{8\Cost{\Prof}/n+2\Cost{\Prof}/s_1}{\Cost{\Prof}/s_1}\right)\\
& = & \max(1,\max_{n/2\leq s_1\leq n}8s_1/n+2)\\
& = & 10.
\end{eqnarray*}
We can use Fact~\ref{fact:spHelper} in a similar manner to derive a lower bound
of $1$ for the ratio $\Dist{x}{\Soln_1^*}/\Dist{x}{\Soln_1}$. In
summary, we have
\begin{equation}
\label{eqn:ratioLeft}
1\leq\Dist{x}{\Soln_1^*}/\Dist{x}{\Soln_1}\leq 10.
\end{equation}

Case~1:
$\Gap{\Prof}\geq2\CostDensity{\Prof}{2}$. Lemmas~\ref{lem:weightLowerBound}
and~\ref{lem:weightUpperBound} imply $\Weight{\Prof}{2}/2\leq s_2\leq
2\Weight{\Prof}{2}$.  Since $|\SolnOne^*_{2,1}-\SolnOne_{2,1}|\leq
2\Cost{\Prof}/s_2$, we deduce that
$|\SolnOne^*_{2,1}-\SolnOne_{2,1}|\leq 4\CostDensity{\Prof}{2}$. Below
we use two subcases to establish that
\begin{equation}
\label{eqn:ratioRightA}
1/3\leq\Dist{x}{\Soln_2^*}/\Dist{x}{\Soln_2}\leq 12.
\end{equation}
\cref{eqn:ratioLeft,eqn:ratioRightA} together imply
$1/3\leq\Dist{x}{\Soln^*}/\Dist{x}{\Soln}\leq 12$. 

Case~1.1: $\Weight{\Prof}{2}\geq n/4$. Thus
$\Soln_{2,2}^*=8\Cost{\Prof}/n$ and $n/8\leq s_2\leq n/2$.  Hence
$2\Cost{\Prof}/n\leq\Soln_{2,2}\leq 8\Cost{\Prof}/n$ and
$|\SolnOne^*_{2,1}-\SolnOne_{2,1}|\leq 16\Cost{\Prof}/n$.  Using
Fact~\ref{fact:spHelper} with $p_2$ equal to $8\Cost{\Prof}/n$, $q_2$
between $2\Cost{\Prof}/n$ and $8\Cost{\Prof}/n$, and $z$ equal to
$16\Cost{\Prof}/n$, we find that \cref{eqn:ratioRightA} holds.

Case~1.2: $\Weight{\Prof}{2}\leq n/4$.  Thus
$\Soln_{2,2}^*=2\CostDensity{\Prof}{2}$. Since $\Weight{\Prof}{2}/2\leq
s_2\leq 2\Weight{\Prof}{2}$, we have
$\CostDensity{\Prof}{2}/2\leq\Soln_{2,2}\leq 2\CostDensity{\Prof}{2}$.
Using Fact~\ref{fact:spHelper} with $p_2$ equal to
$2\CostDensity{\Prof}{2}$, $q_2$ between $\CostDensity{\Prof}{2}/2$
and $2\CostDensity{\Prof}{2}$, and $z$ equal to
$4\CostDensity{\Prof}{2}$, we again find that \cref{eqn:ratioRightA}
holds.

Case~2: $8\Cost{\Prof}/n\leq\Gap{\Prof}\leq 2\CostDensity{\Prof}{2}$.
Thus $\Soln_{2,2}^*=\Gap{\Prof}$. Since
$\Soln_{1,2}^*=8\Cost{\Prof}/n$, we have $\Dist{x}{\Soln_1^*}\leq
2\Dist{x}{\Soln_2^*}$. Thus
\begin{eqnarray*}
\Dist{x}{\Soln^*}
& = &
\min(\Dist{x}{\Soln_1^*},\Dist{x}{\Soln_2^*})\\
& \geq &
\min(\Dist{x}{\Soln_1^*},\Dist{x}{\Soln_1^*}/2)\\
& = & 
\Dist{x}{\Soln_1^*}/2\\
& \geq &
\Dist{x}{\Soln_1}/2\\
& \geq &
\Dist{x}{\Soln}/2,
\end{eqnarray*}
where the second inequality follows from \cref{eqn:ratioLeft}.

Recall that $\Weight{\Prof}{2}\geq s_2/2$. Thus the case condition
implies $\Soln_{2,2}^*\leq 4\Cost{\Prof}/s_2$.  Recall that
$|\SolnOne^*_{2,1}-\SolnOne_{2,1}|\leq 2\Cost{\Prof}/s_2$.  Using
Fact~\ref{fact:spHelper} with $p_2$ equal to $4\Cost{\Prof}/s_2$,
$q_2$ equal to $\Cost{\Prof}/s_2$, and $z$ equal to
$2\Cost{\Prof}/s_2$, we find that
\begin{equation}
\label{eqn:ratioRightB}
\frac{\Dist{x}{\Soln_2^*}}{\Dist{x}{\Soln_2}}\leq 6.
\end{equation}
Thus
\begin{eqnarray*}
\Dist{x}{\Soln^*}
& = &
\min(\Dist{x}{\Soln_1^*},\Dist{x}{\Soln_2^*})\\
& \leq &
\min(10\Dist{x}{\Soln_1},6\Dist{x}{\Soln_2})\\
& \leq &
10\min(\Dist{x}{\Soln_1},\Dist{x}{\Soln_2})\\
& = &
10\Dist{x}{\Soln},
\end{eqnarray*}
where the first inequality follows from \cref{eqn:ratioLeft,eqn:ratioRightB}.

Case~3: $\Gap{\Prof}\leq 8\Cost{\Prof}/n$.  Thus
$\Soln_{2,2}^*=8\Cost{\Prof}/n$ and since
$\Soln_{1,2}^*=8\Cost{\Prof}/n$, we have $\Dist{x}{\Soln_1^*}\leq
2\Dist{x}{\Soln_2^*}$. As in Case~2, we deduce that
$\Dist{x}{\Soln^*}\geq\Dist{x}{\Soln}/2$.

Recall that $|\SolnOne^*_{2,1}-\SolnOne_{2,1}|\leq 2\Cost{\Prof}/s_2$.
Given that $1\leq s_2\leq n/2$ and using Fact~\ref{fact:spHelper} with $p_2$ equal to $8\Cost{\Prof}/n$,
$q_2$ equal to $\Cost{\Prof}/s_2$, and $z$ equal to
$2\Cost{\Prof}/s_2$, we find that
\cref{eqn:ratioRightB} holds as in Case~2. The rest of the argument
proceeds as in Case~2.
\end{proof}

The remainder of this section is devoted to establishing that
mechanism $\MechLipTwo$ is constant-Lipschitz.  The plan is to argue
Lipschitz-type bounds for the functions $\Fac{\Prof}{\ell}$ and
$\Backoff{\Prof}{\ell}$ that define the four components of the
solution produced by $\MechLipTwo$. These functions are defined in
terms of a number of simpler auxiliary functions; the latter functions
are the focus of much of our analysis in Appendix~\ref{sec:lip}.
Since the functions that we are studying all map a given $\Prof$ in
$\Reals^n$ to $\Reals$, a natural approach to deriving Lipschitz-type
bounds is to study the magnitude of the gradient. Instead of directly
reasoning about the gradient, we find it convenient to fix all
components except component $k$ of a given profile $\Prof$, and then
study the magnitude of the derivative as $\Prof_k$ varies over
$\IntervalA^*=\Interval{\Prof}{k}$.

A technical obstacle that we need to overcome is that most of the
functions we study are not differentiable over all of $\IntervalA^*$;
rather, they are piecewise differentiable.  In \cref{sec:knots}, we
break $\IntervalA^*$ into a finite set of pieces, each of which is a
closed interval, such that all of the functions of interest are
differentiable over the interior of each piece. In \cref{sec:cell}, we
use calculus to bound the magnitude of the derivative of each of these
functions over the interior of each piece. Combining these bounds with
some basic results presented in \cref{sec:lipFacts}, we obtain our
main technical lemma, \cref{lem:lipMainCell}, which implies that the
functions $\Fac{\Prof}{\ell}$ and $\Backoff{\Prof}{\ell}$ are
constant-Lipschitz over each piece.

The main result of \cref{sec:lip}, \cref{lem:lipMainHelper}, follows
easily from \cref{lem:lipMainCell} along with \cref{fact:lipCover}
from \cref{sec:lipFacts}; this lemma states that the functions
$\Fac{\Prof}{\ell}$ and $\Backoff{\Prof}{\ell}$ are constant-Lipschitz
over $\IntervalA^*$.  With \cref{lem:lipMainHelper} in hand, we easily
obtain \cref{lem:lipTech} below, which provides a Lipschitz-type bound
for mechanism~$\MechLipTwo$ when a profile $\Prof$ is changed to
another profile in $\Hood{\Prof}$. We then use \cref{lem:lipTech} to
obtain the desired constant-Lipschitz bound for $\MechLipTwo$ (see
\cref{lem:lipMain} below).

\begin{lemma}
\label{lem:lipTech}
There exists a positive constant $\kappa$ such that for any profile $\Prof$
and any profile $\Prof^*$ in $\Hood{\Prof}$, we have
\[
\|\MechLipTwo(\Prof^*)-\MechLipTwo(\Prof)\|_1\leq \kappa\cdot\|\Prof^*-\Prof\|_1.
\]
\end{lemma}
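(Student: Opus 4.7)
The plan is to reduce the statement directly to \cref{lem:lipMainHelper}, which asserts that each of the four scalar coordinate functions $\Prof\mapsto\Fac{\Prof}{\ell}$ and $\Prof\mapsto\Backoff{\Prof}{\ell}$, for $\ell\in\{1,2\}$, is constant-Lipschitz when all components of $\Prof$ except one fixed index $k$ are held constant and $\Prof_k$ ranges over $\Interval{\Prof}{k}$. Since
\[
\MechLipTwo(\Prof)=((\Fac{\Prof}{1},\Backoff{\Prof}{1}),(\Fac{\Prof}{2},\Backoff{\Prof}{2})),
\]
such a univariate bound translates termwise into the desired $\ell_1$ bound on the mechanism output.

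First I would dispose of the trivial case $\Prof^*=\Prof$. Otherwise, the definition of $\Hood{\Prof}$ provides a unique index $k\in[n]$ with $\Prof^*_k\neq\Prof_k$ and $\Prof^*_j=\Prof_j$ for $j\neq k$; since both $\Prof$ and $\Prof^*$ are $n$-profiles, the preliminaries give $\Prof^*_k\in\Interval{\Prof}{k}$, so the hypothesis of \cref{lem:lipMainHelper} is in force, and $\|\Prof^*-\Prof\|_1=|\Prof^*_k-\Prof_k|$. Writing $\kappa'$ for the constant furnished by \cref{lem:lipMainHelper}, I would then expand
\begin{eqnarray*}
\|\MechLipTwo(\Prof^*)-\MechLipTwo(\Prof)\|_1
& = &
\sum_{\ell\in\{1,2\}}\!\left(|\Fac{\Prof^*}{\ell}-\Fac{\Prof}{\ell}|+|\Backoff{\Prof^*}{\ell}-\Backoff{\Prof}{\ell}|\right)\\
& \le &
4\kappa'\,|\Prof^*_k-\Prof_k|\\
& = &
4\kappa'\,\|\Prof^*-\Prof\|_1,
\end{eqnarray*}
and take $\kappa=4\kappa'$ to finish.

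The main obstacle is therefore not in the present lemma but rather in \cref{lem:lipMainHelper}. The difficulty there is that $\Fac{\Prof}{\ell}$ and $\Backoff{\Prof}{\ell}$ are built from several non-smooth ingredients, including $\Helper{\Prof}{\ell}$ (defined implicitly), the clippings against $\Mean{\Prof}$, the gap $\Gap{\Prof}$, the weights $\Weight{\Prof}{\ell}$, the cost density $\CostDensity{\Prof}{\ell}$, and the $8\Cost{\Prof}/n$ floor inside $\Backoff{\Prof}{\ell}$. Each of these introduces kinks as $\Prof_k$ sweeps through $\Interval{\Prof}{k}$, and the real work is the cell decomposition of \cref{sec:knots} together with the cell-level derivative bounds of \cref{sec:cell} that feed into \cref{fact:lipCover} and yield \cref{lem:lipMainCell}, and hence \cref{lem:lipMainHelper}. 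Once that piecewise-differentiable machinery is in place, the present lemma is the straightforward four-term sum above.
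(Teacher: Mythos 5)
Your proof is correct and takes the same route as the paper, which simply invokes \cref{lem:lipMainHelper} and leaves the four-term expansion implicit; you have unpacked that step explicitly, correctly observing that $\Prof$ and $\Prof^*$ can differ in at most one coordinate $k$, that $\Prof^*_k\in\Interval{\Prof}{k}$ since $\Prof^*$ is an $n$-profile, and that the $\ell_1$ norm on solutions decomposes into the four scalar coordinates $\Fac{\cdot}{\ell}$ and $\Backoff{\cdot}{\ell}$. Taking $\kappa=4\kappa'$ is fine (the paper tracks the constant less precisely, but nothing hinges on it).
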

\begin{proof}
Immediate from Lemma~\ref{lem:lipMainHelper}.
\end{proof}

\begin{lemma}
\label{lem:lipIndex}
Let $\Prof$ and $\Prof^*$ be distinct $n$-profiles. Then there exists
an $i$ in $[n]$ such that $\Prof_i\not=\Prof_i^*$ and $\Prof_i^*$
belongs to $\Interval{\Prof}{i}$.
\end{lemma}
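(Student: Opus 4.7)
The plan is to partition the indices on which the two profiles disagree into ``up'' positions $U=\{i\in[n]:\Prof^*_i>\Prof_i\}$ and ``down'' positions $D=\{i\in[n]:\Prof^*_i<\Prof_i\}$, and then show that the minimum element of $D$ works whenever $D$ is nonempty, and (symmetrically) the maximum element of $U$ works whenever $U$ is nonempty. Since $\Prof\neq\Prof^*$, at least one of $U,D$ is nonempty, so this produces the desired index.

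First I would handle the case $D\neq\emptyset$. Let $i=\min D$. Then $\Prof^*_i<\Prof_i$, so $\Prof_i\neq\Prof^*_i$. I need to verify $\Prof^*_i\in\Interval{\Prof}{i}$. If $i<n$, then $\Prof^*_i<\Prof_i\leq\Prof_{i+1}$, which gives the upper endpoint. If $i>1$, then by minimality every $k<i$ satisfies $\Prof^*_k\geq\Prof_k$, so in particular $\Prof^*_{i-1}\geq\Prof_{i-1}$; combining this with the fact that $\Prof^*$ is nondecreasing yields $\Prof^*_i\geq\Prof^*_{i-1}\geq\Prof_{i-1}$, which is the lower endpoint. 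The boundary cases $i=1$ and $i=n$ are covered directly by the conventions $\Interval{\Prof}{1}=(-\infty,\Prof_2]$ and $\Interval{\Prof}{n}=[\Prof_{n-1},\infty)$.

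The remaining case, $D=\emptyset$ and $U\neq\emptyset$, is symmetric. I would take $i=\max U$ and argue that every $k>i$ lies outside $U$, hence $\Prof^*_{i+1}\leq\Prof_{i+1}$ and so $\Prof^*_i\leq\Prof^*_{i+1}\leq\Prof_{i+1}$; the lower side follows from $\Prof^*_i>\Prof_i\geq\Prof_{i-1}$.

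I do not expect a serious obstacle: the only thing to be careful about is the edge behavior at $i=1$ and $i=n$, where one side of $\Interval{\Prof}{i}$ is unbounded, and the passage from $\Prof^*_{i-1}\geq\Prof_{i-1}$ to $\Prof^*_i\geq\Prof_{i-1}$ (and its symmetric counterpart), which relies essentially on $\Prof^*$ being sorted in nondecreasing order.
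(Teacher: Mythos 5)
Your proof is correct and is essentially the paper's argument, just stated as the mirror image: the paper takes $i=\max\{j:\Prof_j<\Prof_j^*\}$ when that set is nonempty and otherwise $i=\min\{j:\Prof_j>\Prof_j^*\}$, whereas you lead with $\min D$ and fall back to $\max U$. Since each of these extremal choices works whenever its set is nonempty (regardless of the other), both orderings are equally valid; you also spell out the endpoint verifications that the paper leaves implicit.
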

\begin{proof}
If there exists a $j$ in $[n]$ such that $\Prof_j<\Prof_j^*$, then we
can take $i$ to be the maximum such $j$. Otherwise, we can take $i$ to
be the minimum $j$ in $[n]$ such that $\Prof_j>\Prof_j^*$.
\end{proof}

We finally state our lemma that mechanism $\MechLipTwo$ is constant-Lipschitz as follows.

\begin{lemma}
\label{lem:lipMain}
Let $\kappa$ denote the Lipschitz constant of Lemma~\ref{lem:lipTech}. Then
for any $n$-profiles $\Prof$ and $\Prof^*$, we have
\[
\|\MechLipTwo(\Prof^*)-\MechLipTwo(\Prof)\|_1\leq \kappa\cdot\|\Prof^*-\Prof\|_1.
\]
\end{lemma}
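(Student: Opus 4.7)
The plan is to reduce the general case to the single-coordinate case handled by Lemma~\ref{lem:lipTech}, by induction on the Hamming distance $h(\Prof,\Prof^*)=|\{i\in[n]\mid\Prof_i\neq\Prof^*_i\}|$. The base case $h=0$ gives $\Prof=\Prof^*$, where the inequality holds trivially with both sides equal to $0$.

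For the inductive step, suppose $\Prof\neq\Prof^*$. I would apply Lemma~\ref{lem:lipIndex} to obtain an index $i$ in $[n]$ such that $\Prof_i\neq\Prof_i^*$ and $\Prof_i^*\in\Interval{\Prof}{i}$. Let $\Prof'$ denote $\Subst{\Prof}{i}{\Prof_i^*}$. The key observation, stated right after the definition of $\Interval{\Prof}{i}$ in the preliminaries, is that because $\Prof_i^*\in\Interval{\Prof}{i}$, the vector obtained by placing $\Prof_i^*$ in coordinate $i$ of $\Prof$ is already nondecreasing, so no rearrangement takes place and $\Prof'_j=\Prof_j$ for all $j\neq i$ while $\Prof'_i=\Prof_i^*$. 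In particular, $\Prof'\in\Hood{\Prof}$.

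This gives two useful consequences. First, $\|\Prof'-\Prof\|_1=|\Prof_i^*-\Prof_i|$, and since $\Prof'$ and $\Prof^*$ agree at coordinate $i$, $\|\Prof^*-\Prof'\|_1=\|\Prof^*-\Prof\|_1-|\Prof_i^*-\Prof_i|$. Second, $h(\Prof',\Prof^*)=h(\Prof,\Prof^*)-1$, so the inductive hypothesis applies to the pair $(\Prof',\Prof^*)$. Combining Lemma~\ref{lem:lipTech} applied to $(\Prof,\Prof')$, the inductive hypothesis applied to $(\Prof',\Prof^*)$, and the triangle inequality for $\|\cdot\|_1$ on solutions, we obtain
\begin{align*}
\|\MechLipTwo(\Prof^*)-\MechLipTwo(\Prof)\|_1
&\leq \|\MechLipTwo(\Prof^*)-\MechLipTwo(\Prof')\|_1
  +\|\MechLipTwo(\Prof')-\MechLipTwo(\Prof)\|_1\\
&\leq \kappa\cdot\|\Prof^*-\Prof'\|_1+\kappa\cdot\|\Prof'-\Prof\|_1\\
&= \kappa\cdot\|\Prof^*-\Prof\|_1,
\end{align*}
completing the induction.

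The only subtlety, and hence the main point to verify carefully, is that the intermediate profile $\Prof'$ really is in $\Hood{\Prof}$, i.e., that substituting $\Prof_i^*$ into coordinate $i$ does not trigger a rearrangement that would change other coordinates. This is precisely what Lemma~\ref{lem:lipIndex} buys us via the guarantee $\Prof_i^*\in\Interval{\Prof}{i}$; without this choice of $i$, a naive substitution could re-sort multiple coordinates and push us outside $\Hood{\Prof}$, breaking the applicability of Lemma~\ref{lem:lipTech}. Everything else is a standard telescoping argument.
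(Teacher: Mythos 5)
Your proof is correct and takes essentially the same route as the paper's: both apply Lemma~\ref{lem:lipIndex} repeatedly to build a chain of profiles from $\Prof$ to $\Prof^*$ through $\Hood$-neighbors, then telescope via Lemma~\ref{lem:lipTech} and the triangle inequality, with the observation that $\Prof_i^*\in\Interval{\Prof}{i}$ ensures the substitution changes only coordinate $i$ so the $\ell_1$-distances add up exactly. You frame it as an induction on Hamming distance while the paper constructs the full sequence up front, but the underlying argument is identical.
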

\begin{proof}
Let $\Prof$ and $\Prof^*$ be $n$-profiles and let $s$ denote $|\{i\in
[n]\mid\Prof_i\not=\Prof_i^*\}|$.  By applying
Lemma~\ref{lem:lipIndex} $s$ times, we find that there exists a
sequence of $n$-profiles
$\Prof^{(0)},\ldots,\Prof^{(s)}$ such that $\Prof^{(0)}=\Prof$, $\Prof^{(s)}=\Prof^*$, and $\Prof^{(i)}$
belongs to $\Hood{\Prof^{(i-1)}}$ for all $i$ in $[s]$. Thus
\begin{eqnarray*}
\|\MechLipTwo(\Prof^*)-\MechLipTwo(\Prof)\|_1
& \leq &
\sum_{i\in [s]}\|\MechLipTwo(\Prof^{(i)})-\MechLipTwo(\Prof^{(i-1)})\|_1\\
& \leq &
\kappa\sum_{i\in [s]}\|\Prof^{(i)}-\Prof^{(i-1)}\|_1\\
& = & 
\kappa\cdot\|\Prof^*-\Prof\|_1,
\end{eqnarray*}
where the second inequality follows from Lemma~\ref{lem:lipTech}.
\end{proof}

\begin{theorem}
    \label{Mech4_Thm}
    Mechanism $\MechLipTwo$ is constant-approximate, constant-strategyproof, and constant-Lipschitz.
\end{theorem}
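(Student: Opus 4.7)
The plan is to observe that the theorem is a straightforward packaging of three results already proved in this section: \cref{lem:costLipTwo} gives a constant approximation ratio (namely $15$) on the social cost, \cref{thm:spLip} establishes constant-strategyproofness, and \cref{lem:lipMain} establishes the constant-Lipschitz property (with Lipschitz constant $\kappa$ from \cref{lem:lipTech}). Thus the proof simply invokes these three lemmas in turn.

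In terms of order, I would first restate each of the three claims in the conjunction using the matching lemma, and then conclude. This mirrors the structure of \cref{M3_Thm}, whose proof simply cited \cref{lem:costOptTwo} and \cref{theorem:c1-c2-lambda}; here we have one additional citation for the Lipschitz property.

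There is no real obstacle to this proof since every ingredient is already in place. The only thing to note is that \cref{lem:lipMain} depends on the machinery developed in the appendix (via \cref{lem:lipMainHelper} and \cref{lem:lipTech}), but that dependency has been packaged so that at this point we may simply cite \cref{lem:lipMain} as a black box. Accordingly, the entire proof reduces to a one-sentence combination, analogous to the proof of \cref{M3_Thm}:

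\begin{proof}
The three properties follow immediately from \cref{lem:costLipTwo}, \cref{thm:spLip}, and \cref{lem:lipMain}, respectively.
\end{proof}
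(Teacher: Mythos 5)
Your proof is correct and matches the paper's proof exactly: the theorem follows immediately from \cref{lem:costLipTwo}, \cref{thm:spLip}, and \cref{lem:lipMain}. No further comment is needed.
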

\begin{proof}
    The theorem follows immediately from \cref{lem:costLipTwo}, \cref{thm:spLip}, and \cref{lem:lipMain}.
\end{proof}

\section{Concluding Remarks}

Our work shows that constant-strategyproof and constant-Lipschitz mechanisms can provide good truthfulness and stability guarantees without suffering significant loss of efficiency. For example, consider the stark difference between a $\Theta(n)$ approximation ratio of optimal social cost for strategyproof mechanisms and a constant approximation ratio of optimal social cost for constant-strategyproof mechanisms. Our positive results suggest that these properties are deserving of further study for the facility location problem and other problems in mechanism design.

There are many interesting directions in which our work can be generalized. Most pressingly, we are eager to see whether our methods can be generalized to facility location in other metric spaces. In particular, we conjecture that it is possible to design constant-strategy proof mechanisms for facility location with both preferences and facilities in $\mathbb{R}^d$. Beyond this, there are other natural directions for generalizing our results such as allowing for more than two facilities or considering different cost functions for individual agent or social costs \cite{Deligkas2024,Kanellopoulos2023}.

There are also numerous other properties which are of interest. For example, future work could integrate recent fairness results for facility location \cite{Gupta2023,Hossain2020,Micha2020,Zhou2023} with constant-strategyproof and constant-Lipschitz mechanisms. Additionally, our results implicitly reveal a tradeoff between the approximation ratio and the relaxation of strategyproofness, governed by the choice of vertical offset from the line. It would be interesting to more precisely characterize this tradeoff.

The integration of theoretical results with practical, real-world, considerations is another potentially fruitful direction for future research. Examples of work in this vein include \cite{Candogan2024} which studies the effective use on individual mobility pattern data for facility location and \cite{Ashlagi2018,Gonczarowski2023,Li2017} which study ``obvious'' strategyproofness. Ultimately, facility location is a theoretical problem with deep practical motivations. Any work that attempts to narrow the gap between theory and implementation could be valuable both for embedding better theoretical guarantees into practical implementations and for informing which problems and relaxations are most deserving of theoretical study.

%
%
%
\bibliographystyle{splncs04}
\bibliography{references}
\appendix

\section{Some Results for Establishing Lipschitz-Type Bounds}
\label{sec:lipFacts}

In this appendix we state and prove some useful results for deriving
Lipschitz-type bounds. For any nonempty closed interval $\IntervalA$ of the real line, we write $\Interior{\IntervalA}$ to denote the largest open interval
contained in $\IntervalA$.

\begin{fact}
	\label{fact:lipCover}
	Let $\IntervalA$ be a closed interval, let
	$u:\IntervalA\rightarrow\Reals$ be a function over $\IntervalA$, and let
	$\IntervalA_1,\ldots,\IntervalA_s$ be a finite collection of closed
	subintervals of $\IntervalA$ such that $\IntervalA=\cup_{i\in[s]}
\IntervalA_i$. Further assume that $u$ is $\LipConstant$-Lipschitz over
	$\IntervalA_i$ for all $i$ in $[s]$. Then $u$ is
	$\LipConstant$-Lipschitz over $\IntervalA$.
\end{fact}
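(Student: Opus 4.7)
The plan is to prove the fact by a chain-passing argument. Fix $x, y \in I$; by symmetry, assume $x \leq y$. The goal is to produce a finite subdivision $x = z_0 \leq z_1 \leq \cdots \leq z_m = y$ in which each consecutive pair $z_j, z_{j+1}$ lies in a common subinterval $I_{k_j}$. Once such a subdivision is in hand, the per-piece Lipschitz hypothesis combines via the triangle inequality:
\[
|u(y) - u(x)| \leq \sum_{j=0}^{m-1} |u(z_{j+1}) - u(z_j)| \leq \kappa \sum_{j=0}^{m-1} (z_{j+1} - z_j) = \kappa (y - x),
\]
which is exactly the desired bound.

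To construct the subdivision, I would use a greedy ``maximum reach'' procedure. Let $b_k$ denote the right endpoint of $I_k$ for each $k \in [s]$. Starting from $z_0 = x$, and having defined $z_j < y$, choose an index $k_j$ that maximizes $b_{k_j}$ over all $k \in [s]$ with $z_j \in I_k$; such a $k$ exists because $z_j \in I = \bigcup_i I_i$. Set $z_{j+1} = \min(y, b_{k_j})$. Since $I_{k_j}$ is an interval containing $z_j$ with $b_{k_j} \geq z_{j+1} \geq z_j$, both $z_j$ and $z_{j+1}$ lie in $I_{k_j}$, so the Lipschitz bound on this piece applies to the step.

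The main obstacle is arguing that this process terminates. The crux is that whenever $z_{j+1} = b_{k_j} < y$, there must exist $I_{k'}$ with $z_{j+1} \in I_{k'}$ and $b_{k'} > b_{k_j}$: the cover property forces some $I_{k'}$ to meet $(b_{k_j}, y]$, and closedness of $I_{k'}$ combined with connectedness of $[x, y]$ forces $b_{k_j} \in I_{k'}$ as well, since otherwise the gap between $b_{k_j}$ and the left endpoint of $I_{k'}$ would remain uncovered. Hence the sequence $(b_{k_j})$ is strictly increasing, and because the $I_k$ are finite in number the process must halt in at most $s$ steps with $z_m = y$. The displayed inequality above then completes the proof.
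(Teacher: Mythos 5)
The paper records this statement as a Fact with no accompanying proof, so there is no internal argument to compare against; your greedy chain-passing strategy is a legitimate way to establish it, and the reduction to a finite subdivision followed by the triangle inequality is exactly right. The one step that is not fully justified as written is the termination claim. You assert that if $b_{k_j} \notin I_{k'}$ for an $I_{k'}$ meeting $(b_{k_j}, y]$, then ``the gap between $b_{k_j}$ and the left endpoint of $I_{k'}$ would remain uncovered.'' For an arbitrary such $I_{k'}$ this does not follow: other members of the collection could cover that gap. What makes the claim true is the \emph{finiteness} of the collection, and you need to invoke it explicitly. For instance, among the finitely many $I_k$ that meet $(b_{k_j}, y]$, choose one, say $I_{k'}$, with the smallest left endpoint $a_{k'}$. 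If $a_{k'} \le b_{k_j}$, then $I_{k'}$ is a closed interval with a point to the right of $b_{k_j}$ and a left endpoint at or to its left, so $b_{k_j} \in I_{k'}$ and $b_{k'} > b_{k_j}$, giving the strict increase you need. If instead $a_{k'} > b_{k_j}$, pick any $w$ with $b_{k_j} < w < \min(a_{k'}, y)$; then $w \in I$, but every $I_k$ containing $w$ would meet $(b_{k_j}, y]$ and hence have left endpoint at least $a_{k'} > w$, which is impossible. This contradiction rules out the second case, so $(b_{k_j})_j$ is strictly increasing over a finite set of candidate values, the process halts in at most $s$ steps, and the chaining inequality finishes the proof.

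A slightly cleaner variant, which avoids the termination subtlety altogether, is to take as subdivision points the finitely many endpoints of the $I_k$ that lie in the open interval $(x, y)$, together with $x$ and $y$ themselves. Between two consecutive such points no endpoint of any $I_k$ intervenes, so any $I_k$ that contains an interior point of that elementary subinterval must contain the whole subinterval; the per-piece Lipschitz bounds then chain exactly as in your displayed inequality. Either route is fine, but the explicit ``endpoints'' decomposition makes the role of finiteness transparent rather than leaving it implicit in a greedy process.
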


\begin{fact}
	\label{fact:lipDiff}
	Let $\IntervalA$ be a closed interval and let the function
	$u:\IntervalA\rightarrow\Reals$ be continuous over $\IntervalA$ and
	differentiable over $\Interior{\IntervalA}$ with $|u'(x)|\leq\LipConstant$ for
	all $x$ in $\Interior{\IntervalA}$.  Then $u$ is $\LipConstant$-Lipschitz over
	$\IntervalA$.
\end{fact}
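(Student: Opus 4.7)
The plan is to reduce the claim to a direct invocation of the Mean Value Theorem from single-variable calculus. Fix arbitrary $x, y \in \IntervalA$ with $x < y$; it suffices to establish the bound $|u(y) - u(x)| \le \LipConstant (y - x)$, since the case $x = y$ is trivial and the case $x > y$ follows by symmetry.

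First I would verify that the hypotheses of the MVT are met on the subinterval $[x, y]$. Continuity on $[x, y]$ is immediate from the assumed continuity of $u$ on $\IntervalA$. For differentiability on $(x, y)$, I would observe that every point $z \in (x, y)$ lies strictly between the endpoints of $\IntervalA$ (since $x, y \in \IntervalA$ and $x < z < y$ force $z$ to be bounded away from both $\inf \IntervalA$ and $\sup \IntervalA$ by the endpoints $x$ and $y$), so $z \in \Interior{\IntervalA}$; hence $u$ is differentiable on $(x, y)$ by hypothesis.

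Next I would apply the MVT on $[x, y]$ to obtain a point $c \in (x, y)$ satisfying $u(y) - u(x) = u'(c)(y - x)$. Since $c \in (x, y) \subseteq \Interior{\IntervalA}$, the assumed derivative bound gives $|u'(c)| \le \LipConstant$, and taking absolute values yields $|u(y) - u(x)| = |u'(c)| \cdot (y - x) \le \LipConstant (y - x) = \LipConstant |y - x|$, which is exactly the $\LipConstant$-Lipschitz inequality on $\IntervalA$.

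The argument is essentially a textbook application, so there is no real technical obstacle. The only mildly subtle point is the fact that differentiability is assumed only on $\Interior{\IntervalA}$ rather than on all of $\IntervalA$, which could in principle cause trouble if $x$ or $y$ is a boundary point of $\IntervalA$; this is precisely what the observation $(x,y) \subseteq \Interior{\IntervalA}$ above resolves, since the endpoint values of $u$ enter the MVT only through continuity on $[x, y]$.
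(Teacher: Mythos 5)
Your proof is correct: the paper states this as an unproved ``Fact'' (a standard calculus result), and your Mean Value Theorem argument is exactly the textbook proof one would supply, including the slightly delicate observation that $(x,y)\subseteq\Interior{\IntervalA}$ so that differentiability need only be assumed on the interior.
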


\begin{fact}
	\label{fact:lipMaxMin} Let $\IntervalA$ be a closed interval
	and let the functions $u:\IntervalA\rightarrow\Reals$ and
	$v:\IntervalA\rightarrow\Reals$ be $\LipConstant_1$-Lipschitz
	and $\LipConstant_2$-Lipschitz, respectively, over
	$\IntervalA$.  Then the functions $\max(u,v)$ and $\min(u,v)$
	are each $\max(\LipConstant_1,\LipConstant_2)$-Lipschitz over
	$\IntervalA$.
\end{fact}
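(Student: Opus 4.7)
The plan is to prove the statement directly from the definition of Lipschitz continuity by a short case analysis at two arbitrary points. Set $\LipConstant=\max(\LipConstant_1,\LipConstant_2)$, fix points $x,y\in\IntervalA$, and write $U=\max(u,v)$. Our goal is to show $|U(x)-U(y)|\leq\LipConstant|x-y|$, and analogously for $\min(u,v)$.

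By symmetry in $x$ and $y$, assume without loss of generality $U(x)\geq U(y)$, so $|U(x)-U(y)|=U(x)-U(y)$. Split into two cases on which function realizes the maximum at $x$. If $U(x)=u(x)$, then since $U(y)\geq u(y)$ by definition of the pointwise maximum, we have
\[
U(x)-U(y)\leq u(x)-u(y)\leq\LipConstant_1|x-y|\leq\LipConstant|x-y|,
\]
where the second inequality uses that $u$ is $\LipConstant_1$-Lipschitz over $\IntervalA$. The case $U(x)=v(x)$ is identical with $v$ and $\LipConstant_2$ in place of $u$ and $\LipConstant_1$. This establishes that $\max(u,v)$ is $\LipConstant$-Lipschitz.

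For $\min(u,v)$, the cleanest route is to invoke the identity $\min(u,v)=-\max(-u,-v)$, observing that $-u$ and $-v$ are $\LipConstant_1$- and $\LipConstant_2$-Lipschitz over $\IntervalA$ (negation preserves Lipschitz constants), so the result just proved applies to $\max(-u,-v)$ and hence to its negation. Alternatively, one can repeat the two-case argument above with the inequalities reversed. No step here is delicate; the only point worth flagging is remembering to reduce to the case $U(x)\geq U(y)$ at the outset so that the absolute value disappears cleanly before the case split.
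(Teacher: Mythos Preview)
Your proof is correct. The paper itself states this as a \texttt{fact} without proof, treating it as a standard result; your direct case analysis (reduce to $U(x)\geq U(y)$, split on which function attains the max at $x$, then handle $\min$ via $\min(u,v)=-\max(-u,-v)$) is exactly the routine verification one would expect, and every step is sound.
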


\begin{lemma}
	\label{lem:lip}
	Let $\IntervalA$ be a closed interval and let the functions
	$u:\IntervalA\rightarrow\Reals$ and $v:\IntervalA\rightarrow\Reals$ be
	continuous over $\IntervalA$ and differentiable over
	$\Interior{\IntervalA}$. Further assume that (1) $|u'(x)|\leq \LipConstant_1$ for
	all $x$ in $\Interior{\IntervalA}$ and (2) $|v'(x)|\leq \LipConstant_2$ for all
	$x$ in $\Interior{\IntervalA}$ such that $u(x)\geq v(x)$.  Then
	$\min(u(x),v(x))$ is $\max(\LipConstant_1,\LipConstant_2)$-Lipschitz over $\IntervalA$.
\end{lemma}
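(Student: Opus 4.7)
Set $\LipConstant = \max(\LipConstant_1, \LipConstant_2)$ and $w = \min(u, v)$. Observe first that hypothesis~(1) together with Fact~\ref{fact:lipDiff} makes $u$ itself $\LipConstant_1$-Lipschitz on $\IntervalA$; this fact will be used throughout. The goal is to show $|w(s) - w(t)| \leq \LipConstant(s-t)$ for all $t, s \in \IntervalA$ with $t < s$. I plan to reduce to proving only the upper bound $w(s) - w(t) \leq \LipConstant(s-t)$, since the matching lower bound follows by applying the upper bound to the reflected pair $\tilde u(r) = u(c - r)$, $\tilde v(r) = v(c - r)$ on the reflected interval (for any constant $c$), for which both derivative hypotheses remain valid.

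To prove $w(s) - w(t) \leq \LipConstant(s-t)$, I would case-split on the value of $w(t)$. If $w(t) = u(t)$, the bound is immediate since $w(s) \leq u(s)$ and $u$ is $\LipConstant_1$-Lipschitz. The interesting case is $w(t) = v(t) < u(t)$. Here, I would introduce a pivot
\[
r = \inf\{r' \in (t, s] : v(r') \geq u(r')\},
\]
with the convention $r = s$ if this set is empty. In the empty case, $v < u$ throughout $[t, s]$, so hypothesis~(2) combined with Fact~\ref{fact:lipDiff} makes $v$ (and hence $w = v$) $\LipConstant_2$-Lipschitz on $[t, s]$, giving the bound directly. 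In the nonempty case, continuity of $u - v$ forces $u(r) = v(r)$: if $v(r) > u(r)$ held, continuity would put a full neighborhood of $r$ into the defining set, contradicting $r = \inf$. On $[t, r]$ we then have $v \leq u$, so hypothesis~(2) and Fact~\ref{fact:lipDiff} make $v$ $\LipConstant_2$-Lipschitz there, while $u$ is $\LipConstant_1$-Lipschitz on $[r, s]$. Using $u(r) = v(r)$ to chain these estimates:
\[
w(s) - w(t) \leq u(s) - v(t) = (u(s) - u(r)) + (v(r) - v(t)) \leq \LipConstant_1(s - r) + \LipConstant_2(r - t) \leq \LipConstant(s - t).
\]

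The main obstacle is that hypothesis~(2) supplies only a conditional bound on $v'$, namely where $u \geq v$, whereas on a general subinterval $[t, s]$ the sign of $u - v$ may alternate. The pivot $r$ isolates the first location where $v$ ``catches up'' to $u$; on $[t, r]$ the partial bound on $v$ applies, and on $[r, s]$ one falls back on the universal Lipschitz bound on $u$. The role of continuity is to guarantee $u(r) = v(r)$ at the pivot, which is precisely the equality needed to glue the two separate Lipschitz estimates into a single telescoping chain.
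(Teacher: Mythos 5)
Your proof is correct, and it takes a somewhat different path than the paper's. The paper forms the set $X = \{x \in [a,b] : u(x) \leq v(x)\}$, takes $a^* = \inf X$ and $b^* = \sup X$, and splits $[a,b]$ into three pieces $[a,a^*]$, $[a^*,b^*]$, $[b^*,b]$: on the two outer pieces $u > v$ holds so the minimum equals $v$ and hypothesis~(2) applies, while on the middle piece the observation that $\min(u,v)$ equals $u$ at both endpoints lets one invoke the unconditional bound on $u'$. You instead reduce by a reflection symmetry to the one-sided inequality $w(s) - w(t) \leq \LipConstant(s-t)$, then work from only the \emph{first} crossing $r$ of $v$ past $u$, splitting into two pieces $[t,r]$ (where $v \leq u$, so hypothesis~(2) governs $v$) and $[r,s]$ (where you simply upper-bound $w(s)$ by $u(s)$). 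The key identity $u(r) = v(r)$, which lets you glue the two estimates, plays the same role as the paper's observation that $t(a^*) = u(a^*)$ and $t(b^*) = u(b^*)$, and both proofs lean on the same continuity argument to establish equality at the pivot(s). Your version trades the paper's three-piece telescoping for an extra symmetry step but a simpler pointwise case analysis; the two approaches are essentially equal in length and rigor, and both correctly exploit that hypothesis~(2) is only available where $u \geq v$.

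One small remark: when you argue $u(r) = v(r)$, you establish only the direction $v(r) \leq u(r)$ explicitly (points just left of $r$ would contradict the infimum). The other direction $v(r) \geq u(r)$ also needs a word — it follows because $\{r' : v(r') \geq u(r')\}$ is closed by continuity, so the infimum $r$ (which you have shown exceeds $t$ and hence lies in $(t,s]$) actually belongs to the set. This is the same level of care the paper itself glosses over, so it is a cosmetic rather than substantive gap.
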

\begin{proof}
	Let $a$ and $b$ be distinct elements of $\IntervalA$ with $a<b$.  Let
	$\LipConstant$ denote $\max(\LipConstant_1,\LipConstant_2)$ and let $t(x)$ denote the function
	$\min(u(x),v(x))$. 
    We need to prove that $|t(b)-t(a)|\leq\LipConstant(b-a)$.  
    Let $X$ denote $\{x\in [a,b]\mid u(x)\leq v(x)\}$. If $X$ is empty, it
	follows from Fact~\ref{fact:lipDiff} that $t(x)$ is $\LipConstant_2$-Lipschitz
	over $[a,b]$ and hence $|t(b)-t(a)|\leq \LipConstant_2(b-a)\leq \LipConstant(b-a)$.
	
	Now assume that $X$ is nonempty. Let $a^*$ (resp., $b^*$)
	denote the infimum (resp., supremum) of the set $X$. We have
\begin{eqnarray*}
|t(b)-t(a)|
& = &
|t(a^*)-t(a)|+|t(b^*)-t(a^*)|+|t(b)-t(b^*)|\\
& = &
|t(a^*)-t(a)|+|u(b^*)-u(a^*)|+|t(b)-t(b^*)|\\
& \leq &
\LipConstant_2(a^*-a)+\LipConstant_1(b^*-a^*)+\LipConstant_2(b-b^*)\\
& \leq &
\LipConstant(b-a),
\end{eqnarray*}
where the first inequality holds because Fact~\ref{fact:lipDiff} implies that $u(x)$ is
$\LipConstant_1$-Lipschitz over $[a^*,b^*]$ and $t(x)$ is
$\LipConstant_2$-Lipschitz over intervals $[a,a^*]$ and
$[b^*,b]$.
\end{proof}

\section{Towards Lipschitz Continuity of Mechanism~$\MechLipTwo$}
\label{sec:lip}

Throughout this appendix, let $\Prof^*$ be an $n$-profile, let $k$
belong to $[n]$, let $\IntervalA^*$ denote $\Interval{\Prof^*}{k}$,
and let $\Prof$ be the same as $\Prof^*$ except that we regard
$\Prof_k$ as a variable with domain $\IntervalA^*$. Thus, as $\Prof_k$
varies over $\IntervalA^*$, $\Prof$ varies over the set of
$n$-profiles such that $\Prof_i=\Prof^*_i$ for all $i$ in $[n]-k$.
The goal of this section is to establish that $\Fac{\Prof}{\ell}$ and
$\Backoff{\Prof}{\ell}$ (which are functions of the lone variable
$\Prof_k$ with domain $\IntervalA^*$) are constant-Lipschitz for all
$\ell$ in $\{1,2\}$; see Lemma~\ref{lem:lipMainHelper} below.
This claim holds trivially if
$\IntervalA^*=\{\Prof^*_k\}$, so for the remainder of this appendix, we
assume that $\IntervalA^*$ properly contains $\Set{\Prof_k^*}$.

As in \cref{sec:lipFacts}, for any nonempty closed interval $\IntervalA$ of the real line, we write $\Interior{\IntervalA}$ to denote the largest open interval
contained in $\IntervalA$. We define a \emph{knot} as a real number in $\Interior{\IntervalA^*}$.
For any
finite set of knots $\Knots=\{x_1,\ldots,x_s\}$ where
$x_1<\cdots<x_s$, we define $\Cells{\Knots}$ as the set of closed
intervals, which we call cells, defined as follows. If $\Knots$
is empty, then $\Cells{\Knots}$ contains only the closed interval
$\IntervalA^*$.  Otherwise, $\Cells{\Knots}$ contains the following
$s+1$ closed intervals: the interval consisting of all real numbers in
$\IntervalA^*$ less than or equal to $x_1$; the closed interval
$[x_i,x_{i+1}]$ for each $i$ in $[s-1]$; the interval consisting of
all real numbers in $\IntervalA^*$ greater than or equal to $x_s$.

The remainder of \cref{sec:lip} is organized as
follows. \cref{sec:knots} defines a set of knots $\Knots_6$ for
partitioning the interval $\IntervalA^*$ so that various functions are
well-behaved (e.g., differentiable) throughout the interior of each of
the resulting cells. \cref{sec:cell} establishes Lipschitz-type bounds for various functions over a single such cell.

\subsection{Identifying a Suitable Set of Knots}
\label{sec:knots}

For each $i$ in $[n-1]$, there is a unique largest open interval
$\IntervalA_i$ in $\IntervalA^*$ such that $i$ belongs to
$\Active{\Prof}$ for all $\Prof_k$ in $\IntervalA_i$. We define the
set of knots $\Knots_1$ to include any endpoint of such a nonempty
interval $\IntervalA_i$ that is not an endpoint of $\IntervalA^*$.

For any $i$ in $[n-1]$, we say that solution $\Cand{\Prof}{i}$ is
\emph{stationary} if $k$ does not belong
to $\{\lceil i/2\rceil,\lceil (n+i)/2\rceil\}$ and is
\emph{non-stationary} otherwise.  In this section, we do not
need to concern ourselves with solution $\Cand{\Prof}{n}$ because
profile $\Prof$ is nontrivial for all $\Prof_k$ in
$\Interior{\IntervalA^*}$.

It is easy to see that $P_1(\Knots_1)$ holds, where the predicate
$P_1(\Knots)$ is defined below for any set of knots $\Knots$.

$P_1(\Knots)$: For any cell $\Cell$ in $\Cells{\Knots}$, there is a
subset $S$ of $[n-1]$ such that $\Active{\Prof}=S$ for all
$\Prof_k$ in $\Interior{\Cell}$. Moreover, for any $i$ in $S$,
$\CostSoln{\Prof}{\Cand{\Prof}{i}}$ is an affine function of $\Prof_k$
over $\Cell$, where the slope belongs to $\{-1,1\}$ (resp.,
$\{-1,0\}$) if $\Cand{\Prof}{i}$ is stationary (resp.,
non-stationary).

Consider how $\Index{\Prof}$ can change as $\Prof_k$ increases across
a cell in $\Cells{\Knots_1}$. Since $P_1(\Knots_1)$ holds, each time
$\Index{\Prof}$ changes, the slope of the associated cost function
decreases. Since the slope belongs to $\{-1,0,1\}$, there are at most
two such transitions per cell.  We construct the set of knots
$\Knots_2$ by starting with $\Knots_1$ and adding at most two knots
per cell in $\Cells{\Knots_1}$ at such transitions.  It is easy to see
that $\wedge_{1\leq i\leq 2}P_i(\Knots_2)$ holds, where the predicate
$P_2(\Knots)$ is defined below for any set of knots $\Knots$.

$P_2(\Knots)$: For any cell $\Cell$ in $\Cells{\Knots}$,
$\Index{\Prof}$ is the same for all $\Prof_k$ in
$\Interior{\Cell}$. Moreover, the number of agents served by the left
(resp., right) facility under $\Canonical{\Prof}$ remains the same for
all $\Prof_k$ in $\Interior{\Cell}$.

For any finite set of knots $\Knots$ containing $\Knots_2$, a cell
$\Cell$ in $\Cells{\Knots}$ is said to be increasing-cost (resp.,
decreasing-cost, constant-cost) if $\CostPrime{\Prof}=1$ (resp, $-1$,
$0$) for all $\Prof_k$ in $\Interior{\Cell}$.

\begin{lemma}
\label{lem:helper}
Let $\Cell$ be a cell in $\Cells{\Knots_2}$. Then the function
$\Prof_k-\Helper{\Prof}{1}$ is nondecreasing over $\Cell$. 
\end{lemma}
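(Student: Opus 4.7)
The plan is to analyze $h(\Prof_k) := \Helper{\Prof}{1}$ as a function of $\Prof_k$ on $\Cell$ and show that $\phi(\Prof_k) := \Prof_k - h(\Prof_k)$ has nonnegative derivative almost everywhere on $\Cell$; combined with continuity of $\phi$, this yields the nondecreasing conclusion. Continuity of $h$ (hence of $\phi$) follows from the observation that $g(h, \Prof_k) := \sum_i \max(0, h - \Prof_i)$ is jointly continuous and strictly increasing in $h$ on $[\Prof_1, \infty)$, while $\Cost{\Prof}$ is continuous in $\Prof_k$ by \cref{lem:lipCost}; the unique $h \geq \Prof_1$ solving $g(h, \Prof_k) = \Cost{\Prof}$ therefore depends continuously on $\Prof_k$.

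To handle the derivative condition, I would excise from $\Cell$ the finite set of $\Prof_k$ at which $h$ coincides with some $\Prof_i$ or at which $\Prof$ is trivial, leaving a cofinite union of open subintervals. On each such subinterval $J$, the set $A(\Prof_k) := \{i \in [n] : \Prof_i < h\}$ is constant with $|A| \geq 1$, and the defining equation reduces to $|A| h - \sum_{i \in A} \Prof_i = \Cost{\Prof}$, giving
\[
h(\Prof_k) = \frac{\Cost{\Prof} + \sum_{i \in A} \Prof_i}{|A|}.
\]
Invoking $P_1(\Knots_2)$ and $P_2(\Knots_2)$, the function $\Cost{\Prof}$ is affine over $\Cell$ with some slope $\Cost' \in \{-1, 0, 1\}$, and $\sum_{i \in A} \Prof_i$ is affine in $\Prof_k$ with slope $\mathbbm{1}[k \in A]$. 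Hence on $J$,
\[
\phi'(\Prof_k) = 1 - \frac{\Cost' + \mathbbm{1}[k \in A]}{|A|},
\]
so the task reduces to showing $|A| \geq \Cost' + \mathbbm{1}[k \in A]$.

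The cases $\Cost' \leq 0$ are routine: the right side is at most $\mathbbm{1}[k \in A]$, and whenever $k \in A$ we have $|A| \geq 1$. The substantive case is $\Cost' = +1$, which by $P_1(\Knots_2)$ forces $\Cand{\Prof}{\Index{\Prof}}$ to be stationary; in particular, with $i = \Index{\Prof}$, we have $k \notin \{\lceil i/2 \rceil, \lceil (n+i)/2 \rceil\}$, so the facility serving agent $k$ sits at some $\Prof_j$ with $j \neq k$, and the slope of the individual cost $|\Prof_k - \Prof_j|$ being $+1$ over the interior of $\Cell$ forces $\Prof_j < \Prof_k$ there. If $k \notin A$, then $\Cost{\Prof} > 0$ already guarantees $|A| \geq 1$. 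If $k \in A$, then $\Prof_j < \Prof_k < h$ places $j$ into $A$, yielding $|A| \geq 2$ as needed. The main obstacle throughout is this last structural step: extracting, from the $P_1(\Knots_2)$ characterization of the $\Cost' = +1$ regime, an agent $j$ strictly to the left of $\Prof_k$ that bears witness to a second element of $A$.
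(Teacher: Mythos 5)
Your proposal is correct in its core reasoning, but it takes a genuinely different route from the paper. You reduce the claim to a derivative inequality $|A| \geq \Cost' + \mathbbm{1}[k\in A]$ on subintervals where the active set $A=\{i:\Prof_i<h\}$ is fixed, then resolve the only nontrivial case $\Cost'=+1$ by invoking $P_1(\Knots_2)$ to extract the stationary serving-facility index $j\neq k$ with $\Prof_j<\Prof_k$, so that $j\in A$ as well. The paper instead proceeds by a direct two-point comparison: it takes $x^{(1)}<x^{(2)}$ in $\Cell$, sets $x=\Helper{\Prof^{(1)}}{1}+\eps$ with $\eps=x^{(2)}-x^{(1)}$, and bounds $\sum_i\max(0,x-\Prof^{(2)}_i)$ from below by $\Cost{\Prof^{(1)}}+(s-\mathbbm{1}_{k\leq s})\eps$, where $s$ counts the agents at or below $\Helper{\Prof^{(1)}}{1}$. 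When $s\geq 2$ or $k>1$ this gives $\geq\Cost{\Prof^{(1)}}+\eps\geq\Cost{\Prof^{(2)}}$ via Lemma~\ref{lem:lipCost}; when $k=s=1$ it uses $P_1$/$P_2$ to show the cell is non-increasing-cost. So both arguments ultimately rest on the same structural fact — that the awkward regime is precisely the one where $\Cost$ increases but too few agents sit to the left — but the paper gets there with no calculus, no piecewise-affine machinery, and no excision of breakpoints, which makes it shorter and avoids the implicit claim you need (that $h$ is piecewise affine with finitely many pieces, so $A$ is locally constant off a finite set).

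Two small gaps in your write-up that you would need to close. First, you excise the points ``at which $\Prof$ is trivial,'' but the quantity that must be positive for your formula $h=(\Cost{\Prof}+\sum_{i\in A}\Prof_i)/|A|$ to make sense is $\Cost{\Prof}$, and $\Cost{\Prof}=0$ can hold for nontrivial profiles (e.g.\ any profile with at most two distinct values). You should instead separate out the sub-case $\Cost{\Prof}\equiv 0$ on $\Cell$ (possible only if $\Cost'=0$, by $P_1$), where $h=\Prof_1$ and $\phi'=1-\mathbbm{1}_{k=1}\geq 0$ directly. Second, the decomposition of $\Cell$ into finitely many pieces on which $A$ is constant requires knowing in advance that $h$ is piecewise affine in $\Prof_k$; your derivation of the affine formula for $h$ presupposes that $A$ is locally constant, so the two facts must be established together (e.g.\ by observing that the implicit equation $g(h,\Prof_k)=\Cost{\Prof}$ defines a piecewise-linear curve). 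Neither gap is fatal, but both are sidestepped entirely by the paper's pointwise argument.
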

\begin{proof}
Let $x^{(1)}$ and $x^{(2)}$ be distinct values in $\Cell$.  Assume
that $x^{(1)}<x^{(2)}$ and let $\eps$ denote $x^{(2)}-x^{(1)}$.  For
each $i$ in $\{1,2\}$, let $\Prof^{(i)}$ denote the profile $\Prof^*$
with component $\Prof^*_k$ replaced by $x^{(i)}$.  Let $x$ denote
$\Helper{\Prof^{(1)}}{1}+\eps$.  We need to prove that
$\Helper{\Prof^{(2)}}{1}\leq x$.

Let $s$ denote the maximum index in $[n]$ such that
$\Prof^{(1)}_s\leq\Helper{\Prof^{(1)}}{1}$.  (Such an index exists
since $\Prof^{(1)}_1\leq\Helper{\Prof^{(1)}}{1}$.) We have (1)
$\Prof^{(1)}_i\leq\Helper{\Prof^{(1)}}{1}$ for all $i$ in $[s]$, (2)
$\Prof^{(2)}_i=\Prof^{(1)}_i$
for all $i$
in $[s]-k$, (3) $\Prof^{(2)}_k=\Prof^{(1)}_k+\eps$. Let $z$ denote
$\sum_{i\in [n]}\max(0,x-\Prof^{(2)}_i)$. We need to prove that
$z\geq\Cost{\Prof^{(2)}}$.  Observe that
$z\geq\Cost{\Prof^{(1)}}+(s-\Indicator{k\leq s})\eps$.  We consider
two cases.

Case~1: $k>1$ or $s>1$. In this case,
$z\geq\Cost{\Prof^{(1)}}+\eps$. Since Lemma~\ref{lem:lipCost} implies $\Cost{\Prof^{(2)}}\leq\Cost{\Prof^{(1)}}+\eps$, we conclude that
$z\geq\Cost{\Prof^{(2)}}$, as required.

Case~2: $k=s=1$.  In this case,
$z\geq\Cost{\Prof^{(1)}}+\eps-\eps=\Cost{\Prof^{(1)}}$ and so it is
sufficient to prove that $\Cost{\Prof^{(1)}}\geq\Cost{\Prof^{(2)}}$.
Observe that $\Cost{\Prof^{(1)}}=\Helper{\Prof^{(1)}}{1}-x^{(1)}$ and
$\Index{\Prof}=1$ throughout cell $\Cell$.  Since $k=1$, solution
$\Cand{\Prof}{1}$ is non-stationary and hence cell $\Cell$ is either
decreasing-cost or constant-cost (in fact, cell $\Cell$ is
constant-cost, but we won't need to argue that here), ensuring that
$\Cost{\Prof^{(1)}}\geq\Cost{\Prof^{(2)}}$, as required.
\end{proof}

Lemma~\ref{lem:helper} implies that as $\Prof_k$ increases across a
given cell in $\Cells{\Knots_2}$, there is at most one value where a
transition occurs from $\Prof_k\leq\Helper{\Prof}{1}$ to
$\Prof_k>\Helper{\Prof}{1}$.  Let $\Knots_3$ denote the set of knots
obtained by starting with $\Knots_2$ and adding a knot at each such
transition value.  It is easy to see that $\wedge_{1\leq i\leq
  3}P_i(\Knots_3)$ holds, where the predicate $P_3(\Knots)$ is defined
below for any set of knots $\Knots$.

$P_3(\Knots)$: For any cell $\Cell$ in $\Cells{\Knots}$, either (1)
$\Prof_k\leq\Helper{\Prof}{1}$ for all $\Prof_k$ in $\Interior{\Cell}$
or (2) $\Prof_k>\Helper{\Prof}{1}$ for all $\Prof_k$ in
$\Interior{\Cell}$.

For any finite set of knots $\Knots$ containing $\Knots_3$, a cell
$\Cell$ in $\Cells{\Knots}$ is said to be \emph{left-sided} (resp.,
\emph{right-sided}) if $\Prof_k\leq\Helper{\Prof}{1}$ (resp.,
$\Prof_k>\Helper{\Prof}{1}$) for all $\Prof_k$ in $\Interior{\Cell}$.

For any finite set of knots $\Knots$ containing $\Knots_3$, a cell
$\Cell$ in $\Cells{\Knots}$ is said to be
\emph{$\HelperFn{1}$-increasing} (resp.,
\emph{$\HelperFn{1}$-decreasing}, \emph{$\HelperFn{1}$-constant}) if
$\HelperFn{1}$ is strictly increasing (resp., strictly decreasing,
constant) over $\Cell$.

\begin{lemma}
\label{lem:helperMono}
Let $\Knots$ be a finite set of knots that includes $\Knots_3$ and let
$\Cell$ be a cell in $\Cells{\Knots}$.  Then cell $\Cell$ is
$\HelperFn{1}$-increasing, $\HelperFn{1}$-decreasing, or
$\HelperFn{1}$-constant.
\end{lemma}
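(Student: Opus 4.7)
The plan is to fix a cell $\Cell$ in $\Cells{\Knots}$ and perform a case analysis based on $P_3$ (which applies since $\Knots \supseteq \Knots_3$), computing the derivative of $\Helper{\Prof}{1}$ with respect to $\Prof_k$ via implicit differentiation of the defining equation $\sum_{i \in [n]}\max(0,x-\Prof_i) = \Cost{\Prof}$. First, I would note that $P_1$ combined with $P_2$ implies that $\Cost{\Prof}$ is affine in $\Prof_k$ over $\Cell$ with some constant slope $c \in \{-1,0,1\}$; moreover, since $\Prof$ is nontrivial throughout $\Interior{\IntervalA^*}$, we have $\Cost{\Prof} > 0$ on $\Interior{\Cell}$. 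By $P_3$, $\Cell$ is either left-sided or right-sided.

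In the left-sided case, $\Prof_k \leq \Helper{\Prof}{1}$ over $\Interior{\Cell}$, so letting $S = \{i : \Prof_i \leq \Helper{\Prof}{1}\}$ (which contains $k$) and $s = |S|$, the defining equation simplifies to $s\cdot\Helper{\Prof}{1} - \sum_{i \in S}\Prof_i = \Cost{\Prof}$. Over any sub-interval of $\Interior{\Cell}$ on which $S$ is constant, implicit differentiation with respect to $\Prof_k$ (using that $\Prof_k$ appears once in the sum) yields $d\Helper{\Prof}{1}/d\Prof_k = (c+1)/s$, which equals $0$ when $c = -1$ and is strictly positive when $c \in \{0,1\}$. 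In the right-sided case, agent $k$'s contribution to the LHS vanishes, and with $S' = \{i \neq k : \Prof_i \leq \Helper{\Prof}{1}\}$ (necessarily nonempty since $\Helper{\Prof}{1} \geq \Prof_1$ and $k \neq 1$ is forced by $\Prof_k > \Helper{\Prof}{1} \geq \Prof_1$), we analogously derive $d\Helper{\Prof}{1}/d\Prof_k = c/|S'|$, which is strictly negative, zero, or strictly positive for $c = -1, 0, 1$ respectively.

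As $\Prof_k$ ranges over $\Cell$, $S$ (resp., $S'$) can change at finitely many points where $\Helper{\Prof}{1}$ crosses some fixed $\Prof_i$ with $i \neq k$, so $\HelperFn{1}$ is only piecewise linear as a function of $\Prof_k$. The main obstacle is to bridge across these kinks, which I would handle by observing that in each of the six sub-cases above, the sign of the derivative depends only on the case (left-sided versus right-sided) and on $c$, and not on the magnitudes $s$ or $|S'|$; thus the sign is uniform across all pieces. Combining this uniform sign with the continuity of $\HelperFn{1}$ as a function of $\Prof_k$ (which follows from the continuity and strict monotonicity in $x$ of the LHS of the defining equation for $x > \Prof_1$, together with the continuity of $\Cost{\Prof}$ from Lemma~\ref{lem:lipCost}) yields that $\HelperFn{1}$ is strictly increasing, strictly decreasing, or constant across all of $\Cell$, matching the three conclusion types in the lemma statement.
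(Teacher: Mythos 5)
Your proposal is correct and essentially fills in the calculation that the paper's proof dismisses as ``straightforward to prove'': the paper asserts exactly the classification (by left/right-sidedness and the sign of $\CostPrime{\Prof}$) that your implicit-differentiation formulas $(c+1)/s$ and $c/|S'|$ deliver, and your bridging argument across the kinks where $S$ or $S'$ changes---that the sign of the derivative is determined by the sub-case alone, independently of the cardinalities---is the right way to lift the piecewise statements to strict monotonicity or constancy over the whole cell. One side remark in your write-up is false but harmless: nontriviality of $\Prof$ on $\Interior{\IntervalA^*}$ does \emph{not} imply $\Cost{\Prof}>0$, since a nontrivial profile with only two distinct values (e.g.\ all agents other than $k$ coinciding) has zero two-facility cost. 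Fortunately nothing in your argument actually depends on this: the nonemptiness of $S'$ already follows from $\Helper{\Prof}{1}\geq\Prof_1$ together with $k\neq 1$, and the two derivative formulas remain valid when $\Cost{\Prof}=0$.
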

\begin{proof}
The following claims are straightforward to prove: if $\Cell$ is
either (1) increasing-cost or (2) constant-cost and left-sided, then
$\Cell$ is $\HelperFn{1}$-increasing; if $\Cell$ is decreasing-cost
and right-sided, then $\Cell$ is $\HelperFn{1}$-decreasing; otherwise,
$\Cell$ is $\HelperFn{1}$-constant.
\end{proof}

For any $\HelperFn{1}$-increasing or $\HelperFn{1}$-decreasing cell
$\Cell$ in $\Cells{\Knots_3}$ and any $i$ in $[n]-k$, there can be at
most one value of $\Prof_k$ in $\Interior{\Cell}$ where
$\Helper{\Prof}{1}=\Prof_i$. We construct the set of knots $\Knots_4$
by starting with $\Knots_3$ and adding at most one knot per cell in
$\Cells{\Knots_3}$ corresponding to such a value.

\begin{lemma}
\label{lem:helperTech}
Let $\Prof$ be an $n$-profile and let $\Cell$ be a cell in
$\Cells{\Knots_4}$. Then $\Helper{\Prof}{1}<\Prof_n$ for all $\Prof_k$
in $\Interior{\Cell}$.
\end{lemma}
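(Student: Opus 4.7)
The plan is to show the claim by converting $\Helper{\Prof}{1} < \Prof_n$ into a strict inequality about $\Cost{\Prof}$, and then exhibiting a specific two-facility solution that beats the bound. The main leverage comes from the fact that $\Prof$ is nontrivial whenever $\Prof_k \in \Interior{\IntervalA^*}$, as already observed in the paragraph defining stationary solutions. The sophisticated structure of $\Knots_4$ is not actually needed for this particular step; only nontriviality is used.

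First I would reformulate the target. The map $g(x) = \sum_{i \in [n]} \max(0, x - \Prof_i)$ is continuous, nondecreasing on $\Reals$, and strictly increasing on $[\Prof_1, \infty)$ for $x \geq \Prof_1$. By definition, $g(\Helper{\Prof}{1}) = \Cost{\Prof}$ and $g(\Prof_n) = \sum_{i \in [n]}(\Prof_n - \Prof_i)$. Since $\Helper{\Prof}{1} \geq \Prof_1$, the bound $\Helper{\Prof}{1} < \Prof_n$ is equivalent to $\Cost{\Prof} < \sum_{i \in [n]}(\Prof_n - \Prof_i)$.

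Next, since $\Prof_k$ lies in $\Interior{\Cell} \subseteq \Interior{\IntervalA^*}$, the profile $\Prof$ is nontrivial; in particular, $\Prof_1 < \Prof_n$. I would then bound $\Cost{\Prof}$ using the 1-D solution $\SolnOne$ given by $\SolnOne_1 = (\Prof_1, 0)$ and $\SolnOne_2 = (\Prof_n, 0)$. Computing,
\[
\Cost{\Prof} \leq \CostSoln{\Prof}{\SolnOne}
= \sum_{i \in [n]} \min(\Prof_i - \Prof_1,\, \Prof_n - \Prof_i)
\leq \sum_{i = 2}^{n}(\Prof_n - \Prof_i)
= \sum_{i \in [n]}(\Prof_n - \Prof_i) - (\Prof_n - \Prof_1),
\]
where the second inequality uses that agent $1$ contributes $0$ to $\CostSoln{\Prof}{\SolnOne}$. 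Because $\Prof_n - \Prof_1 > 0$, this strict deficit yields $\Cost{\Prof} < \sum_{i \in [n]}(\Prof_n - \Prof_i)$, which combined with the characterization above gives $\Helper{\Prof}{1} < \Prof_n$, as required. There is no real obstacle; the only care needed is to argue strictness in the cost inequality, which follows cleanly from nontriviality by singling out agent $1$'s contribution.
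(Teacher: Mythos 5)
Your proof is correct, and it takes a genuinely different (and simpler) route than the paper's. The paper argues by contradiction and leans heavily on the machinery of $\Knots_4$: assuming $\Helper{\Prof}{1}=\Prof_n$ at some interior point, it invokes the definition of $\Knots_4$ to force $k=n$, then uses $P_3(\Knots_4)$ to propagate the equality $\Helper{\Prof}{1}=\Prof_n$ across the whole cell interior, then compares slopes (via $P_1$) to force $n=2$, and finally observes that $\Cost{\Prof}=0$ when $n=2$, which makes $\Helper{\Prof}{1}=\Prof_1<\Prof_n$, a contradiction. Your argument bypasses all of that: you reformulate $\Helper{\Prof}{1}<\Prof_n$ as the strict cost inequality $\Cost{\Prof}<\sum_{i\in[n]}(\Prof_n-\Prof_i)$ via the strict monotonicity of $g(x)=\sum_i\max(0,x-\Prof_i)$ on $[\Prof_1,\infty)$, and then certify the inequality by exhibiting the feasible solution $\SolnOne=((\Prof_1,0),(\Prof_n,0))$, under which agent $1$ pays zero. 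The strict gap of $\Prof_n-\Prof_1>0$ comes purely from nontriviality of $\Prof$ on $\Interior{\IntervalA^*}$, as already noted in the setup of \cref{sec:knots}. What your approach buys is economy: it never touches $P_3$, $P_4$, or the classification of cells, and in fact it proves the stronger statement that $\Helper{\Prof}{1}<\Prof_n$ holds for every $\Prof_k\in\Interior{\IntervalA^*}$, with no reference to $\Knots_4$ at all. The paper's version is more entangled with the surrounding knot-building machinery but proves nothing extra; your argument is the more transparent of the two.
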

\begin{proof}
It is easy to argue that $\Helper{\Prof}{1}\leq\Prof_n$ for all
$\Prof_k$ in $\Cell$.  Suppose that $\Helper{\Prof}{1}=\Prof_n$ for
some $\Prof_k$ in $\Interior{\Cell}$. The definition of $\Knots_4$
implies $k=n$. Since $P_3(\Knots_4)$ holds, we deduce that
$\Helper{\Prof}{1}=\Prof_n$ for all $\Prof_k$ in $\Interior{\Cell}$.
It follows that $\Cell$ is $\HelperFn{1}$-increasing and
$\HelperPrime{\Prof}{1}=1$ for all $\Prof_k$ in $\Interior{\Cell}$.
Hence $n=2$. But then $\Cost{\Prof}=0$ for all $\Prof_k$ in $\Cell$
and hence $\Helper{\Prof}{1}=\Prof_1$, a contradiction.
\end{proof}

Using Lemma~\ref{lem:helperTech}, it is easy to see that
$\wedge_{1\leq i\leq 4}P_i(\Knots_4)$ holds, where the predicate
$P_4(\Knots)$ is defined below for any set of knots $\Knots$.

$P_4(\Knots)$: For any $\HelperFn{1}$-increasing or
$\HelperFn{1}$-decreasing cell $\Cell$ in $\Cells{\Knots}$, we have:
(1) there is a unique index $i$ in $[n-1]$ such that
$\Prof_i\leq\Helper{\Prof}{1}<\Prof_{i+1}$ holds for all $\Prof_k$ in
$\Interior{\Cell}$; (2) $\Helper{\Prof}{1}$ is an affine function of
$\Prof_k$ throughout cell $\Cell$.  Moreover, the slope of this affine
function is determined by the associated index $i$ as follows: if
$\Cell$ is $\HelperFn{1}$-increasing and left-sided then it is either
increasing-cost and the slope is $2/i$ where $i\geq 2$, or it is
constant-cost and the slope is $1/i$; if $\Cell$ is
$\HelperFn{1}$-increasing and right-sided then it is increasing-cost
and the slope is $1/i$; if $\Cell$ is $\HelperFn{1}$-decreasing then
it is right-sided and decreasing-cost and the slope is $-1/i$.

Thus far we have focused on the function $\HelperFn{1}$. We can
introduce knots for $\HelperFn{2}$ by establishing lemmas for
$\HelperFn{2}$ analogous to
Lemmas~\ref{lem:helper}, \ref{lem:helperMono},
and~\ref{lem:helperTech} for $\HelperFn{1}$, and by introducing
predicates $P_3^*(\Knots)$ and $P_4^*(\Knots)$ for $\HelperFn{2}$
analogous to predicates $P_3(\Knots)$ and $P_4(\Knots)$ for
$\HelperFn{1}$.  We define the set of knots $\Knots_5$ as $\Knots_4$
plus the knots associated with $\HelperFn{2}$. It is easy to see that
$[\wedge_{1\leq i\leq 4}P_i(\Knots_5)]
\wedge P_3^*(\Knots_5)\wedge P_4^*(\Knots_5)$ holds.

For any cell $\Cell$ in $\Cells{\Knots_5}$, there is at most one value
of $\Prof_k$ in $\Interior{\Cell}$ such that
$\Helper{\Prof}{1}=\Mean{\Prof}$ (and hence also
$\Helper{\Prof}{2}=\Mean{\Prof}$); the reason is that
$\MeanPrime{\Prof}$, which is $1/n$, can never match the slope of
$\Helper{\Prof}{1}$ by $P_4(K_5)$.  We obtain the set of knots
$\Knots_6$ by starting with $\Knots_5$ and adding a knot at any such
value. These knots ensure that for each resulting cell $\Cell$ in
$\Cells{\Knots_6}$, either (1) $\Gap{\Prof}=0$ for all $\Prof_k$ in
$\Interior{\Cell}$ or (2) $\Gap{\Prof}>0$ for all $\Prof_k$ in
$\Interior{\Cell}$. We say that a cell satisfying (1) is
\emph{zero-gap} and a cell satisfying (2) is \emph{positive-gap}.
It is easy to see that
$[\wedge_{1\leq i\leq 4}P_i(\Knots_6)]
\wedge P_3^*(\Knots_6)\wedge P_4^*(\Knots_6)$ holds.

We now state the main lemma of \cref{sec:lip}. The proof makes
use of Lemma~\ref{lem:lipMainCell}, which is proven in \cref{sec:cell}.

\begin{lemma}
\label{lem:lipMainHelper}
Let $\ell$ belong to $\{1,2\}$.  Then $\Fac{\Prof}{\ell}$ and
$\Backoff{\Prof}{\ell}$ are each constant-Lipschitz over
$\IntervalA^*$.
\end{lemma}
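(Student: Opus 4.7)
The proof of \cref{lem:lipMainHelper} reduces to a single invocation of \cref{fact:lipCover}. Since $\Knots_6$ is finite, $\Cells{\Knots_6}$ is a finite collection of closed subintervals of $\IntervalA^*$ whose union is $\IntervalA^*$. Granted the per-cell bound of \cref{lem:lipMainCell}, namely that there is a universal constant $\kappa_0$ such that $\Fac{\Prof}{\ell}$ and $\Backoff{\Prof}{\ell}$ are each $\kappa_0$-Lipschitz on every cell $\Cell$ in $\Cells{\Knots_6}$, \cref{fact:lipCover} immediately lifts this to a $\kappa_0$-Lipschitz bound on all of $\IntervalA^*$. So the only substantive content of the lemma lies in the per-cell statement.

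The real obstacle is \cref{lem:lipMainCell} itself. To establish it, I would fix a cell $\Cell$ and exploit the structural predicates $P_1,\ldots,P_4,P_3^*,P_4^*$ satisfied by $\Knots_6$. For $\Fac{\Prof}{\ell}$, every cell is $\HelperFn{\ell}$-increasing, $\HelperFn{\ell}$-decreasing, or $\HelperFn{\ell}$-constant, and in each case $P_4$ and $P_4^*$ assert that $\Helper{\Prof}{\ell}$ is affine in $\Prof_k$ with slope in a finite set of absolute value at most $2$. Since $\Mean{\Prof}$ is $(1/n)$-Lipschitz in $\Prof_k$, \cref{fact:lipMaxMin} then gives a $2$-Lipschitz bound for $\Fac{\Prof}{1}=\min(\Helper{\Prof}{1},\Mean{\Prof})$ and, by the symmetric argument, for $\Fac{\Prof}{2}=\max(\Helper{\Prof}{2},\Mean{\Prof})$.

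For $\Backoff{\Prof}{\ell}$, I would split into the zero-gap and positive-gap cases guaranteed by the construction of $\Knots_6$. On a zero-gap cell, $\BackoffHelper{\Prof}{\ell}=0$ and $\Backoff{\Prof}{\ell}=8\Cost{\Prof}/n$, which is $(8/n)$-Lipschitz by \cref{lem:lipCost}. On a positive-gap cell the difficulty is to bound the Lipschitz constant of $\CostDensity{\Prof}{\ell}=\Cost{\Prof}/\Weight{\Prof}{\ell}$, which could in principle have large derivative when $\Weight{\Prof}{\ell}$ is small; the saving grace is that the outer $\max$ with $8\Cost{\Prof}/n$ in the definition of $\Backoff{\Prof}{\ell}$ lets us invoke the asymmetric \cref{lem:lip}, so we only need to control the derivative of $\BackoffHelper{\Prof}{\ell}$ on the subregion where it dominates $8\Cost{\Prof}/n$. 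There, \cref{lem:weightLowerBound} and \cref{lem:weightUpperBoundHelper} give two-sided control of $\Weight{\Prof}{\ell}$, while the Lipschitz bound on $\Fac{\Prof}{1},\Fac{\Prof}{2}$ from the previous paragraph controls the derivative of $\Weight{\Prof}{\ell}$ as a function of $\Prof_k$. Combining these ingredients via \cref{fact:lipDiff}, \cref{fact:lipMaxMin}, and \cref{lem:lip} produces a universal $\kappa_0$, completing the per-cell argument and hence the proof of \cref{lem:lipMainHelper}.
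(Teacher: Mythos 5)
Your first paragraph --- reduce to the per-cell bound of \cref{lem:lipMainCell} and lift it to $\IntervalA^*$ via \cref{fact:lipCover} --- is exactly the paper's proof of \cref{lem:lipMainHelper}, and it is correct as stated.

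Where you go astray is in the sketch of \cref{lem:lipMainCell}, which you volunteer beyond the stated scope. You attribute the asymmetric \cref{lem:lip} to the outer $\max$ in $\Backoff{\Prof}{\ell}=\max(8\Cost{\Prof}/n,\BackoffHelper{\Prof}{\ell})$, claiming it lets you confine attention to the subregion where $\BackoffHelper{\Prof}{\ell}$ dominates $8\Cost{\Prof}/n$. In the paper, \cref{lem:lip} is instead applied to the \emph{inner} $\min$ defining $\BackoffHelper{\Prof}{\ell}=\min(\Gap{\Prof},2\CostDensity{\Prof}{\ell})$, and this is where the real work happens: by \cref{lem:lipPosGapSmallCostDensity}, $|\CostDensityPrime{\Prof}{\ell}|$ is $O(1+\CostDensity{\Prof}{\ell}/\Gap{\Prof})/s_{\ell}$, which can be arbitrarily large when $\CostDensity{\Prof}{\ell}\gg\Gap{\Prof}$ --- but that is precisely the regime in which $\Gap{\Prof}$, not $2\CostDensity{\Prof}{\ell}$, achieves the inner $\min$, so the asymmetric lemma with $u=\Gap{\Prof}$ and $v=2\CostDensity{\Prof}{\ell}$ is what rescues the bound. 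Once $\BackoffHelper{\Prof}{\ell}$ is known to be Lipschitz on the entire cell, the outer $\max$ is handled by the purely symmetric \cref{fact:lipMaxMin}. Restricting to the region where $\BackoffHelper{\Prof}{\ell}\geq 8\Cost{\Prof}/n$ does not by itself control $\CostDensity{\Prof}{\ell}/\Gap{\Prof}$, so it does not bound $|\CostDensityPrime{\Prof}{\ell}|$; and, more basically, \cref{lem:lip} requires its function $v$ to be differentiable on the cell interior, whereas $\BackoffHelper{\Prof}{\ell}$, being itself a $\min$ of two functions, need not be. So if you pushed your per-cell sketch to completion along the route you describe, it would stall; the inner-$\min$ application of \cref{lem:lip} is the piece you are missing.
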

\begin{proof}
Lemma~\ref{lem:lipMainCell} implies the existence of a constant
$\kappa$ such that $\Fac{\Prof}{\ell}$ and $\Backoff{\Prof}{\ell}$ are
$\kappa$-Lipschitz over all cells in $\Cells{\Knots_6}$. Since the
union of the cells in $\Cells{\Knots_6}$ is $\IntervalA^*$ and each
pair of successive cells $\Cells{\Knots_6}$ overlap at a knot in
$\Knots_6$, it is straightforward to prove that $\Fac{\Prof}{\ell}$
and $\Backoff{\Prof}{\ell}$ are each $\kappa$-Lipschitz over
$\IntervalA^*$ using \cref{fact:lipCover}.
\end{proof}

\subsection{Lipschitz Bounds for a Single Cell}
\label{sec:cell}

Let $\Cell$ be a cell in $\Cells{\Knots_6}$. Since $P_2(\Knots_6)$
holds, $\Index{\Prof}$ is the same for all $\Prof_k$ in
$\Interior{\Cell}$ and the number of agents served by the left (resp.,
right) facility under $\Canonical{\Prof}$ is the same for all
$\Prof_k$ in $\Interior{\Cell}$.  Throughout the remainder
of \cref{sec:cell}, let $s_1$ (resp., $s_2$) denote the number of
agents served by the left (resp., right) facility under
$\Canonical{\Prof}$.

It is easy to see that $\MeanPrime{\Prof}=1/n$ for all $\Prof_k$ in
$\Cell$ and that $\Mean{\Prof}$ is $\frac{1}{n}$-Lipschitz over
$\Cell$.  Since $P_1(\Knots_6)$ holds, we have
$|\CostPrime{\Prof}|\leq 1$ for all $\Prof_k$ in $\Interior{\Cell}$.


Using Lemmas~\ref{lem:nest} and~\ref{lem:helperMono} and the fact that
$P_4(\Knots_6)$ holds, we find that
$|\HelperPrime{\Prof}{1}|=O(1/s_1)$ for all $\Prof_k$ in
$\Interior{\Cell}$. An analogous argument establishes that
$|\HelperPrime{\Prof}{2}|=O(1/s_2)$ for all $\Prof_k$ in
$\Interior{\Cell}$.  It follows from Fact~\ref{fact:lipDiff} that
$\Helper{\Prof}{\ell}$ is $O(1/s_{\ell})$-Lipschitz over $\Cell$ for
all $\ell$ in $\{1,2\}$.

\begin{lemma}
\label{lem:lipZeroGap}
Let $\ell$ belong to $\{1,2\}$.  If $\Cell$ is zero-gap, then
$\Fac{\Prof}{\ell}$ and $\Backoff{\Prof}{\ell}$ are each
$O(1/n)$-Lipschitz over $\Cell$.
\end{lemma}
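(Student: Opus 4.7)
The plan is to observe that on a zero-gap cell both $\Fac{\Prof}{\ell}$ and $\Backoff{\Prof}{\ell}$ collapse to simple expressions whose Lipschitz continuity follows essentially from facts already established earlier in \cref{sec:cell}. First, I would argue that $\Gap{\Prof}$ is a continuous function of $\Prof_k$ on all of $\Cell$, since $\Fac{\Prof}{1}$ and $\Fac{\Prof}{2}$ are defined as a min/max of $\Mean{\Prof}$ together with the helper functions $\Helper{\Prof}{1}$ and $\Helper{\Prof}{2}$, each of which is continuous in $\Prof_k$. Because $\Cell$ is zero-gap, $\Gap{\Prof}=0$ holds on $\Interior{\Cell}$, and continuity promotes this equality to all of $\Cell$.

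Next, I would turn the identity $\Gap{\Prof}=0$ into the pointwise equality $\Fac{\Prof}{1}=\Fac{\Prof}{2}=\Mean{\Prof}$ on $\Cell$. The definitions immediately give $\Fac{\Prof}{1}\le\Mean{\Prof}\le\Fac{\Prof}{2}$, so the two facility $x$-coordinates coinciding forces both to equal $\Mean{\Prof}$. The paragraph preceding this lemma in \cref{sec:cell} already shows that $\Mean{\Prof}$ is $\frac{1}{n}$-Lipschitz on $\Cell$, so the required $O(1/n)$ bound for $\Fac{\Prof}{\ell}$ follows immediately.

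For $\Backoff{\Prof}{\ell}$, the same continuity extension gives $\Gap{\Prof}=0$ throughout $\Cell$, which by the definition of $\BackoffHelper{\Prof}{\ell}$ forces $\BackoffHelper{\Prof}{\ell}=0$ everywhere on $\Cell$; hence $\Backoff{\Prof}{\ell}=8\Cost{\Prof}/n$ on $\Cell$. Since Lemma~\ref{lem:lipCost} implies $\Cost{\Prof}$ is $1$-Lipschitz as $\Prof_k$ varies, $\Backoff{\Prof}{\ell}$ is $\frac{8}{n}$-Lipschitz over $\Cell$, which is $O(1/n)$.

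The only mildly subtle step is the continuity argument promoting $\Gap{\Prof}=0$ from $\Interior{\Cell}$ to the closure $\Cell$; once that is settled, the lemma reduces to reading off the definitions of $\Fac{\Prof}{\ell}$ and $\Backoff{\Prof}{\ell}$ together with the already-known Lipschitz bounds for $\Mean{\Prof}$ and $\Cost{\Prof}$.
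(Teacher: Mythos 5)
Your proposal is correct and follows essentially the same route as the paper: on a zero-gap cell, $\Fac{\Prof}{\ell}$ collapses to $\Mean{\Prof}$ (which is $\frac{1}{n}$-Lipschitz) and $\Backoff{\Prof}{\ell}$ collapses to $8\Cost{\Prof}/n$ (which is $O(1/n)$-Lipschitz by Lemma~\ref{lem:lipCost}). The only cosmetic difference is that you make the continuity argument for extending $\Gap{\Prof}=0$ from $\Interior{\Cell}$ to $\Cell$ explicit, whereas the paper leaves it implicit via the hypotheses of Fact~\ref{fact:lipDiff}.
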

\begin{proof}
Since $\Cell$ is zero-gap, we have $\Fac{\Prof}{\ell}=\Mean{\Prof}$
over $\Cell$. Since $\MeanPrime{\Prof}=1/n$, we conclude from
Fact~\ref{fact:lipDiff} that $\Fac{\Prof}{\ell}$ is
$\frac{1}{n}$-Lipschitz over $\Cell$.

Since $\Cell$ is zero-gap, we have $\BackoffHelper{\Prof}{\ell}=0$ and
$\Backoff{\Prof}{\ell}=8\Cost{\Prof}/n$ for all $\Prof_k$ in $\Cell$.
It follows from Lemma~\ref{lem:lipCost} that $\Backoff{\Prof}{\ell}$
is $O(1/n)$-Lipschitz over $\Cell$.
\end{proof}

\begin{lemma}
\label{lem:lipPosGapFac}
Let $\ell$ belong to $\{1,2\}$ and assume that $\Cell$ is
positive-gap. Then $|\FacPrime{\Prof}{\ell}|=O(1/s_{\ell})$ for all
$\Prof_k$ in $\Interior{\Cell}$ and $\Fac{\Prof}{\ell}$ is
$O(1/s_{\ell})$-Lipschitz over $\Cell$.
\end{lemma}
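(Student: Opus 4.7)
The plan is to reduce this lemma to two bounds on $\Helper{\Prof}{\ell}$ that have already been derived earlier in this section, namely $|\HelperPrime{\Prof}{\ell}| = O(1/s_\ell)$ pointwise on $\Interior{\Cell}$ and the $O(1/s_\ell)$-Lipschitz property of $\Helper{\Prof}{\ell}$ on the closed cell $\Cell$. The only real work is to show that, on a positive-gap cell, $\Fac{\Prof}{\ell}$ coincides with $\Helper{\Prof}{\ell}$; once that identity is in hand, both assertions of the lemma are immediate.

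To establish the coincidence, I would unfold the definitions $\Fac{\Prof}{1} = \min(\Helper{\Prof}{1}, \Mean{\Prof})$ and $\Fac{\Prof}{2} = \max(\Helper{\Prof}{2}, \Mean{\Prof})$ and combine them with the observation (noted just before the definition of $\MechLip$) that $\Fac{\Prof}{1} = \Mean{\Prof}$ if and only if $\Fac{\Prof}{2} = \Mean{\Prof}$. The sandwich $\Fac{\Prof}{1} \le \Mean{\Prof} \le \Fac{\Prof}{2}$, the positive-gap hypothesis $\Fac{\Prof}{1} < \Fac{\Prof}{2}$ on $\Interior{\Cell}$, and this biconditional together rule out the collapse of either side to $\Mean{\Prof}$, forcing $\Helper{\Prof}{1} < \Mean{\Prof}$ and $\Helper{\Prof}{2} > \Mean{\Prof}$ throughout $\Interior{\Cell}$. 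The identity $\Fac{\Prof}{\ell} = \Helper{\Prof}{\ell}$ on $\Interior{\Cell}$ follows at once.

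With the identity in hand, the pointwise bound $|\FacPrime{\Prof}{\ell}| = O(1/s_\ell)$ on $\Interior{\Cell}$ is direct. For Lipschitz continuity over the closed cell, I would extend the identity to the boundary by continuity of $\Helper{\Prof}{\ell}$, $\Mean{\Prof}$, and the min/max operations (the strict inequalities become weak in the limit, and at a boundary point where $\Helper{\Prof}{1} = \Mean{\Prof}$ the min still equals $\Helper{\Prof}{1}$, and symmetrically for $\ell = 2$). The desired Lipschitz bound on $\Cell$ then transfers directly from the previously-established $O(1/s_\ell)$-Lipschitz bound on $\Helper{\Prof}{\ell}$. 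The only mild obstacle is the first paragraph of argument---converting the positive-gap hypothesis into strict separation between $\Helper{\Prof}{\ell}$ and $\Mean{\Prof}$---and no machinery beyond what is already set up in this section is required.
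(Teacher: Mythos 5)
Your argument is correct and follows essentially the same route as the paper: the paper's proof is a single sentence asserting the identity $\Fac{\Prof}{\ell}=\Helper{\Prof}{\ell}$ over $\Interior{\Cell}$, after which both claims follow from the bounds on $\Helper{\Prof}{\ell}$ established just before \cref{lem:lipZeroGap}. You have simply supplied the omitted justification for that identity, using the biconditional ``$\Fac{\Prof}{1}=\Mean{\Prof}$ iff $\Fac{\Prof}{2}=\Mean{\Prof}$'' together with a continuity argument at the cell boundary.
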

\begin{proof}
Since $\Cell$ is positive-gap,
$\Fac{\Prof}{\ell}=\Helper{\Prof}{\ell}$ over $\Interior{\Cell}$.
\end{proof}

\begin{lemma}
\label{lem:lipPosGapGap}
Let $\ell$ belong to $\{1,2\}$ and assume that $\Cell$ is
positive-gap. Then $|\GapPrime{\Prof}|=O(1/\min(s_1,s_2))$ for all
$\Prof_k$ in $\Interior{\Cell}$.
\end{lemma}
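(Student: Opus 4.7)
The plan is to reduce this derivative bound on $\Gap{\Prof}$ to the derivative bounds already obtained for $\Fac{\Prof}{1}$ and $\Fac{\Prof}{2}$ in \cref{lem:lipPosGapFac}. Recall that $\Gap{\Prof}$ is defined as $\Fac{\Prof}{2}-\Fac{\Prof}{1}$, so over $\Interior{\Cell}$ we simply have
\[
\GapPrime{\Prof}=\FacPrime{\Prof}{2}-\FacPrime{\Prof}{1}.
\]
Since $\Cell$ is positive-gap, \cref{lem:lipPosGapFac} applies to both $\ell=1$ and $\ell=2$, giving $|\FacPrime{\Prof}{1}|=O(1/s_1)$ and $|\FacPrime{\Prof}{2}|=O(1/s_2)$ for all $\Prof_k$ in $\Interior{\Cell}$.

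The next step is to combine these bounds via the triangle inequality:
\[
|\GapPrime{\Prof}|\leq|\FacPrime{\Prof}{1}|+|\FacPrime{\Prof}{2}|=O(1/s_1)+O(1/s_2).
\]
Since $\frac{1}{s_1}+\frac{1}{s_2}\leq\frac{2}{\min(s_1,s_2)}$, the right-hand side is $O(1/\min(s_1,s_2))$, which yields the desired bound.

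There is essentially no obstacle here; the lemma is a direct consequence of \cref{lem:lipPosGapFac} together with the definition of $\Gap{\Prof}$. The hypothesis that $\Cell$ is positive-gap is precisely what is needed to invoke \cref{lem:lipPosGapFac} for both values of $\ell$, and the $\ell$ appearing in the statement is vestigial since the conclusion does not depend on it.
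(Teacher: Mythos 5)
Your proof is correct and takes the same route as the paper, which simply says ``Immediate from Lemma~\ref{lem:lipPosGapFac}''; you have just spelled out the triangle-inequality step and the $1/s_1 + 1/s_2 \le 2/\min(s_1,s_2)$ estimate that the paper leaves implicit. Your observation that the quantifier over $\ell$ is vestigial is also accurate.
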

\begin{proof}
Immediate from Lemma~\ref{lem:lipPosGapFac}.
\end{proof}

\begin{lemma}
\label{lem:lipPosGapBigBackoff}
Let $\ell$ belong to $\{1,2\}$. Assume that $\Cell$ is positive-gap
and that $s_{\ell}\geq n/2$. Then $\Backoff{\Prof}{\ell}$ is
$O(1/n)$-Lipschitz over $\Cell$.
\end{lemma}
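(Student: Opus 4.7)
The plan is to show that under the stated hypotheses, $\Backoff{\Prof}{\ell}$ actually coincides with the simple expression $8\Cost{\Prof}/n$ throughout $\Cell$, at which point the claimed Lipschitz bound follows immediately from Lemma~\ref{lem:lipCost}.

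First, I would fix $\Prof_k$ in $\Interior{\Cell}$ and show that the ``small'' branch of the backoff cannot exceed the ``large'' branch. Since $\Cell$ is positive-gap, Lemma~\ref{lem:weightLowerBound} applies and gives $\Weight{\Prof}{\ell}\geq |\Served{\Prof}{\ell}|/2 = s_\ell/2\geq n/4$, using that $P_2(\Knots_6)$ holds so $|\Served{\Prof}{\ell}|$ is constantly $s_\ell$ on $\Interior{\Cell}$. Hence
\[
2\CostDensity{\Prof}{\ell}
=\frac{2\Cost{\Prof}}{\Weight{\Prof}{\ell}}
\leq\frac{8\Cost{\Prof}}{n},
\]
which yields $\BackoffHelper{\Prof}{\ell}\leq 2\CostDensity{\Prof}{\ell}\leq 8\Cost{\Prof}/n$, and therefore $\Backoff{\Prof}{\ell}=\max(8\Cost{\Prof}/n,\BackoffHelper{\Prof}{\ell})=8\Cost{\Prof}/n$ on $\Interior{\Cell}$.

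Next, I would promote this identity from $\Interior{\Cell}$ to all of $\Cell$: the function $\Cost{\Prof}$ is continuous (indeed $1$-Lipschitz by Lemma~\ref{lem:lipCost}), and $\Backoff{\Prof}{\ell}$ is a max of two continuous functions of $\Prof_k$ (the functions $\Gap{\Prof}$, $\Weight{\Prof}{\ell}$, and hence $\BackoffHelper{\Prof}{\ell}$, are continuous, with $\BackoffHelper{\Prof}{\ell}\to 0$ as $\Gap{\Prof}\to 0$). Taking limits of the interior identity as $\Prof_k$ approaches either endpoint of $\Cell$ gives $\Backoff{\Prof}{\ell}=8\Cost{\Prof}/n$ throughout $\Cell$.

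Finally, since $\Cost{\Prof}$ is $1$-Lipschitz over $\Cell$ by Lemma~\ref{lem:lipCost}, the function $8\Cost{\Prof}/n$ is $(8/n)$-Lipschitz over $\Cell$, which is $O(1/n)$-Lipschitz as required. I do not expect any real obstacle here; the only mildly technical point is extending the interior identity to the closed cell by continuity, and even that could be avoided by observing that $\Backoff{\Prof}{\ell}$ is differentiable on $\Interior{\Cell}$ with derivative $8\CostPrime{\Prof}/n$ bounded in absolute value by $8/n$ (since $|\CostPrime{\Prof}|\leq 1$ by $P_1(\Knots_6)$) and appealing to Fact~\ref{fact:lipDiff}.
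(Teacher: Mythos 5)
Your proposal is correct and takes essentially the same approach as the paper: use Lemma~\ref{lem:weightLowerBound} (plus $s_\ell\geq n/2$) to get $\Weight{\Prof}{\ell}\geq n/4$, conclude $\Backoff{\Prof}{\ell}=8\Cost{\Prof}/n$ on the cell, and then invoke Lemma~\ref{lem:lipCost}. The only difference is that you explicitly address the interior-to-closure passage, a small technical point the paper leaves implicit.
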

\begin{proof}
Since $s_{\ell}\geq n/2$, Lemma~\ref{lem:weightLowerBound} implies
$\Weight{\Prof}{\ell}\geq n/4$ for all $\Prof_k$ in
$\Interior{\Cell}$. Thus $\BackoffHelper{\Prof}{\ell}\leq
8\Cost{\Prof}/n$ and hence $\Backoff{\Prof}{\ell}=8\Cost{\Prof}/n$ for
all $\Prof_k$ in $\Interior{\Cell}$.  It follows from Lemma~\ref{lem:lipCost}
that $\Backoff{\Prof}{\ell}$ is $O(1/n)$-Lipschitz over $\Cell$.
\end{proof}

Before stating the next lemma, we introduce a couple of useful
definitions.  Since $P_4(\Knots_6)$ holds, there is a subset $A$ of
$[n]$ such that $\{i\in[n]\mid\Prof_i<\Fac{\Prof}{1}\}=A$ for all
$\Prof_k$ in $\Interior{\Cell}$.  Likewise, since $P_4^*(\Knots_6)$
holds, there is a subset $B$ of $[n]$ such that
$\{i\in[n]\mid\Prof_i>\Fac{\Prof}{2}\}=B$ for all $\Prof_k$ in
$\Interior{\Cell}$. It follows that for all $i$ in $[n]\setminus
(A\cup B)$, we have $\Fac{\Prof}{1}\leq\Prof_i\leq\Fac{\Prof}{2}$ for
all $\Prof_k$ in $\Interior{\Cell}$.

\begin{lemma}
\label{lem:lipPosGapFrac}
Let $i$ belong to $[n]$ and let $\ell$ belong to $\{1,2\}$. Assume
that $\Cell$ is positive-gap and $s_{\ell}\leq n/2$.  If $i$ belongs
to $A\cup B$, then $\FracPrime{\Prof}{\ell}{i}=0$ for all $\Prof_k$ in
$\Interior{\Cell}$.  Otherwise,
\[
|\FracPrime{\Prof}{\ell}{i}| \leq
\frac{\Indicator{i=k}+O(1/n)+\Frac{\Prof}{\ell}{i}\cdot
O(1/s_{\ell})}{\Gap{\Prof}}
\]
for all $\Prof_k$ in $\Interior{\Cell}$.
\end{lemma}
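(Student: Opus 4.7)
The plan is to directly compute $\FracPrime{\Prof}{\ell}{i}$ from the definition of $\Frac{\Prof}{\ell}{i}$ on each of the three pieces of the partition $A$, $B$, $[n]\setminus(A\cup B)$, and then apply the already-established bounds on $\FacPrime{\Prof}{\ell}$ and $\GapPrime{\Prof}$ to control the derivative in the nontrivial case.

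First I would dispose of the cases $i\in A$ and $i\in B$. Since $P_4(\Knots_6)\wedge P_4^*(\Knots_6)$ holds, the sets $A$ and $B$ are fixed throughout $\Interior{\Cell}$. For $i\in A$, we have $\Prof_i<\Fac{\Prof}{1}\leq\Fac{\Prof}{2}$, so $\min(\Fac{\Prof}{2},\Prof_i)-\Fac{\Prof}{1}<0$ and $\Frac{\Prof}{2}{i}=0$, hence $\Frac{\Prof}{1}{i}=1$. For $i\in B$, $\min(\Fac{\Prof}{2},\Prof_i)=\Fac{\Prof}{2}$, so $\Frac{\Prof}{2}{i}=1$ and $\Frac{\Prof}{1}{i}=0$. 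In both cases $\Frac{\Prof}{\ell}{i}$ is constant on $\Interior{\Cell}$, so $\FracPrime{\Prof}{\ell}{i}=0$.

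Next, suppose $i\notin A\cup B$, i.e., $\Fac{\Prof}{1}\leq\Prof_i\leq\Fac{\Prof}{2}$. Then the definition collapses to
\[
\Frac{\Prof}{2}{i}=\frac{\Prof_i-\Fac{\Prof}{1}}{\Gap{\Prof}},\qquad
\Frac{\Prof}{1}{i}=\frac{\Fac{\Prof}{2}-\Prof_i}{\Gap{\Prof}}.
\]
Noting that $\Prof_i$ depends on $\Prof_k$ only when $i=k$ (with derivative $\Indicator{i=k}$), the quotient rule gives
\[
\FracPrime{\Prof}{2}{i}=\frac{\Indicator{i=k}-\FacPrime{\Prof}{1}}{\Gap{\Prof}}
-\Frac{\Prof}{2}{i}\cdot\frac{\GapPrime{\Prof}}{\Gap{\Prof}},
\]
and the analogous formula with $\FacPrime{\Prof}{2}$ and $\Frac{\Prof}{1}{i}$ holds for $\ell=1$. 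The triangle inequality then yields
\[
|\FracPrime{\Prof}{\ell}{i}|
\leq
\frac{\Indicator{i=k}+|\FacPrime{\Prof}{3-\ell}|+\Frac{\Prof}{\ell}{i}\cdot|\GapPrime{\Prof}|}{\Gap{\Prof}}.
\]

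Finally I would absorb the two derivative bounds. Since $s_{\ell}\leq n/2$, we have $s_{3-\ell}\geq n/2$, so Lemma~\ref{lem:lipPosGapFac} (applied with index $3-\ell$) gives $|\FacPrime{\Prof}{3-\ell}|=O(1/s_{3-\ell})=O(1/n)$, while Lemma~\ref{lem:lipPosGapGap} gives $|\GapPrime{\Prof}|=O(1/\min(s_1,s_2))=O(1/s_{\ell})$. Substituting these estimates into the displayed inequality yields exactly the claimed bound. No obstacle here is substantial; the only subtlety is keeping track of which of $s_1,s_2$ is the "small" one, and the assumption $s_\ell\leq n/2$ is precisely what aligns $O(1/s_{3-\ell})$ with $O(1/n)$ and $O(1/\min(s_1,s_2))$ with $O(1/s_\ell)$.
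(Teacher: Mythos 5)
Your proof is correct and takes essentially the same approach as the paper: dispose of the $A\cup B$ case by observing $\Frac{\Prof}{\ell}{i}$ is locally constant, then for $i\notin A\cup B$ apply the quotient rule to $\Frac{\Prof}{\ell}{i}$ and bound $|\FacPrime{\Prof}{3-\ell}|$ by $O(1/n)$ via Lemma~\ref{lem:lipPosGapFac} (using $s_{3-\ell}\geq n/2$) and $|\GapPrime{\Prof}|$ by $O(1/s_\ell)$ via Lemma~\ref{lem:lipPosGapGap}. The paper treats only $\ell=2$ and invokes symmetry; your version spells out the quotient-rule formula once for both $\ell$, which is a cosmetic rather than substantive difference.
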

\begin{proof}
If $i$ belongs to $A\cup B$, it is straightforward to verify that
$\FracPrime{\Prof}{\ell}{i}=0$ for all $\Prof_k$ in
$\Interior{\Cell}$. For the rest of the proof, assume that $i$ belongs
to $[n]\setminus (A\cup B)$. As discussed above, this means that
$\Fac{\Prof}{1}\leq\Prof_i\leq\Fac{\Prof}{2}$ for all $\Prof_k$ in
$\Interior{\Cell}$.

We address the case $\ell=2$; the case $\ell=1$ is symmetric. Since $\ell=2$, we have $s_1\geq n/2$. Hence
Lemma~\ref{lem:lipPosGapFac} implies
$|\FacPrime{\Prof}{1}|=O(1/n)$. Using Lemma~\ref{lem:lipPosGapGap}, we
find that
\begin{eqnarray*}
|\FracPrime{\Prof}{2}{i}|
 & \leq & 
\frac{\Indicator{i=k}+O(1/n)}{\Gap{\Prof}}
+\frac{\Prof_i-\Fac{\Prof}{1}}{\Gap{\Prof}}\cdot\frac{O(1/s_2)}{\Gap{\Prof}}\\
 & \leq &
\frac{\Indicator{i=k}+O(1/n)+\Frac{\Prof}{2}{i}\cdot O(1/s_2)}{\Gap{\Prof}}
\end{eqnarray*}
for all $\Prof_k$ in $\Interior{\Cell}$, as required.
\end{proof}

\begin{lemma}
\label{lem:lipPosGapSmallWeight}
Let $\ell$ belong to $\{1,2\}$. Assume that $\Cell$ is positive-gap
and that $s_{\ell}\leq n/2$.  Then $|\WeightPrime{\Prof}{\ell}|$ is
$O(\frac{\Weight{\Prof}{\ell}}{s_{\ell}\Gap{\Prof}})$
for all $\Prof_k$ in $\Interior{\Cell}$.
\end{lemma}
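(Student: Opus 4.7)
The plan is to expand $\WeightPrime{\Prof}{\ell}$ using linearity of differentiation and then apply the per-index bound from Lemma~\ref{lem:lipPosGapFrac} term by term. Since $\Weight{\Prof}{\ell}=\sum_{i\in[n]}\Frac{\Prof}{\ell}{i}$, we have $\WeightPrime{\Prof}{\ell}=\sum_{i\in[n]}\FracPrime{\Prof}{\ell}{i}$ on $\Interior{\Cell}$. By Lemma~\ref{lem:lipPosGapFrac}, indices $i\in A\cup B$ contribute nothing, so only the indices in $[n]\setminus(A\cup B)$ (of which there are at most $n$) need to be accounted for.

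Next, I would apply the triangle inequality and Lemma~\ref{lem:lipPosGapFrac} to obtain
\[
|\WeightPrime{\Prof}{\ell}|
\leq
\frac{1}{\Gap{\Prof}}
+\sum_{i\in [n]\setminus(A\cup B)}\frac{O(1/n)}{\Gap{\Prof}}
+\sum_{i\in [n]\setminus(A\cup B)}\frac{\Frac{\Prof}{\ell}{i}\cdot O(1/s_{\ell})}{\Gap{\Prof}}.
\]
The indicator term contributes $1/\Gap{\Prof}$ since at most one $i$ equals $k$. The middle sum has at most $n$ nonzero terms, each $O(1/(n\Gap{\Prof}))$, for a total of $O(1/\Gap{\Prof})$. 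The final sum is at most $\Weight{\Prof}{\ell}\cdot O(1/s_{\ell})/\Gap{\Prof}$, since the $\Frac{\Prof}{\ell}{i}$ over $[n]\setminus(A\cup B)$ summed is at most $\Weight{\Prof}{\ell}$.

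To finish, I need to absorb the $O(1/\Gap{\Prof})$ contribution into the desired target $O(\Weight{\Prof}{\ell}/(s_{\ell}\Gap{\Prof}))$. This is where I would invoke Lemma~\ref{lem:weightLowerBound}, which gives $\Weight{\Prof}{\ell}\geq s_{\ell}/2$, so $1/\Gap{\Prof}=O(\Weight{\Prof}{\ell}/(s_{\ell}\Gap{\Prof}))$. Combining the three contributions then yields the claimed bound.

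The main obstacle I anticipate is precisely this last reabsorption step: Lemma~\ref{lem:lipPosGapFrac} produces terms of three different ``shapes'' (an indicator, an $O(1/n)$ slack, and a weighted term), and one has to check that in the regime $s_{\ell}\leq n/2$ the non-weighted pieces are indeed dominated by the weighted target. The hypothesis $s_{\ell}\leq n/2$ implicitly enters through Lemma~\ref{lem:lipPosGapFrac} itself (via $s_{3-\ell}\geq n/2$, which is needed there to bound $\FacPrime{\Prof}{3-\ell}$ by $O(1/n)$); beyond that, the remaining arithmetic is routine and relies only on the uniform lower bound $\Weight{\Prof}{\ell}\geq s_{\ell}/2$.
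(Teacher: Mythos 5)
Your proposal is correct and follows essentially the same route as the paper: sum the per-index bound from Lemma~\ref{lem:lipPosGapFrac}, obtain $O(1/\Gap{\Prof})+O\bigl(\Weight{\Prof}{\ell}/(s_{\ell}\Gap{\Prof})\bigr)$, and then absorb the first term into the second via the lower bound $\Weight{\Prof}{\ell}\geq s_{\ell}/2$ from Lemma~\ref{lem:weightLowerBound}. The only cosmetic difference is that you track the indicator, the $O(1/n)$ slack, and the weighted contribution separately, whereas the paper collapses the first two into a single $O(1/\Gap{\Prof})$ immediately.
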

\begin{proof}
Summing the bound of Lemma~\ref{lem:lipPosGapFrac} over all
$i$ in $[n]$, we obtain
\begin{eqnarray*}
|\WeightPrime{\Prof}{\ell}|
& = &
O(1/\Gap{\Prof})+O\left(\frac{1}{s_{\ell}\Gap{\Prof}}\right)\cdot
\sum_{i\in [n]}\Frac{\Prof}{\ell}{i}\\
& = & 
O(1/\Gap{\Prof})+O\left(\frac{\Weight{\Prof}{\ell}}{s_{\ell}\Gap{\Prof}}\right)\\
& = & 
O\left(\frac{\Weight{\Prof}{\ell}}{s_{\ell}\Gap{\Prof}}\right)
\end{eqnarray*}
for all $\Prof_k$ in $\Interior{\Cell}$, where the last equation
follows from Lemma~\ref{lem:weightLowerBound}.
\end{proof}

\begin{lemma}
\label{lem:lipPosGapSmallCostDensity}
Let $\ell$ belong to $\{1,2\}$. Assume that $\Cell$ is positive-gap and that
$s_{\ell}\leq n/2$.  Then $|\CostDensityPrime{\Prof}{\ell}|$ is
$O(1+\CostDensity{\Prof}{\ell}/\Gap{\Prof})/s_{\ell}$
for all $\Prof_k$ in $\Interior{\Cell}$.
\end{lemma}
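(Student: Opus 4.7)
The plan is to apply the quotient rule to the definition $\CostDensity{\Prof}{\ell}=\Cost{\Prof}/\Weight{\Prof}{\ell}$ and then feed in the derivative bounds already established in \cref{sec:cell}. Specifically, I would write
\[
\CostDensityPrime{\Prof}{\ell}
=\frac{\CostPrime{\Prof}\,\Weight{\Prof}{\ell}
-\Cost{\Prof}\,\WeightPrime{\Prof}{\ell}}{\Weight{\Prof}{\ell}^2},
\]
which after applying the triangle inequality gives
\[
|\CostDensityPrime{\Prof}{\ell}|
\leq \frac{|\CostPrime{\Prof}|}{\Weight{\Prof}{\ell}}
+\frac{\Cost{\Prof}\,|\WeightPrime{\Prof}{\ell}|}{\Weight{\Prof}{\ell}^2}.
\]

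For the first term, I would use $|\CostPrime{\Prof}|\leq 1$ (from $P_1(\Knots_6)$) together with the lower bound $\Weight{\Prof}{\ell}\geq s_{\ell}/2$ from Lemma~\ref{lem:weightLowerBound}, concluding that this term is $O(1/s_{\ell})$. For the second term, I would substitute the bound $|\WeightPrime{\Prof}{\ell}|=O(\Weight{\Prof}{\ell}/(s_{\ell}\Gap{\Prof}))$ from Lemma~\ref{lem:lipPosGapSmallWeight}, which produces a $\Weight{\Prof}{\ell}$ in the numerator that cancels one factor of $\Weight{\Prof}{\ell}$ in the denominator, leaving $O(\Cost{\Prof}/(s_{\ell}\Gap{\Prof}\Weight{\Prof}{\ell}))=O(\CostDensity{\Prof}{\ell}/(s_{\ell}\Gap{\Prof}))$ by the definition of $\CostDensity{\Prof}{\ell}$.

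Summing the two contributions yields the claimed bound $O((1+\CostDensity{\Prof}{\ell}/\Gap{\Prof})/s_{\ell})$. There is no real obstacle here because the hypotheses of Lemma~\ref{lem:lipPosGapSmallWeight} (positive-gap cell and $s_{\ell}\leq n/2$) coincide exactly with the hypotheses of the present lemma, so that bound is directly applicable; the rest is a mechanical application of the quotient rule and the already-established estimates. The one thing to be careful about is confirming that $\Weight{\Prof}{\ell}>0$ throughout $\Interior{\Cell}$ so that the quotient is differentiable, but this follows immediately from Lemma~\ref{lem:weightLowerBound} and the assumption that $s_{\ell}\geq 1$ (which holds since the left and right facilities each serve at least one agent under $\Canonical{\Prof}$ when $\Cell$ is positive-gap).
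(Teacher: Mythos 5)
Your proof is correct and follows essentially the same route as the paper: quotient rule, triangle inequality, then $|\CostPrime{\Prof}|\leq 1$ together with Lemma~\ref{lem:weightLowerBound} for the first term and Lemma~\ref{lem:lipPosGapSmallWeight} for the second. The only difference is cosmetic --- you make the intermediate cancellation and the differentiability check slightly more explicit, but the decomposition, the lemmas invoked, and the final assembly all match the paper's argument.
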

\begin{proof}
Since $\Cost{\Prof}$ and $\Weight{\Prof}{\ell}$ are each
differentiable for all $\Prof_k$ in $\Interior{\Cell}$,
$\CostDensity{\Prof}{\ell}$ is also differentiable for all $\Prof_k$
in $\Interior{\Cell}$. Moreover, for all $\Prof_k$ in $\Interior{\Cell}$,
we have
\begin{eqnarray*}
|\CostDensityPrime{\Prof}{\ell}|
& \leq &
\frac{|\CostPrime{\Prof}|}{\Weight{\Prof}{\ell}}
+\frac{\Cost{\Prof}\WeightPrime{\Prof}{\ell}}{\Weight{\Prof}{\ell}^2}\\
& \leq &
\frac{1}{\Weight{\Prof}{\ell}}
+O\left(\frac{\CostDensity{\Prof}{\ell}}{s_{\ell}\Gap{\Prof}}\right)\\
& = &
O(1+\CostDensity{\Prof}{\ell}/\Gap{\Prof}/s_{\ell},
\end{eqnarray*}
where the second inequality follows from $|\CostPrime{\Prof}|\leq 1$
and Lemma~\ref{lem:lipPosGapSmallWeight}, and the last step follows from
Lemma~\ref{lem:weightLowerBound}.
\end{proof}

\begin{lemma}
\label{lem:lipPosGapSmallBackoffHelper}
Let $\ell$ belong to $\{1,2\}$. Assume that $\Cell$ is positive-gap
and that $s_{\ell}\leq n/2$.  Then $\BackoffHelper{\Prof}{\ell}$ is
$O(1/s_{\ell})$-Lipschitz over $\Cell$.
\end{lemma}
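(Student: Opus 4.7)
The plan is to express $\BackoffHelper{\Prof}{\ell}$ as a minimum of two functions to which \cref{lem:lip} applies, and then to bound the derivatives of each of these functions. Since $\Cell$ is positive-gap, we have $\Gap{\Prof}>0$ throughout $\Interior{\Cell}$, so by the definition of $\BackoffHelper{\Prof}{\ell}$ we have $\BackoffHelper{\Prof}{\ell}=\min(\Gap{\Prof},2\CostDensity{\Prof}{\ell})$ throughout $\Interior{\Cell}$. I will set $u=\Gap{\Prof}$ and $v=2\CostDensity{\Prof}{\ell}$ and verify continuity on $\Cell$ and differentiability on $\Interior{\Cell}$ (both are immediate from the way $\Knots_6$ was built to ensure the relevant auxiliary functions are smooth on the interior of each cell).

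Next I would bound $|u'|$. The assumption $s_\ell\leq n/2$ and the identity $s_1+s_2=n$ imply $\min(s_1,s_2)=s_\ell$, so \cref{lem:lipPosGapGap} immediately gives $|\GapPrime{\Prof}|=O(1/s_\ell)$ throughout $\Interior{\Cell}$. This yields the desired bound $\kappa_1=O(1/s_\ell)$ for $u$ with no further conditions.

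Now I would bound $|v'|$ under the hypothesis of \cref{lem:lip}, that is, only at points where $u\geq v$. When $\Gap{\Prof}\geq 2\CostDensity{\Prof}{\ell}$, we have $\CostDensity{\Prof}{\ell}/\Gap{\Prof}\leq 1/2$, so \cref{lem:lipPosGapSmallCostDensity} yields
\[
|\CostDensityPrime{\Prof}{\ell}|=O(1+\CostDensity{\Prof}{\ell}/\Gap{\Prof})/s_\ell=O(1/s_\ell),
\]
and hence $|v'|=2|\CostDensityPrime{\Prof}{\ell}|=O(1/s_\ell)$. Applying \cref{lem:lip} with $\kappa_1,\kappa_2=O(1/s_\ell)$ then shows that $\min(u,v)=\BackoffHelper{\Prof}{\ell}$ is $O(1/s_\ell)$-Lipschitz over $\Cell$, which is the claim.

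The main subtlety will be the role of \cref{lem:lip}: it is essential that we only need to control $v'$ on the sub-region $\{u\geq v\}$, because on $\{u<v\}$ the quantity $\CostDensity{\Prof}{\ell}/\Gap{\Prof}$ can be large and the general bound of \cref{lem:lipPosGapSmallCostDensity} would be too weak. Getting the logical flow of this conditional bound right, and matching it to the hypothesis of \cref{lem:lip}, is the only non-routine part of the argument; the rest is a direct composition of the lemmas already proved in this subsection.
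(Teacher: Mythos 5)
Your proposal is correct and follows essentially the same route as the paper's own proof: both invoke \cref{lem:lipPosGapGap} to bound $|\GapPrime{\Prof}|$, invoke \cref{lem:lipPosGapSmallCostDensity} on the sub-region where $\Gap{\Prof}\geq 2\CostDensity{\Prof}{\ell}$, and combine them via \cref{lem:lip} with $u=\Gap{\Prof}$ and $v=2\CostDensity{\Prof}{\ell}$. You simply spell out a couple of details the paper leaves implicit, such as $\min(s_1,s_2)=s_\ell$ and the role of the $\{u\geq v\}$ restriction in taming $\CostDensity{\Prof}{\ell}/\Gap{\Prof}$.
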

\begin{proof}
By Lemma~\ref{lem:lipPosGapGap}, we know that $|\GapPrime{\Prof}|$ is
$O(1/s_{\ell})$ for all $\Prof_k$ in $\Interior{\Cell}$. By
Lemma~\ref{lem:lipPosGapSmallCostDensity}, we know that
$|\CostDensityPrime{\Prof}{\ell}|$ is $O(1/s_{\ell})$ for all
$\Prof_k$ in $\Interior{\Cell}$ such that $\Gap{\Prof}\geq
2\CostDensity{\Prof}{\ell}$.  Applying Lemma~\ref{lem:lip} with
$\Gap{\Prof}$ and $2\CostDensity{\Prof}{\ell}$ playing the roles of
the functions $u$ and $v$, respectively, we find that the claim of the
lemma holds.
\end{proof}

\begin{lemma}
\label{lem:lipPosGapSmallBackoff}
Let $\ell$ belong to $\{1,2\}$. Assume that $\Cell$ is positive-gap
and that $s_{\ell}\leq n/2$.  Then $\Backoff{\Prof}{\ell}$ is
$O(1/s_{\ell})$-Lipschitz over $\Cell$.
\end{lemma}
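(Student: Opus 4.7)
The plan is to simply combine the previously established Lipschitz bounds on the two arguments of the outer $\max$ in the definition of $\Backoff{\Prof}{\ell}$. Recall that $\Backoff{\Prof}{\ell}=\max(8\Cost{\Prof}/n,\BackoffHelper{\Prof}{\ell})$, so it suffices to show that each of these two functions is $O(1/s_{\ell})$-Lipschitz over $\Cell$ and then invoke \cref{fact:lipMaxMin}.

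For the first argument, \cref{lem:lipCost} gives that $\Cost{\Prof}$ is $1$-Lipschitz in $\Prof$, so restricted to variation in the single coordinate $\Prof_k$ over $\Cell$, the function $8\Cost{\Prof}/n$ is $O(1/n)$-Lipschitz, which is certainly $O(1/s_{\ell})$-Lipschitz since $s_{\ell}\leq n$. For the second argument, \cref{lem:lipPosGapSmallBackoffHelper} directly gives that $\BackoffHelper{\Prof}{\ell}$ is $O(1/s_{\ell})$-Lipschitz over $\Cell$, using the hypothesis that $\Cell$ is positive-gap and $s_{\ell}\leq n/2$.

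Applying \cref{fact:lipMaxMin} to these two bounds, we conclude that $\Backoff{\Prof}{\ell}=\max(8\Cost{\Prof}/n,\BackoffHelper{\Prof}{\ell})$ is $O(1/s_{\ell})$-Lipschitz over $\Cell$, as desired. There is no real obstacle here; all the work was already done in \cref{lem:lipPosGapSmallBackoffHelper}, and the present lemma is essentially a packaging step that lifts the bound from $\BackoffHelper{\Prof}{\ell}$ to $\Backoff{\Prof}{\ell}$.
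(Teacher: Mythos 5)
Your proof is correct and follows exactly the same approach as the paper's: decompose $\Backoff{\Prof}{\ell}$ as the $\max$ of $8\Cost{\Prof}/n$ and $\BackoffHelper{\Prof}{\ell}$, bound the first via \cref{lem:lipCost} and the second via \cref{lem:lipPosGapSmallBackoffHelper}, and combine with \cref{fact:lipMaxMin}.
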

\begin{proof}

It follows from Lemma~\ref{lem:lipCost} that $\Cost{\Prof}/n$ is
$O(1/n)$-Lipschitz over $\Cell$. Thus the claim of the lemma follows
from Lemma~\ref{lem:lipPosGapSmallBackoffHelper} and
Fact~\ref{fact:lipMaxMin}.
\end{proof}

\begin{lemma}
\label{lem:lipMainCell}
Let $\ell$ belong to $\{1,2\}$.  Then $\Fac{\Prof}{\ell}$ and
$\Backoff{\Prof}{\ell}$ are each $O(1/s_{\ell})$-Lipschitz over
$\Cell$.
\end{lemma}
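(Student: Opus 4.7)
The plan is to prove Lemma~\ref{lem:lipMainCell} by a short case analysis, with the desired bound in each case being supplied by one of the preceding lemmas in \cref{sec:cell}. The key observation that ties the cases together is that since the number of agents served by each facility satisfies $s_\ell \leq n$, we have $1/n \leq 1/s_\ell$, so any function that is $O(1/n)$-Lipschitz is automatically $O(1/s_\ell)$-Lipschitz. Thus cases in which the tighter $O(1/n)$ bound holds are subsumed by the desired bound.

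First I would handle the zero-gap case. If $\Cell$ is zero-gap, \cref{lem:lipZeroGap} directly gives that both $\Fac{\Prof}{\ell}$ and $\Backoff{\Prof}{\ell}$ are $O(1/n)$-Lipschitz over $\Cell$, which implies the $O(1/s_\ell)$-Lipschitz conclusion by the observation above.

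Next I would handle the positive-gap case. For $\Fac{\Prof}{\ell}$, \cref{lem:lipPosGapFac} immediately yields that $\Fac{\Prof}{\ell}$ is $O(1/s_\ell)$-Lipschitz over $\Cell$. For $\Backoff{\Prof}{\ell}$, I split on the size of $s_\ell$: if $s_\ell \geq n/2$, then \cref{lem:lipPosGapBigBackoff} gives that $\Backoff{\Prof}{\ell}$ is $O(1/n)$-Lipschitz over $\Cell$, which is $O(1/s_\ell)$-Lipschitz; if $s_\ell \leq n/2$, then \cref{lem:lipPosGapSmallBackoff} gives that $\Backoff{\Prof}{\ell}$ is $O(1/s_\ell)$-Lipschitz over $\Cell$ directly. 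Combining these subcases yields the claim for $\Backoff{\Prof}{\ell}$.

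There is essentially no technical obstacle at this final step: the heavy lifting has already been carried out in the sequence of lemmas that establishes the component bounds for $\Cost{\Prof}$, $\Gap{\Prof}$, $\Weight{\Prof}{\ell}$, and $\CostDensity{\Prof}{\ell}$, and the use of \cref{lem:lip} to stitch together the piecewise min/max structure of $\Backoff{\Prof}{\ell}$. The role of \cref{lem:lipMainCell} is purely to package these earlier results into a single uniform bound per cell, so the proof is simply a matter of enumerating the three relevant cases and invoking the corresponding lemma.
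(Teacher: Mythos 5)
Your proposal is correct and follows exactly the same route as the paper, which simply cites Lemmas~\ref{lem:lipZeroGap}, \ref{lem:lipPosGapFac}, \ref{lem:lipPosGapBigBackoff}, and~\ref{lem:lipPosGapSmallBackoff}; you have merely spelled out the zero-gap/positive-gap and $s_\ell\gtrless n/2$ case split and the observation that $O(1/n)$-Lipschitz implies $O(1/s_\ell)$-Lipschitz since $s_\ell\leq n$, which is what the paper's ``Immediate from'' is implicitly asking the reader to do.
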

\begin{proof}
Immediate from Lemmas~\ref{lem:lipZeroGap}, \ref{lem:lipPosGapFac},
\ref{lem:lipPosGapBigBackoff}, and~\ref{lem:lipPosGapSmallBackoff}.
\end{proof}

\end{document}